\newtheorem{definition}{Definition}[section]
\newtheorem{lemma}[definition]{Lemma}
\newtheorem{proposition}[definition]{Proposition}
\newtheorem{theorem}[definition]{Theorem}
\newtheorem{corollary}[definition]{Corollary}
\newcounter{app}
\theoremstyle{definition}
\newtheorem{example}[definition]{Example}
\theoremstyle{remark}
\newtheorem*{remark}{Remark}
\begin{document}

\thispagestyle{empty}
\vspace*{2cm}
\oddsidemargin10mm
\centerline{\Huge On Weighted Random Band-Matrices}
\vskip 0.3cm
\centerline{\Huge with Dependences}
\vskip 5.5cm {\Large \centerline{DISSERTATION}}
\vspace*{1cm}
{\large
\centerline{zur Erlangung des}
\centerline{Doktorgrades der
Naturwissenschaften} \centerline{an der Fakult\"at f\"ur
Mathematik und Informatik} \centerline{der Fern-Universit\"at Hagen}
\vskip 2.5cm \centerline{vorgelegt von}\vspace{0.5cm}

{\Large \centerline{Riccardo R. Catalano}}\vspace{0.5cm}}

\centerline{aus Herne}\vspace{0.5cm}
{\Large\centerline{im Februar 2016}\vspace{0.5cm}}

\newpage\thispagestyle{empty}

\vspace*{7cm} 
\begin{center}
\Large \textsc{meinen eltern gewidmet}
\end{center}

\oddsidemargin 0.7cm
\evensidemargin 0.7cm
\

\newpage\thispagestyle{empty}
\begin{abstract}
In this paper we develop techniques to compute moments of weighted random matrices $M$ the entries of which can be dependent in a certain way. This dependence is controlled via an equivalence-relation on the pairs of the indices of its entries. Every entry $\frac{1}{\sqrt{N}}\cdot X^{(N)}_{ij}$ of an $N\times N-$random matrix is multiplied with a \emph{weight} $\alpha(\frac{|i-j|}{N})$. This weight is assumed to be Riemann-integrable and to be bounded. It tourns out that the moments can be computed as a sum over integrals the kernels of which are closely connected to the weight $\alpha$. In this paper we do not only consider random band-matrices the band-width of which is proportional to its dimension but also those with a slow-growing band-width. Once being able to compute the moments of $M$ we give necessary and sufficient conditions on $\alpha$ for the Semicircle Law to be valid. Finally, we discuss weak convergence in probability of the fundamental random variable $\Delta:= \frac{1}{N}\cdot \sum_{j=1}^N \delta_{\lambda^N_j}$, where $\lambda^N_1\leq ... \leq \lambda^N_N$ denote the eigenvalues of the considered ensemble.

\vskip 1cm

In dieser Arbeit entwickeln wir Techniken zur Berechnung der Momente ge--wichteter Zufallsmatrizen $M$. Die Eintr\"age dieser Zufallsmatrizen k\"onnen Ab--h\"angigkeiten aufweisen. Die Abh\"angigkeiten werden durch eine \"Aquivalenzrelation auf den Indexpaaren der Zufallsvariablen determiniert. Jeder Eintrag $\frac{1}{\sqrt{N}}\cdot X^{(N)}_{ij}$ einer $N\times N-$Zufallsmatrix wird mit einem Riemann-integrierbaren, beschr\"ankten Gewicht $\alpha(\frac{|i-j|}{N})$ multipliziert. Es stellt sich heraus, dass die Momente von $M$ \"uber die Summe gewisser Integrale berechnet werden kann. Die Kerne der Integrale h\"angen eng mit dem Gewicht $\alpha$ zusammen. Wir betrachten in der Arbeit nicht nur zuf\"allige Bandmatrizen deren Bandbreite proportional zur Dimension w\"achst, sondern auch solche, deren Bandbreite ein schwaches Wachstum aufweist. Wir geben eine notwendige und hinreichende Bedingung an das Gewicht $\alpha$, so dass das Halbkreisgesetz (SCL) f\"ur das ensemble $M$ gilt, an. Schlie\ss lich beweisen wir schwache Konvergenz in Wahrscheinlichkeit f\"ur die fundamentale Zufallsvariable $\Delta:= \frac{1}{N}\cdot \sum_{j=1}^N \delta_{\lambda^N_j}$. Hier bezeichnen $\lambda^N_1\leq ... \leq \lambda^N_N$ die Eigenwerte von $M$.

\end{abstract}

\newpage\thispagestyle{empty}

\tableofcontents\thispagestyle{empty}
\newpage
\thispagestyle{plain}

\section*{Introduction\markboth{Introduction}{Introduction}}
\addcontentsline{toc}{section}{Introduction}

Random matrices first appeared about 90 years ago when mathematicians began to explore questions originating in statistics. About 20 years later, the most important impulse came from a physicist, \textsc{E. Wigner}. He empirically showed that resonance-spectra of heavy atoms can be approximated by eigenvalues of random matrices. Furthermore he proved the \emph{Semicircle Law} for special random matrices, see below. Throughout the recent years the \emph{eigenvalue-statistic} of random matrices was found to have a certain universality since there are various applications in mathematics and physics, see \cite{KRI}. 

In this paper we develop techniques to calculate moments of \emph{weighted} random matrices with correlations. Ensembles with correlations were considered among others by \textsc{W. Kirsch, W. Hochst\"attler} and \textsc{S. Warzel} but also by \textsc{J. Schenker} and \textsc{H. Schulz-Baldes}, see \cite{HKW} and \cite{HSBS}. We generalize their results, since Schenker's and Schulz-Baldes' results are a special case of those presented in this work. We will compute moments of the random matrix ensemble via sums of certain integrals the kernels of which are connected to the mentioned weight. First consider a probability space $(\Omega, \Sigma, \mathbb{P})$. For every $N\in\mathbb{N}$ we have real-valued random variables $X^{(N)}_{ij}\equiv X^{(N)}_{ji}$, $1\leq i,j \leq N$, with expectation zero and unit variance (finite variance is also sufficient) on $(\Omega, \Sigma, \mathbb{P})$. These random variables \emph{do not} have to be necessarily independent. The dependence of these random variables will be controlled by an equivalence relation on the pairs of their indices. Let $\sim$ be an equivalence relation on $\{ 1,...,N \}^2$. Whenever $(p,q)\not\sim (r,s)$ the random variables $X^{(N)}_{pq}$ and $X^{(N)}_{rs}$ are assumed to be independent. Of course, there will be restrictions on the equivalence relation, see conditions (\ref{AequivalenzrelationBedingung1})-(\ref{AequivalenzrelationBedingung3}), which state

\begin{eqnarray}\nonumber
\max_{p} &\#& \{ (q,r,s)\in\{ 1,...,N\}^3 | (p,q)\sim (r,s) \} = o(N^2)
\\ \nonumber \max_{p,q,r} &\#& \{ s\in\{ 1,...,N \} | (p,q)\sim(r,s) \} \leq B < +\infty
\\ \nonumber  &\#& \{ (p,q,r)\in\{ 1,...,N \}^3 | (p,q)\sim(q,r) \textnormal{ and } r\neq p \} = o(N^2) \ . 
\end{eqnarray}

\noindent An example for ensembles with dependences is

\begin{eqnarray}\nonumber
M^{(N)}:= \frac{1}{\sqrt{2N}} \cdot \left( \begin{array}{cc}
A & B \\
B^T & \pm A
\end{array} \right) \ , 
\end{eqnarray}

\noindent which will be discussed in Theorem \ref{TheoremBlockMatrizen}. Finally, we want all moments to exist and to be bounded which means

$$ \sup_N \max_{i,j} \mathbb{E}\big(|X^{(N)}_{ij}|^k\big) \leq R_k < +\infty \ \qquad \forall \ k\geq 1 \ . $$

\noindent We then consider the symmetric ensemble

$$ M^{(N)} := \frac{1}{\sqrt{N}} \left( \alpha(\frac{|i-j|}{N}) \cdot X^{(N)}_{ij} \right)_{(1\leq i,j \leq N)} \ . 
 $$

\noindent The weight-function $\alpha:[0;1]\rightarrow\mathbb{R}$ is assumed to be bounded and Riemann-integrable (throughout the whole paper). Let $\lambda^{(N)}_1\leq ... \leq \lambda^{(N)}_N$ denote the (real) eigenvalues of $M^{(N)}$. We are interested in the \emph{convergence-behaviour} of the measure-valued random variable

$$ \Delta^{(N)}:= \frac{1}{N} \cdot \sum_{i=1}^N \delta_{\lambda^{(N)}_i} \ .  $$

\noindent There are certainly different \emph{senses of convergence}. Let $(d\mu_n)_{n\in\mathbb{N}}$ be a sequence of measures on $(\mathbb{R}, \mathcal{B}(\mathbb{R}))$, where $\mathcal{B}(\mathbb{R})$ denotes the \emph{Borel-}sigma-algebra. We say that $(d\mu_n)_{n\in\mathbb{N}}$ converges to a measure $d\mu$ \emph{in the weak sense} (or \emph{in distribution}) if

$$ \int f(t) \ d\mu_n(t) \overset{n\rightarrow\infty}\longrightarrow \int f(t) \ d\mu(t) \ \qquad \ \forall \ f\in\textnormal{C}^0_b(\mathbb{R}) $$

\noindent with

$$ \textnormal{C}^0_b(\mathbb{R})  := \{ f:\mathbb{R}\rightarrow\mathbb{R} \ | f \textnormal{ is continuous and bounded} \} \ .   $$



\noindent We define

$$ \langle f, d\mu \rangle:= \int f(t) \ d\mu(t) \ .  $$

\noindent Suppose that for every $\omega\in\Omega$ we have a family $(d\mu_n^{(\omega)})_{n\in\mathbb{N}}$ of (real valued) measures and a measure $d\mu^{(\omega)}$. This family is said to converge \emph{weakly in probability} to $d\mu^{(\omega)}$ if 


$$ \mathbb{P}\big( |\langle f, d\mu_n \rangle - \langle f, d\mu \rangle| > \varepsilon  \big) \overset{n\rightarrow\infty}\longrightarrow 0 \ \forall \ \varepsilon>0 \ \textnormal{ and } f\in \textnormal{C}^0_b(\mathbb{R}) \ .  $$

\noindent Let $\sigma$ denote the \emph{semicircle density}, i.e.

$$ d\sigma(x) := \frac{1}{4\pi} \sqrt{4-x^2} \cdot\chi_{[-2;2]}(x) \ dx \ . $$

\noindent \textsc{Wigner} showed that 

\begin{eqnarray}\label{IntroHalbkreisgesetzFormulierung}
\mathbb{P}\big( |\langle f, \Delta^{(N)} \rangle - \langle f, d\sigma \rangle| > \varepsilon  \big) \overset{N\rightarrow\infty}\longrightarrow 0 \ \forall \ \varepsilon>0 \textnormal{ and } f\in\textnormal{C}^0_b   \end{eqnarray}

\noindent for the case $\alpha\equiv 1$ and independent, Bernoulli-distributed random variables $X^{(N)}_{ij}\equiv X^{(N)}_{ji}$. Because of the special form of $d\sigma$ one says that the \emph{Semicircular Law} is valid for the ensemble $M^{(N)}$. It was shown later that (\ref{IntroHalbkreisgesetzFormulierung}) also holds for any family of independent random variables with expectation zero and unit variance, see e.g. \cite{ARN}, \cite{PAS1}, \cite{TAO}. In this paper, the moments of the mentioned random variables are required to exist.

We will discuss under which conditions for $\alpha$ the SCL is valid under the above assumptions for the ensemble $M^{(N)}$, see Theorem \ref{TheoremNotwendigeUndHinreichendeBedfuerSCL}. It turns out that certain integrals play a key role for the answer. The weight function $\alpha$ turns out to be closely connected to the integral kernels of the mentioned integrals, see Theorem \ref{TheoremMomentGleichSummeIntegrale}. 

One way to prove the semicircular law (\ref{IntroHalbkreisgesetzFormulierung}) is using the \emph{moment-method}. We will also use this method to show convergence in our case. Let $\mathfrak{M}$ denote the set of all probability measures on $\mathbb{R}$ and consider the subset

$$ \mathfrak{M}^*:= \{ d\mu\in \mathfrak{M} \ | \langle |T|^k , d\mu \rangle < \infty \ \forall \ k\in\mathbb{N} \} \ . $$

\noindent $\mathfrak{M}^*$ is the set of all measures the moments of which exist and are bounded. Consider now the subset $\mathfrak{M}^{**}\subset\mathfrak{M}^{*}$ of all measures which are determined by their moments, i.e.

$$ \mathfrak{M}^{**}:= \{  d\mu\in \mathfrak{M}^* | \ \langle T^k , d\mu \rangle = \langle T^k , d\nu \rangle \ \forall \ k\in\mathbb{N} \Longrightarrow  d\mu=d\nu  \} \ .  $$

\noindent Using \textsc{Levy-Cramer}'s continuity theorem (or even \textsc{Weierstrass}' theorem) one can show that 

$$ \{ d\mu\in\mathfrak{M} \ | \textnormal{ supp}(d\mu) \textnormal{ is compact} \} \subset\mathfrak{M}^{**} \ , $$

\noindent see \cite{KRI}, \cite{KIR2}. Let us say that a measure $d\mu$ has \emph{moderately growing moments} if all moments exist and 

$$ \langle T^k , d\mu \rangle \leq c\cdot C^k\cdot k!  $$

\noindent for some constants $c, C$ and all $k\in\mathbb{N}$. For example, moments with compact support have moderately growing moments. Let $ (d\mu_n)_{n\in\mathbb{N}}$ denote a family of moderately growing moments and take $d\mu\in\mathfrak{M}$. It can be proved that

$$ \langle T^k , d\mu_n \rangle \overset{n\rightarrow\infty}\longrightarrow \langle T^k , d\mu \rangle \ \forall \ k\in\mathbb{N} \qquad \Longrightarrow \qquad d\mu_n\overset{\textnormal{weak}}\longrightarrow d\mu \ \textnormal{ and } d\mu\in\mathfrak{M}^{**} , $$

\noindent see \cite{AGZ}, \cite{KIR2}. Since supp$(d\sigma)=[-2;2]$ this measure is determined by its moments. Let $C_k$ denote the $k-$th \textsc{Catalan}-number, i.e. 

$$ C_k := \frac{1}{k+1} \cdot \left( \begin{array}{c}
2k \\ k
\end{array}\right) \ . $$

\noindent \cite{AGZ} show that for every even $k\in\mathbb{N}$ we have

$$ \langle T^k , d\sigma \rangle = C_{\frac{k}{2}} \ ,  $$

\noindent while the odd moments vanish. To prove the SCL it is then sufficient to compute the moments $\langle T^k , \Delta^{(N)} \rangle$ and discuss convergence. We initially have a look at the mentioned moments. Since $M^{(N)}$ is symmetric, there exists a transformation $S$ with

$$ S^{-1}MS = \textnormal{diag}(\lambda_1^{(N)}, ..., \lambda_N^{(N)}) =: D_\lambda \ .  $$

\noindent An explicit calculation gives

\begin{eqnarray}\nonumber
\langle T^k , \Delta^{(N)} \rangle &=&\int_{-\infty}^\infty t^k \ \Delta^{(N)} 
\\\nonumber &=&  \frac{1}{N} \cdot \sum_{i=1}^N (\lambda_i^{(N)})^k 
\\\nonumber &=& \frac{1}{N} \cdot \textnormal{tr }(D^k_\lambda)
\\\nonumber &=& \frac{1}{N} \cdot \textnormal{tr }((S^{-1}MS)^k)
\\\nonumber &=& \frac{1}{N} \cdot \textnormal{tr }(S^{-1}M^kS)
\\\nonumber &=& \frac{1}{N} \cdot \textnormal{tr }(SS^{-1}M^k)
\\\nonumber &=& \frac{1}{N} \cdot \textnormal{tr }(M^k) \ . 
\end{eqnarray}

\noindent Therefore, we have to discuss properties of the trace of the $k-$th power of the ensemble $M^{(N)}$. Since this is very difficult, we first show $\langle T^k , \Delta^{(N)} \rangle$ to \emph{converge in expectation}. This means

\begin{eqnarray}\label{IntroE-WertKonvergenz}
\mu_k^{(N)} :=\mathbb{E}(\langle T^k , \Delta^{(N)} \rangle)=\mathbb{E}(\frac{1}{N} \cdot \textnormal{tr }(M^k)) \overset{N\rightarrow\infty}\longrightarrow \left\lbrace \begin{array}{cc}
\sum_{\pi\in\mathcal{B}_{\frac{k}{2}}} J_\alpha(\pi) & \textnormal{for $k$ even}
\\ 0 & \textnormal{otherwise,}
\end{array}\right.
\end{eqnarray}

\noindent see Theorem \ref{TheoremMomentGleichSummeIntegrale}. The sum runs over \emph{non-crossing pair-partitions} (rooted trees) the set of which is denoted by $\mathcal{B}_{\frac{k}{2}}$. Furthermore, $J_\alpha$ is an integral connected with the weight $\alpha$, i.e.

\begin{eqnarray}\nonumber
J_\alpha(\pi):= \underbrace{\int_0^1 \cdots \int_0^1}_{(\frac{k}{2}+1)-\textnormal{times}} \prod_{\{ i,j\}: \{ g_i,g_j \}\in\mathcal{K}_\pi} \alpha^2 (|x_i-x_j|) \ dx_1 \ \ldots \ dx_{\frac{k}{2}+1} \ ,
\end{eqnarray}

\noindent see Definition \ref{DefinitionIntgralUeberBaume}. $\mathcal{K}_\pi$ denotes the edges of the \emph{$\pi$-adopted graph}, see below. Theorem \ref{TheoremNotwendigeUndHinreichendeBedfuerSCL} shows that (for even $k\in\mathbb{N}$) we \emph{essentially} have

$$ \sum_{\pi\in\mathcal{B}_{\frac{k}{2}}} J_\alpha(\pi) = C_{\frac{k}{2}} $$

\noindent if - and only if - 

$$ \varphi(x):= \int_0^1 \alpha^2(|x-y|) \ dy \equiv \varphi_0 \ . $$

\noindent Let $M:=(\beta_{ij})_{(1\leq i,j \leq N)}$ be a real-valued matrix. In order to compute the left hand side moments of (\ref{IntroE-WertKonvergenz}) we use

$$ (M^k)_{(ij)}=\sum_{r_2,...,r_k=1}^N \beta_{ir_2}\cdot \beta_{r_2r_3}\cdot \ldots \cdot \beta_{r_kj} $$

\noindent which can easily be proved by induction. This gives

\begin{eqnarray}\label{RelevanteSummeBerechnungMomentEinleitung}
\mu_k^{(N)} &=& \frac{1}{N\cdot N^{k/2}}\cdot \sum_{i_1,...,i_k=1}^N \alpha \big(\frac{|i_1-i_2|}{N} \big)\cdot \alpha \big(\frac{|i_2-i_3|}{N}\big)\cdot \ldots \cdot \alpha \big(\frac{|i_k-i_1|}{N} \big) \times 
\\\nonumber && \mathbb{E}(X_{i_1i_2} \cdot X_{i_2i_3} \cdot \ldots \cdot X_{i_ki_1}) \ . 
\end{eqnarray}

\noindent Initially, the sum runs over all indices $i\in\{ 1,...,N \}^k$. In order to reduce sum (\ref{RelevanteSummeBerechnungMomentEinleitung}) to those summands which asymptotically contribute to it, we will analyze the structure of the path $(i_1,i_2,...,i_k)$. This path is considered as a walk through a \emph{simple graph}. We will introduce a (simple) graph as a pair $(\mathcal{G};\mathcal{K})$ consisting of abstract \emph{nodes} $g_s\in\mathcal{G}$ and a set $\mathcal{K}$ of (\emph{undirected} or \emph{simple}) edges, see Definition \ref{erinnerungGraphen}. These edges will be used to control the dependence of the random variables. It turns out that only very special graphs asymptotically contribute to sum (\ref{RelevanteSummeBerechnungMomentEinleitung}). These \emph{rooted trees} will be uniquely connected to non-crossing pair-partitions via \emph{adopted sequences}, see Theorem \ref{SatzGenau1adaptierteFolge}.

Considering special weights $\alpha$, we will get also results for the eigenvalue-statistic for random \emph{band-matrices} with dependences. Example \ref{BspPeriodischeBandMatrizen} proves the SCL for \emph{periodic} band-matrices. Corollary \ref{KorollarPeriodischeBandMatrizen} shows why the SCL does not hold for band-matrices (with dependences) the \emph{band-width} of which is proportional to the dimension of the matrix. In particular results shown in \cite{BMP} for the \emph{Wigner case} are reproduced and generalized for the above ensemble. We further analyze band-matrices the band-width of which behaves like $o(N)$ and show that their moments converge against those of the SCL, see Theorem \ref{TheoremKleinOvonN}. Still an equivalence relation is used to control the dependence of the random variables. It is clear that the restrictions on this equivalence relation have to be adjusted, see (\ref{BedingungO1})-(\ref{BedingungO3}).

Finally, weak convergence in probabilty is proved to show the \emph{full} SCL. Once knowing that (for even $k\in\mathbb{N}$)

$$ \langle T^k , \mathbb{E}(\Delta^{(N))} \rangle:= \mathbb{E}(\langle T^k ,\Delta^{(N)} \rangle) \overset{N\rightarrow\infty}\longrightarrow \sum_{\pi\in\mathcal{B}_{\frac{k}{2}}} J_\alpha(\pi) \ ,  $$

\noindent one shows

\begin{eqnarray}\label{letzterSchrittIntro}
\mathbb{P}\big( |\langle T^k, \Delta^{(N)} \rangle - \langle T^k, \mathbb{E}(\Delta^{(N))} \rangle)| > \varepsilon  \big) \overset{N\rightarrow\infty}\longrightarrow 0 \ \forall \ \varepsilon>0 \textnormal{ and } k\in\mathbb{N} \ . \end{eqnarray}

\noindent This implies weak convergence in probability, see \cite{KIR2}, \cite{AGZ}. In order to prove (\ref{letzterSchrittIntro}) it is sufficient to show that

$$ \mathbb{V}\big( \langle T^k, \Delta^{(N)} \rangle \big) = \mathbb{E}\big( \langle T^k, \Delta^{(N)} \rangle^2 \big) - \mathbb{E}\big( \langle T^k, \Delta^{(N)} \rangle \big)^2 \overset{N\rightarrow\infty}\longrightarrow 0 \ \forall \ k\in\mathbb{N} \ .  $$

\noindent The key observation to this fact is \textsc{Chebyshev}'s inequality which states

$$ \mathbb{P}(|X-\mathbb{E}(X)|>\varepsilon) \leq \frac{\mathbb{V}(X)}{\varepsilon^2} \ \forall \ \varepsilon>0 \ .  $$

This paper is concluded with the proof for weak convergence in probability not only for the ensemble $M^{(N)}$ but also for band ensembles, see Theorem \ref{TheoremSchwacheKonvergenzinWkeitVollerFall} and Corollary \ref{KorollarSchwacheKonvInWkeitOvonN}. In order to do so the variance $\mathbb{V}( \langle T^k, \Delta^{(N)} \rangle)$ is estimated using methods from Chapter \ref{section:MomentsOfRandomMatrices}.

\bigskip

Vorrei ringraziare - al termine di quattro anni di dottorato di ricerca - tutte le persone, specialmente Prof. Dr. \textsc{Werner Kirsch}, che a vario titolo mi hanno accompagnato e senza le quali aiuto questa tesi non sarebbe stata possibile realizzare. My very special thanks go to Prof. Kirsch who acted as my supervisor during the last four years of my research and who supported me in an outstanding way. Prof. Kirsch never got tired to answer my numerous questions - thank you !

\newpage

\section{Graphs, Trees and Partitions of a Set}\label{section:Multigraphs}

\noindent As explained in the introduction certain \emph{tree-like} graphs play a key role in computing moments of a random matrix. Therefore in this section we will prove some basic properties of \emph{trees} and associated sequences of their nodes and show the connection between non-crossing pair-partitions of a set and certain ordered (\emph{rooted}) trees.

\begin{definition}[Graphs, Paths, Trees and Colour]\label{erinnerungGraphen}

	\begin{enumerate}
	
\item[]

\item[(i)] Let $\mathcal{G}$ be a countable set and let $\mathcal{T}_2(\mathcal{G})$ denote the set of all subsets of $\mathcal{G}$ consisting of exactly two elements. Consider $\mathcal{K}\subset\mathcal{T}_2(\mathcal{G})$. Then the pair $(\mathcal{G},\mathcal{K})$ is called a \emph{simple graph} (or also \emph{undirected graph}) or simply a \emph{graph}.


	

\item[(ii)]  A \emph{path} (of length $k$) in a graph is a sequence $(g_1, ..., g_k)$ with  

$$\{g_i, g_{i+1}\}\in\mathcal{K} \ \forall \ i\in\{ 1,...,k-1 \} \ . $$
	
\item[(iii)] A path is called a \emph{circle} if (ii) holds and additionally one has $\{g_k,g_1\}\in\mathcal{K}$.

\item[(iv)] $(\mathcal{G};\mathcal{K})$ is called \emph{connected}, if for all nodes $i,j \in \mathcal{G}$ there is a path with
	
	$$ (i=g_1, ... , j=g_k) \ . $$

\item[(v)]	A simple graph is called a \emph{tree}, if it is connected and for every $i\in\mathcal{G}$ there is no path $(g_1,...,g_k)$ with $g_1=i$, $\{g_k;g_1\}\in\mathcal{K}$ and 

$$ \{ g_r, g_{r+1} \} \neq \{ g_s, g_{s+1} \} \ \forall \ r\neq s \ ,  $$

\noindent where $k+1:=1$ is defined cyclically.

\item[(vi)] Every injective mapping $c:\mathcal{G}\rightarrow \mathbb{N}$ is called a \emph{colouring} of the graph. A \emph{labeled} (or \emph{coloured}) graph is a triple $(\mathcal{G};\mathcal{K}; c)$, where $c$ is a colour of the graph.
	
	\end{enumerate}

\end{definition}

\begin{remark}
Consider a simple graph $(\mathcal{G},\mathcal{K})$. The set $\mathcal{K}$ is interpreted as the set of edges of the graph. An element $\{ g_l,g_m \}\in\mathcal{K}$ is the connection between the nodes $g_l$ and $g_m$. One can not distinguish weather the connection goes \emph{from} $g_l$ \emph{to} $g_m$ or vice versa. This is the reason why the graph is called \emph{undirected}.

\end{remark}

Next we will show that certain partitions of a set (so called \emph{non-crossing pair-partitions}) can be used to describe trees.

\begin{definition}[Partition and Pair-Partition]\label{PartitionUndPaarPartition}
Let $S\neq\emptyset$ be a finite set and let $2^S$ denote the powerset of $S$. A subset $\pi:= \{ B_1, ..., B_r \}\subset 2^S$ of the powerset is called \emph{partition} of $S$, if

$$ S = \overset{\bullet}\bigcup_{\rho\in \{ 1, ..., r \}} B_\rho \ .  $$

\noindent A partition is called a \emph{pair-partition}, if $|B_\rho|=2$ for all $\rho\in\{ 1, ..., r \}$. We denote the set of all pair partitions of $S$ by $\mathcal{P}_2(S)$. Pair partitions can only occur on sets the cardinality of which is even.
\end{definition}

\begin{remark}
Let $S$ be a set of cardinality $k$. Let $\sim$ be an equivalence relation on $S$. Then this equivalence relation gives rise to a partition $\pi$ of $\{ 1, ..., k \}$ as follows: If we have $S=\{ s_1, ..., s_k \}$ then we set

$$B_{m}:= \big\{  j\in\{ 1,...,k \} | \ \ s_m\sim s_j \big\} \ . $$

\noindent For two \emph{blocks} $B_q$ and $B_r$ we have either $B_q=B_r$ or $B_q\cap B_r = \emptyset$, as can easily be seen: Suppose that $B_q\cap B_r \neq \emptyset$. Then we can take an element $x\in B_q\cap B_r$. We get


$$ s_x \sim s_q \textnormal{ and } s_x \sim s_r \Longrightarrow s_q \sim s_r \Longrightarrow B_q=B_r \ .  $$

\noindent We can now define a partition

 $$\pi:= \{ B_{1},...,B_{k} \} =: \{ C_1,...,C_r \} $$

\noindent with blocks $C_\rho$. Any of the (disjoint) blocks $C_\rho$ is the congruence class of a certain $m\in\{ 1,...,k \}$.
\end{remark}

\begin{definition}[Crossing Pair-Partitions]

Let $k\in\mathbb{N}$ be a natural number and let $\pi:= \{ B_1, ..., B_r \}$ be a partition of $\{ 1,...,k \}$.

\begin{enumerate}


	\item[(i)]  A pair-partition $\pi$ is called \emph{crossing}, if
	
	$$ \exists \ 1\leq a<b<c<d \leq k : \{ a,c \} \in\pi \ni \{ b,d \} \ . $$

	\noindent $\pi$ is called \emph{non-crossing}, if it is not crossing.	
	
	\item[(ii)] We define
	
	$$\mathcal{B}_{\frac{k}{2}}:= \left\lbrace \pi | \pi \textnormal{ is a non-crossing pair-partition on } \{1, ..., k\} \right\rbrace  \ . $$
	
	\end{enumerate}
	
\end{definition}	
	
The next lemma provides a basic property of non-crossing pair-partitions.

\begin{lemma}\label{BaumeHabenBlatter}
Let $k$ be even and $\pi\in\mathcal{B}_{\frac{k}{2}}$. If we cyclically define $k+1:=1$ then there exists (at least) one $m\in\{ 1, ..., k \}$ with $\{ m, m+1 \}\in\pi$.
\end{lemma}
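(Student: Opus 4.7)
The plan is to find a pair of \emph{consecutive} indices in $\pi$ by a minimality argument, exploiting the non-crossing property. Note that the cyclic convention $k+1:=1$ is actually a generalization: if I can produce $m \in \{1,\ldots,k-1\}$ with $\{m,m+1\}\in\pi$ (the non-cyclic case), then the statement follows immediately. So I will work with the ordinary linear order.

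First I would pick, among all blocks $\{a,b\}\in\pi$ with $a<b$, one that minimizes the \emph{gap} $b-a$. Since $\pi$ is a pair-partition of $\{1,\ldots,k\}$ with $k$ even, at least one such block exists, so this minimum is attained. Let $\{a,b\}$ be such a minimizer. The claim is that $b - a = 1$, which directly yields $m := a$.

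To prove the claim, I would argue by contradiction. Assume $b - a \geq 2$. Then there exists an index $c$ with $a < c < b$. Since $\pi$ is a pair-partition of $\{1,\ldots,k\}$, the index $c$ belongs to some block $\{c,d\}\in\pi$ with $c\neq d$. Now I use the non-crossing property: the pair $\{c,d\}$ must not cross $\{a,b\}$. If $d$ were outside the interval $[a,b]$, say $d<a$ or $d>b$, then the four indices in some order would satisfy $a<c<b$ together with $d$ lying on the opposite side of one endpoint, producing the forbidden pattern $1\leq a'<b'<c'<d'\leq k$ with $\{a',c'\}$ and $\{b',d'\}$ both in $\pi$. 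Also $d\neq a$ and $d\neq b$ because $a,b$ are already paired with each other. Hence $a<d<b$, and in particular $|d-c| < b-a$, contradicting the minimality of the gap $b-a$.

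The only mildly subtle step is verifying carefully that placing $d$ outside $[a,b]$ really produces a crossing in the precise sense of the definition (ordering the four distinct numbers and checking which pairing pattern arises). Beyond this bookkeeping, the argument is short: minimize the gap, and any "intruder" index $c$ between $a$ and $b$ is forced by non-crossingness to be paired with a partner strictly between $a$ and $b$, giving a smaller gap. Hence the minimal gap is $1$, proving the lemma.
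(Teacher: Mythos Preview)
Your proof is correct and follows essentially the same minimality-of-gap argument as the paper: assume no consecutive block, pick a block of minimal width, and use non-crossingness to force the partner of an interior index to lie strictly inside, contradicting minimality. The paper's version is terser but the idea is identical.
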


\begin{proof}
If the assumption is wrong then choose a block $\{ m, m+l \}\in\pi$ with a minimal $l>1$. We have

$$ \# \{m+1, ..., m+l-1 \} = l-1 > 0 \ . $$


\noindent None of the indices in $M=\{m+1, ..., m+l-1 \}$ are associated with an index outside of this set since $\pi$ is non-crossing. That contradicts with the minimality of $l$.

\end{proof}

\begin{definition}[Adopted Sequences] Let $k\in\mathbb{N}$ be a natural number and let $\pi:= \{ B_1, ..., B_r \}$ be a partition of $\{ 1,...,k \}$.
	
	\begin{enumerate}

	\item[(i)] \begin{eqnarray}\nonumber
\pi \setminus^\bullet \{ m, m+1 \} := \left\lbrace \begin{array}{ll}
\pi\setminus \{ m, m+1 \} \textnormal{, } a\mapsto a-2 \textnormal{ for } a>m & , \{ m, m+1 \}\in\pi
\\ \pi & , \{ m, m+1 \}\not\in\pi
\end{array} \right. 
	\end{eqnarray}

Here, all numbers $\{ 1,...,k \} \ni a>m$ are \emph{relabeled} to $a-2$, if a block $\{ m,m+1 \}\in\pi$ is removed from $\pi$. In this case one obviously has $\pi \setminus^\bullet \{ m, m+1 \}\in\mathcal{B}_{\frac{k}{2}-1}$.

	\item[(ii)] Let $\mathcal{G}$ be a discrete set and $\pi\in\mathcal{B}_{\frac{k}{2}}$. A sequence $g:=(g_1,...,g_k)\in\mathcal{G}^k$ is called \emph{$\pi-$adopted}, if the following is valid:
	
	\begin{enumerate}
		\item[1.] For $k=2$ the only non-crossing pair-partition is $\pi=\{ \{1,2\} \}$. The only $\pi-$adopted sequences are $(g,h)$ with $g\neq h$.		
		
		\end{enumerate}
		
For all blocks $\{ m,m+1 \}\in\pi$ we have	
		
		\begin{enumerate}
		
		\item[2.] $g_{m}=g_{m+2}$ and $g_{m+1}\neq g_s$ for all $s\neq m+1$.		
		
		\item[3.] $(g_1, ..., g_m, \overset{\wedge}g_{m+1}, \overset{\wedge}g_{m+2}, g_{m+3}, ..., g_k)$ is $\pi
		 \setminus^\bullet \{ m, m+1 \}-$adopted.

	\end{enumerate}

(The hat over elements means to remove them from the sequence.)

	\end{enumerate}

\end{definition}

\begin{example}
Consider $\pi:= \{ \{1,4\}, \{2,3\} \} \in\mathcal{B}_2$. Let $\mathcal{G}$ be a discrete set of nodes and consider

$$ g_1,g_2,g_3 \in\mathcal{G} \ \textnormal{ with } g_i\neq g_j \textnormal{ for } i\neq j \ .  $$

\noindent An adopted sequence is

$$ (g_1,g_2,g_3,g_2)  $$

\noindent as can easily be checked. This sequence can be interpreted as a walk through a (simple) graph. The corresponding nodes are $g_1,g_2$ and $g_3$. Every two consecutive nodes form an edge, which means that the set of edges is

$$ \mathcal{K}= \big\{ \{ g_1, g_2\}, \{ g_2, g_3 \} \big\} \ .  $$

\noindent On the other hand, given a graph, \emph{a walk} through its nodes is not unique. Adopted sequences - the existence and uniqueness of which have to be shown - define a certain walk (through a graph) which is unique. Thus, they can be used to describe a \emph{rooted} tree. This is a tree, where the walk through its nodes is important. We will indeed show later on, that for every $\pi\in\mathcal{B}_{\frac{k}{2}}$ there is one - and only one- adopted sequence. Therefore, a non-crossing pair-partition can be used to describe rooted trees, which turn out to determine exactly those summands, which asymptotically contribute to sum (\ref{RelevanteSummeBerechnungMomentEinleitung}). We will discuss later on, how $\pi-$adopted sequences are obtained from $\pi\in\mathcal{B}_{\frac{k}{2}}$.

\end{example}

\begin{remark}
Wigner's original proof for symmetric random matrices with independent random variables (\emph{Wigner case}) \cite{WIG} shows that there are only a \emph{few} summands which essentially contribute to sum (\ref{RelevanteSummeBerechnungMomentEinleitung}). Every index is the colour of a certain node. The graph, defined by the nodes and the run through the graph, which defines its edges, comes from non-crossing pair-partitions. This walk is unique, as we will now show. The mentioned result also remains true for the case discussed in this paper: here, certain correlations of the random variables are permitted.
\end{remark}

\begin{lemma}\label{BlockegebenHinundZuruck} Let $\pi$ be a non-crossing pair-partition on $\{ 1,...,k \}$ and let $g:=(g_1,...,g_k)$ be a $\pi-$adopted sequence. If $\{ m, m+l \}\in\pi$ for some $l>1$, then

\begin{eqnarray}\nonumber
\left\lbrace \begin{array}{ll}
g_m & = g_{m+l+1}
\\ g_{m+1} & = g_{m+l}  
\end{array} \right. 
	\end{eqnarray}

\end{lemma}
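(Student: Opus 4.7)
The plan is to prove the lemma by strong induction on $k$, reading the identity $g_m=g_{m+l+1}$ cyclically via the convention $k+1:=1$ whenever $m+l=k$. The base case $k=2$ is vacuous, since the only pair-partition on $\{1,2\}$ has $l=1$.

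For the inductive step I would fix $\pi\in\mathcal{B}_{k/2}$ and a block $\{m,m+l\}\in\pi$ with $l>1$. First I would observe that non-crossing forces the indices $\{m+1,\ldots,m+l-1\}$ to pair among themselves, so $l-1$ is even and the restriction of $\pi$ to this set is itself a non-crossing pair-partition; running the minimality argument from the proof of Lemma \ref{BaumeHabenBlatter} inside this interior sub-partition then produces an adjacent leaf $\{n,n+1\}\in\pi$ strictly nested inside $(m,m+l)$, i.e.\ $m<n$ and $n+1<m+l$. Next I would invoke the recursive clause of the $\pi$-adopted sequence definition: $\pi':=\pi\setminus^\bullet\{n,n+1\}\in\mathcal{B}_{k/2-1}$ contains the image block $\{m,m+l-2\}$ (since $m<n+1$ stays fixed while $m+l>n+1$ shrinks by two), the truncation
$$ g':=(g_1,\ldots,g_n,g_{n+3},\ldots,g_k) $$
is $\pi'$-adopted, and the leaf clause yields the bridging identity $g_n=g_{n+2}$. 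The dictionary between the two sequences is $g'_j=g_j$ for $j\leq n$ and $g'_j=g_{j+2}$ for $j\geq n+1$.

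Now I would split into two cases. If $l-2>1$, the induction hypothesis applied to $(\pi',g')$ at the image block $\{m,m+l-2\}$ yields
$$ g'_m = g'_{m+l-1}\quad\text{and}\quad g'_{m+1}=g'_{m+l-2}. $$
Translating back through the dictionary, with $m,m+1$ on the ``$\leq n$'' side and $m+l-1,m+l-2$ on the ``$\geq n+1$'' side, produces $g_m=g_{m+l+1}$ and $g_{m+1}=g_{m+l}$; the only subcase requiring extra care is $n=m+l-2$, where $g'_{m+l-2}=g_n$ and the leaf identity $g_n=g_{n+2}=g_{m+l}$ closes the identification. If $l-2=1$, then $l=3$, necessarily $n=m+1$, and the image block $\{m,m+1\}$ is itself a $\pi'$-leaf, so the leaf clause applied to $g'$ delivers $g'_m=g'_{m+2}$ at once; combined with the dictionary and $g_n=g_{n+2}$, this again produces both required equalities.

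The main obstacle I expect is the boundary bookkeeping just described at $n=m+l-2$, together with the harmless verification that the cyclic reading of $g_{m+l+1}$ when $m+l=k$ is compatible with the cyclic reading of $g'_{m+l-1}$ at the image block; since the chosen leaf $\{n,n+1\}$ lies strictly inside $(m,m+l)$, the relabelling $a\mapsto a-2$ never interacts with the wrap-around point, so this compatibility is automatic and no separate cyclic case needs to be treated.
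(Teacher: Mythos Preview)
Your proof is correct and follows essentially the same approach as the paper: both arguments strip interior leaves nested inside the block $\{m,m+l\}$ and use the recursive clause of the definition of $\pi$-adopted sequence together with the leaf identity $g_n=g_{n+2}$. The only difference is organizational: you remove one interior leaf and invoke strong induction on $k$, whereas the paper removes all $\tfrac{l-3}{2}$ interior leaves in one pass to reduce directly to the configuration $\{m,m+3\},\{m+1,m+2\}\in\tilde{\pi}$ and then reads off both equalities from two successive applications of the leaf clause. Your boundary bookkeeping at $n=m+l-2$ is handled correctly, and the paper's version avoids that case split at the cost of a slightly less formal ``remove $\tfrac{l-3}{2}$ blocks'' step.
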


\begin{proof} Since $\pi$ is non-crossing any of the $l-1>0$ indices in the set 

$$S:=\{ m+1,...,m+l-1 \}$$

\noindent forms a block with another index of the same set. In particular, the number $l-1$ is even. We therefore get $\frac{l-1}{2}$ blocks between $m$ and $m+l$ from which we remove $\frac{l-3}{2}$ according to Lemma \ref{BaumeHabenBlatter}. We obtain an element $\tilde{\pi}\in\mathcal{B}_{\frac{k-(l-3)}{2}}$ with 

$$ \{ m,m+3 \}, \{ m+1,m+2 \} \in\tilde{\pi} \ . $$

\noindent By definition the sequence 

$$ \tilde{g}:= (g_1,..., g_{m+1},\overset{\wedge}g_{m+2},...,\overset{\wedge}g_{m+l-2},g_{m+l-1},...,g_k) =: (\tilde{g}_1,...,\tilde{g}_{k-(l-3)}) $$

\noindent is $\tilde{\pi}-$adopted, which implies 

$$ g_{m+1}=\tilde{g}_{m+1} = \tilde{g}_{m+3}=g_{m+l} \ . $$

\noindent Again by definition the sequence 

$$ \tilde{\tilde{g}} := (g_1,..., g_{m+1},\overset{\wedge}g_{m+2},...,\overset{\wedge}g_{m+l},g_{m+l+1},...,g_k) =: (\tilde{\tilde{g}}_1,...,\tilde{\tilde{g}}_{k-(l-1)}) $$

\noindent is $\tilde{\tilde{\pi}}:=\tilde{\pi}\setminus^\bullet \{ m+1,m+2 \}-$adopted, which implies

$$ g_{m}=\tilde{\tilde{g}}_{m} = \tilde{\tilde{g}}_{m+2}=g_{m+l+1}   $$

\noindent since $\{ m,m+1 \}\in\tilde{\tilde{\pi}}$.

\end{proof}

Prior to be able to prove a connection between adopted sequences, graphs and non-crossing pair-partitions, we need to define what we mean by an isomorphism between two sequences.

\begin{definition}[Isomorphism between sequences]
Let $\mathcal{G}$ be a countable set of (abstract) nodes. Let 

$$ g:=(g_1,...,g_k) \in \mathcal{G}^k \ni (h_1,...,h_k)=:h $$

\noindent be two sequences of (ordered) nodes. The sequences $g$ and $h$ are called \emph{equivalent}, if there is a bijective mapping $\sigma:\mathcal{G}\rightarrow\mathcal{G}$ with

$$ \sigma(g_i) = h_i \ \forall \ i\in \{ 1,...,k \} \ .  $$



	


\noindent If $g$ and $h$ are equivalent we also say that they are the same sequences \emph{up to an isomorphism}.

\end{definition}

\begin{theorem}\label{SatzGenau1adaptierteFolge}
Let $k$ be even and let $\pi\in\mathcal{B}_{\frac{k}{2}}$. Then there exists (up to an isomorphism) exactly one $\pi-$adopted sequence $(G_1,...,G_k)=:G(\pi)$. Further one has $\# \{ G_1, ..., G_k \}=\frac{k}{2}+1$.

\end{theorem}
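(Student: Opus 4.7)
The plan is to prove both the existence-and-uniqueness statement and the counting statement simultaneously by induction on $k$, using Lemma \ref{BaumeHabenBlatter} to guarantee the existence of a ``leaf block'' that allows the recursive step in the definition of adopted sequences to be triggered.

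For the base case $k=2$, the only non-crossing pair-partition is $\pi=\{\{1,2\}\}$, and the definition directly declares every $(g,h)$ with $g\neq h$ to be $\pi$-adopted. Any two such sequences are obviously equivalent via the bijection on $\mathcal{G}$ that swaps the relevant nodes, and $\#\{g,h\}=2=\frac{k}{2}+1$. For the inductive step, assume the statement for all non-crossing pair-partitions on $\{1,\ldots,k-2\}$ and fix $\pi\in\mathcal{B}_{k/2}$. Lemma \ref{BaumeHabenBlatter} yields an $m\in\{1,\ldots,k\}$ with $\{m,m+1\}\in\pi$. Write $\pi':=\pi\setminus^\bullet\{m,m+1\}\in\mathcal{B}_{k/2-1}$.

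For \emph{existence}, invoke the induction hypothesis to obtain a $\pi'$-adopted sequence $(\tilde G_1,\ldots,\tilde G_{k-2})$ with $\#\{\tilde G_1,\ldots,\tilde G_{k-2}\}=\frac{k}{2}$. Pick a node $G_{m+1}\in\mathcal{G}$ that does not appear among the $\tilde G_i$, and define
\begin{eqnarray}\nonumber
G_i:=\tilde G_i \ (i\leq m),\quad G_{m+2}:=\tilde G_m,\quad G_i:=\tilde G_{i-2}\ (i\geq m+3).
\end{eqnarray}
By construction $G_m=G_{m+2}$ and $G_{m+1}$ is distinct from every other entry, and removing positions $m+1,m+2$ recovers exactly $\tilde G$, which is $\pi'$-adopted; thus conditions 2 and 3 of Definition~(ii) hold at the block $\{m,m+1\}$. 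For any \emph{other} block $\{m',m'+1\}\in\pi$ consisting of consecutive indices, the corresponding block in $\pi'$ (at shifted position if $m'>m+1$) is also a leaf block, and the induction hypothesis guarantees that $\tilde G$ satisfies conditions 2 and 3 there; since our insertion leaves the entries outside positions $m+1,m+2$ untouched (up to the harmless shift in index) and adds a fresh node, these conditions transport back to $G$. This produces a $\pi$-adopted sequence, and by construction $\#\{G_1,\ldots,G_k\}=\frac{k}{2}+1$.

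For \emph{uniqueness}, let $G$ and $H$ be two $\pi$-adopted sequences. By condition 2 applied to the \emph{same} leaf block $\{m,m+1\}$ we have $G_m=G_{m+2}$, $H_m=H_{m+2}$, and both $G_{m+1}$ and $H_{m+1}$ are fresh (appear nowhere else in their respective sequences). By condition 3, deleting positions $m+1,m+2$ yields $\pi'$-adopted sequences $\tilde G$ and $\tilde H$. The inductive hypothesis supplies a bijection $\sigma:\mathcal{G}\to\mathcal{G}$ with $\sigma(\tilde G_i)=\tilde H_i$ for all $i$; extend $\sigma$ by mapping $G_{m+1}\mapsto H_{m+1}$ (well-defined as a bijection because $G_{m+1}$ is not in $\tilde G$ and $H_{m+1}$ is not in $\tilde H=\sigma(\tilde G)$). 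This extended bijection realizes $G$ and $H$ as equivalent, proving uniqueness up to isomorphism.

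The step I expect to be the most delicate is the compatibility check in the existence part: the definition of $\pi$-adoptedness imposes conditions 2 and 3 at \emph{every} leaf block simultaneously, while the construction above only directly enforces them at the chosen block $\{m,m+1\}$. The reason this still works is that after removal the remaining leaf blocks of $\pi$ correspond to leaf blocks of $\pi'$, so the inductive hypothesis takes care of them; but one has to be careful that inserting the fresh node does not spoil the distinctness condition $G_{m'+1}\neq G_s$ for $s\neq m'+1$ at other leaf blocks, which is precisely why we choose $G_{m+1}$ to be genuinely new and not merely distinct from its immediate neighbours.
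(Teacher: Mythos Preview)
Your proof is correct and follows essentially the same inductive scheme as the paper: base case $k=2$, then use Lemma~\ref{BaumeHabenBlatter} to find a leaf block $\{m,m+1\}$, reduce to $\pi'=\pi\setminus^\bullet\{m,m+1\}$, and invoke the induction hypothesis for both existence and uniqueness, with the node count following immediately. The only noteworthy difference is that you are more explicit on two points the paper glosses over: you actually verify that the inserted sequence satisfies conditions~2 and~3 at \emph{every} leaf block (the paper simply asserts ``$G$ is $\pi$-adopted by construction''), and for uniqueness you build the bijection $\sigma$ between two adopted sequences directly, whereas the paper picks the \emph{leftmost} leaf block and argues somewhat informally that the reduced sequence determines $G$.
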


\begin{proof}
We will prove the Theorem via induction. For $k=2$ the only $\pi-$adopted sequence is $(g_1,g_2)$, $g_1\neq g_2\in\mathcal{G}$ by definition. This also shows the existence of an adopted sequence in this case. We discuss the induction-step next.

\begin{description}

	\item[Existence:] Let $\pi\in\mathcal{B}_{\frac{k}{2}}$ be a non-crossing pair-partition. According to Lemma \ref{BaumeHabenBlatter} we can choose a block $\{ m,m+1 \}\in\pi$. By hypothesis there \emph{exists} a $\pi\setminus^\bullet\{ m,m+1 \} =: \tilde{\pi}-$adopted sequence
	
	$$ \tilde{G}:= (g_1, ..., g_m, g_{m+1},...,g_{k-2} ) $$

\noindent with \emph{some} nodes $g_1,...,g_{k-2}\in\mathcal{G}$. We choose an element $h\in \mathcal{G}\setminus \{ g_1, ..., g_{k-2} \}$ and define 

$$ G:= (g_1, ..., g_m, h, g_m, g_{m+1},...,g_{k-2} ) \in\mathcal{G}^k \ . $$

\noindent Then $G$ is $\pi-adopted$ by construction.

	\item[Uniqueness:] We now choose \emph{the left most} block of $\pi\in\mathcal{B}_{\frac{k}{2}}$, i.e. a block $\{ m,m+1\}\in\pi$ with a \emph{minimal} index $m\in \{ 1,...,k \}$. If $G=(G_1,...,G_k)$ is $\pi-adopted$, by definition, we necessarily obtain $G_m=G_{m+2}$ and $G_{m+1}\neq G_s$ for all $m+1\neq s$. By definition the sequence
	
	$$ \tilde{G}:= (G_1, ..., G_m, \overset{\wedge}G_{m+1}, \overset{\wedge}G_{m+2}, G_{m+3}, ..., G_k)  $$

\noindent is $\pi\setminus^\bullet\{ m,m+1 \} =: \tilde{\pi}-$adopted and by hypothesis there is (up to an isomorphism) only one $\tilde{\pi}-$adopted sequence, let us say

\begin{eqnarray}\nonumber
\tilde{G}= (g_1,...,\underbrace{g}_{\textnormal{position $m$}},..., g_r) \ .
\end{eqnarray}

\noindent By comparision we obtain that $G_m=g=G_{m+2}$ and $G_{m+1}:=h\in \mathcal{G}\setminus \{ g_1, ..., g, ... g_r \}$ is a new node by definition. This implies the uniqueness of a $\pi-$adopted sequence $G$ since the index $m$ was minimal.

\end{description}

\noindent Finally by hypothesis one has $\# \{ g_1, ..., g, ... g_r  \}=\frac{k-2}{2}+1$ and therefore 

$$ \# \{ G_1, ..., G_k \} = \frac{k-2}{2}+1 + 1 = \frac{k}{2}+1 \ . $$









\end{proof}

\begin{definition}(Adopted Graph)\label{DefinitionAdaptierterGraph}
Let $k$ be even and let $\pi\in\mathcal{B}_{\frac{k}{2}}$. Let $G(\pi):=(G_1,...,G_k)$ be the only $\pi-$adopted sequence. We set

\begin{eqnarray}\nonumber
\mathcal{G}_\pi &:=&\{ G_1, ..., G_k \}=:\{ g_1, ..., g_{\frac{k}{2}+1} \} \ ,
\\\nonumber \mathcal{K}_\pi &:=& \big\{ \{ G_m, G_{m+1} \} | \ m=1,...,k \big\} \ ,
\end{eqnarray}

\noindent The pair $(\mathcal{G}_\pi; \mathcal{K})$ is called \emph{$\pi-$adopted graph}. It has $\frac{k}{2}=|\mathcal{K}|$ edges, $\frac{k}{2}+1$ different nodes. It is unique (up to an isomorphism) according to Theorem \ref{SatzGenau1adaptierteFolge}.

\end{definition}

\newpage

\section{Moments of Random Matrices}\label{section:MomentsOfRandomMatrices}

In this section we will prove the main result explained in the introduction. We will set up the precise setting and develop techniques to compute the moments of certain random matrices which generalize known results, see  \cite{HSBS}, \cite{BMP}, \cite{WIG}, \cite{WIG2}.


Let $\alpha:[0;1]\rightarrow\mathbb{R}$ be a bounded, Riemann-integrable function. For every $N\in\mathbb{N}$ we consider symmetric random matrices

\begin{eqnarray}\label{Zufallsmatrix}
M^{(N)} := \frac{1}{\sqrt{N}} \left( \alpha\big(\frac{|i-j|}{N} \big) \cdot X^{(N)}_{ij} \right)_{(1\leq i,j \leq N)} \ . 
\end{eqnarray}

\noindent with random variables $X^{(N)}_{ij}\equiv X^{(N)}_{ji}$ as explained in the introduction. The random variables do not have to be independent, nevertheless restrictions are required. First we want the family $X^{(N)}_{ij}$ to be centered and to be normalized, which means

$$ \mathbb{E}\big(X^{(N)}_{ij} \big)=0 \textnormal{ and } \mathbb{V}\big(X^{(N)}_{ij}\big) = 1 \ \qquad \forall \ N\in\mathbb{N}  $$

\noindent and since we will compute moments we surely require all moments to exist and to be bounded, which means

$$ \sup_N \max_{i,j} \mathbb{E}\big(|X^{(N)}_{ij}|^k\big) \leq R_k < +\infty \ \qquad \forall \ k\geq 1 \ . $$

\noindent This assumption is necessary for certain upper bounds in computing moments of $M^{(N)}$. In order to control the dependence of the random variables $X_{ij}:=X^{(N)}_{ij}\not\equiv 0$ we introduce an equivalence relation $\sim$ on $\{ 1,...,N \}^2$. Whenever $(p,q)\not\sim(r,s)$ we assume $X_{pq}$ and $X_{rs}$ to be independent. Since we require $M:=M^{(N)}$ to be symmetric, we only consider equivalence relations with $(p,q)\sim (q,p)$ for all $p,q\in \{ 1,...,N \}$.

From now on we only consider equivalence relations which satisfy the following conditions.

\begin{eqnarray}\label{AequivalenzrelationBedingung1}
\max_{p} &\#& \{ (q,r,s)\in\{ 1,...,N\}^3 | (p,q)\sim (r,s) \} = o(N^2)
\\ \label{AequivalenzrelationBedingung2} \max_{p,q,r} &\#& \{ s\in\{ 1,...,N \} | (p,q)\sim(r,s) \} \leq B < +\infty
\\ \label{AequivalenzrelationBedingung3}  &\#& \{ (p,q,r)\in\{ 1,...,N \}^3 | (p,q)\sim(q,r) \textnormal{ and } r\neq p \} = o(N^2) \ . 
\end{eqnarray}

\smallskip

\begin{remark}
Equivalence relations with restrictions (\ref{AequivalenzrelationBedingung1})-(\ref{AequivalenzrelationBedingung3}) were also considered by \cite{HSBS}.
\end{remark}

\smallskip

\begin{remark}
Let us consider an equivalence relation with the only congruence classes



$$ [(p,q)]= \{ (p,q), (q,p) \} \textnormal{ for all } p,q\in \{ 1,...,N \} \ . $$

\noindent This is exactly the Wigner case: The random variables $X_{ij}$ and $X_{pq}$ are independent as long as $(i,j)\not\in \{ (p,q), (q,p) \}$.

We will verify, that conditions (\ref{AequivalenzrelationBedingung1})-(\ref{AequivalenzrelationBedingung3}) are fullfilled.

\begin{description}
	\item[Condition (\ref{AequivalenzrelationBedingung1}):] Let $p\in\{ 1,...,N\}$ be fixed. One has $N$ possibilities to choose $q$ but then the pair $(r,s)$ is fixed from a set of two elements, which gives most $2N$ choices.
	
	\item[Condition (\ref{AequivalenzrelationBedingung2}):] Let $(p,q,r)\in\{ 1,...,N\}^3$ be fixed. Then we have at most one possibility to choose $s$ if $r\in\{ p,q \}$. Otherwise we have none.
	
	\item[Condition (\ref{AequivalenzrelationBedingung3}):] We consider two pairs $(p,q)\sim(q,r)$. Then necessarily $p=r$ which is forbidden in the above condition. The set $\{ (p,q)\sim(q,r) \textnormal{ and } r\neq p \}$ is empty.

\end{description}

\noindent This shows that all results obtained in this paper also hold for the Wigner case. 

\end{remark}

\bigskip

We will now introduce the abstract setting to calculate the limit of sum (\ref{RelevanteSummeBerechnungMomentEinleitung}). Let $2^{\{ 1,...,k \}}$ denote the powerset of $\{ 1,...,k \}$ and let $\mathcal{G}$ be a countable set (of abstract nodes). Let $c: \{ g_1,...,g_k \}\rightarrow \{ 1,...,N \}$ be a colour of the nodes $g_\kappa\in\mathcal{G}$. In particular, $c$ is injective by definition. To any sequence $(g_1,...,g_k)\in\mathcal{G}^k$ we associate \emph{blocks}

$$ B_{m}:=  \big\{ j\in\{ 1,...,k \} \ | \ \big(c(g_m),c(g_{m+1})\big)\sim \big(c(g_j),c(g_{j+1})\big)   \big\} \ , \ m\in\{  1,...,k\} \ , $$

\noindent where $k+1:=1$ is defined cyclically. These blocks give raise to a partition 

$$\pi=\pi(c):= \{ B_{1},...,B_{k} \} =: \{ C_1,..., C_r \} \ , $$ 

\noindent where $(C_\rho)_{\rho=1,...,r}$ is a family of blocks with $C_a\neq C_b$ for $a\neq b$. Since $\sim$ is an equivalence relation, $\pi(c)=\{ C_1,...,C_r \}$ is a partition of $\{ 1,...,k \}$, as the remark after Definition \ref{PartitionUndPaarPartition} shows. The pair $\gamma:=\big((g_1,...,g_k),c \big)$ is referred to as a \emph{cycle}. The above discussion shows that we can obtain any cycle from a partition via $\mathfrak{f}^{-1}$, where $\mathfrak{f}$ denotes the mapping

\begin{eqnarray}\nonumber
\mathfrak{f}: \big\{ \gamma:=\big((g_1,...,g_k),c \big) \ | \ \gamma \textnormal{ is a cycle} \big\} &\rightarrow& \mathcal{P}(k):= \{ \pi\subset 2^{\{ 1,...,k \}} | \pi \textnormal{ is a partition of } \{ 1,...,k \} \}
\\\nonumber \big((g_1,...,g_k),c \big) &\mapsto& \pi(c) \ . 
\end{eqnarray}


\noindent We define

\begin{eqnarray}\nonumber
\mathcal{E}^{(N)}_k(\pi) := \big\{ \Gamma:= \big((g_1,...,g_k),\pi,c \big) | \{ l,m \}\subset B_i\in\pi &\Leftrightarrow& \\\nonumber \big(c(g_m),c(g_{m+1})\big)&\sim& \big(c(g_l),c(g_{l+1})\big) \ , 
\\\nonumber  c: \{ g_1,...,g_k \} &\hookrightarrow& \{ 1,...,N \} , \ g_s\in\mathcal{G}  \big\} \ ,
\end{eqnarray}

\noindent where '$\hookrightarrow$' underlines, that the mapping $c$ is injective. For $\Gamma\in\mathcal{E}^{(N)}_k:=\mathcal{E}^{(N)}_k(\pi)$ we set $c_j:=c(g_j)$ and cyclically define

$$ X_N(\Gamma):= \prod_{j=1}^k X_{c_j c_{j+1}} \textnormal{ and } \alpha(\Gamma) := \prod_{j=1}^k \alpha \big(\frac{|c_j-c_{j+1}|}{N} \big) \ . $$

\noindent Since every cycle $\gamma$ can be obtained by a partition $\pi\in\mathcal{P}(k)$, the limit of sum (\ref{RelevanteSummeBerechnungMomentEinleitung}) can be rewritten as

\begin{eqnarray}\label{MomentSummeUeberE}
\mu_k := \lim_{N\rightarrow\infty} \mu_k^{(N)} = \lim_{N\rightarrow\infty} \frac{1}{N^{1+k/2}} \sum_{\pi\in\mathcal{P}(k)} \ \sum_{\Gamma\in\mathcal{E}^{(N)}_k(\pi)} \alpha(\Gamma) \cdot \mathbb{E}(X_N(\Gamma)) \ .
\end{eqnarray}


In order to work out the summands which asymptotically contribute to sum (\ref{MomentSummeUeberE}) we introduce the following projector.

\begin{eqnarray}\label{ProjektorP}
P: \mathcal{E}^{(N)}_k &\longrightarrow& \mathcal{P}(k)
\\\nonumber \Gamma &\longmapsto& \pi \ .
\end{eqnarray}

\begin{lemma}[Upper bound]\label{ObereAbschatzung}
Let $\pi:= \{ B_1,...,B_r \}\in\mathcal{P}(k)$ be a partition. Then 

$$ \# P^{-1}(\pi) \leq  N^{r+1} \cdot B^{k-r-1} \ , $$

\noindent where $B$ denotes the upper bound from condition (\ref{AequivalenzrelationBedingung2}).

\end{lemma}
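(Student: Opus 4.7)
\bigskip

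\noindent\textbf{Proof plan.} The strategy is to enumerate $P^{-1}(\pi)$ by parametrizing each $\Gamma$ via the coloring sequence $(c_1,\ldots,c_k)\in\{1,\ldots,N\}^k$ with $c_i:=c(g_i)$. Since $c$ is injective, the pattern of equalities among the abstract nodes $g_i$ is recovered from this integer sequence, so (up to relabeling of $\mathcal{G}$) an element of $P^{-1}(\pi)$ is determined by a sequence $(c_1,\ldots,c_k)$ whose partition of $\{1,\ldots,k\}$ induced by the cyclic pairs $(c_i,c_{i+1})$ under $\sim$ is exactly $\pi$. It therefore suffices to count such sequences, and for an upper bound I may drop the constraint that pairs in different blocks are non-equivalent, retaining only the requirement that pairs inside the same block are $\sim$-equivalent.

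Call an edge $e_i:=(c_i,c_{i+1})$ (indices cyclic) a \emph{representative} of its block $B\in\pi$ if $i=\min B$; note that $e_1$ is automatically a representative and that each block has exactly one representative, so at most $r$ of the edges $e_1,\ldots,e_{k-1}$ are representatives. I now choose $c_1,c_2,\ldots,c_k$ in turn. The first node $c_1$ admits $N$ values. For $i=1,\ldots,k-1$, assuming $c_1,\ldots,c_i$ are already fixed, I bound the number of admissible $c_{i+1}$ as follows. If $e_i$ is a representative, no $\sim$-constraint from earlier edges applies, and I use the trivial bound $N$. If $e_i$ is not a representative, then its block already contains a representative $e_j=(c_j,c_{j+1})$ with $j<i$ whose endpoints are fixed; the constraint $(c_j,c_{j+1})\sim(c_i,c_{i+1})$ with three of its four coordinates known restricts $c_{i+1}$ to at most $B$ values by condition (\ref{AequivalenzrelationBedingung2}). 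Multiplying these bounds gives
$$\#P^{-1}(\pi)\leq N\cdot N^r\cdot B^{k-1-r}=N^{r+1}\cdot B^{k-r-1},$$
and the closing constraint that $e_k=(c_k,c_1)$ lie in its prescribed block only shrinks the count and can be harmlessly dropped.

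The main bookkeeping subtlety is the cyclic accounting of representatives: if the smallest element of the block containing $k$ happens to be $k$ itself, then only $r-1$ representatives appear among $e_1,\ldots,e_{k-1}$, producing the a priori sharper bound $N^r\cdot B^{k-r}$. This is absorbed into the stated upper bound $N^{r+1}B^{k-r-1}$ by estimating one $B$-factor by the trivial bound $N$ in the regime $N\geq B$ (which is the only relevant one asymptotically and where the inequality $N^rB^{k-r}\leq N^{r+1}B^{k-r-1}$ is immediate); the remaining finitely many cases $N<B$ are handled by the global trivial estimate $\#P^{-1}(\pi)\leq N^k\leq N^{r+1}B^{k-r-1}$, using $N<B$ to pay for the $k-r-1$ missing factors. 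Aside from this case split, the argument is a direct inductive traversal of the cycle and uses only condition (\ref{AequivalenzrelationBedingung2}).
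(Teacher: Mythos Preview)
Your argument is correct and follows essentially the same route as the paper: colour $c_1$ freely, then traverse the cycle and bound the number of choices for $c_{i+1}$ by $N$ when $i$ is the minimum of its block and by $B$ otherwise via condition~(\ref{AequivalenzrelationBedingung2}), yielding $N^{r+1}B^{k-r-1}$. Your explicit treatment of the closing edge $e_k$ and the case split $N\gtrless B$ is more careful than the paper's version, which simply tallies $r-1$ free nodes after the first two without discussing the boundary; both arguments tacitly assume $r\le k-1$, which is harmless for all later applications.
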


\begin{proof}
Without loss of generality we can assume that $1\in B_1$ and we colour $g_1$ and $g_2$. This gives at most $N^2$ possibilities.





We colour the nodes $g_3,...,g_k$ \emph{successively} and note that for an index $l\in \{ 2,...,k \}$ there are two possibilities:

\begin{enumerate}

	\item[(i)] $\exists \ i \in \{ 1,...,r \}  \ \exists \ m<l: \{ m,l \}\subset B_i$. Then we have
	
	$$ (c(g_m),c(g_{m+1})) \sim (c(g_l),c(g_{l+1})) $$

\noindent and according to condition (\ref{AequivalenzrelationBedingung2}) we have at most $B$ choices to colour node $g_{l+1}$.

	\item[(ii)] 	$\forall \ m<l: \{ m,l \}\not\subset B_i$. Then we can colour node $g_{l+1}$ with some numbers of $\{ 1,...,N \}$. This gives less than $N$ values due to injectivity.
	
\end{enumerate}

\noindent Block $B_1$ was used to colour the first two nodes. Because there are $r-1$ blocks left in the partition, we can freely colour $r-1$ nodes. The remaining $k-2-(r-1)=k-r-1$ indices are constrained by condition (\ref{AequivalenzrelationBedingung2}). We get

$$ \# P^{-1}(\pi) \leq N^2 \cdot N^{r-1} \cdot B^{k-r-1} = N^{r+1} \cdot B^{k-r-1} \ . $$

\end{proof}

\begin{proposition}\label{KorollarNurPaarPartitionenBeitrag}
Let $\pi:=\{ B_1,..., B_r \}\in\mathcal{P}(k)$ be a partition with $r\neq\frac{k}{2}$. Then we have

$$ \lim_{N\rightarrow\infty} \frac{1}{N^{1+k/2}} \sum_{\Gamma\in\mathcal{E}^{(N)}_k(\pi)} \alpha(\Gamma) \cdot \mathbb{E}(X_N(\Gamma)) = 0 \ . $$

\end{proposition}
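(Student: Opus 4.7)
The plan is to handle the claim by splitting into the two regimes $r<k/2$ and $r>k/2$; only one of these can be attacked with the quantitative upper bound of Lemma \ref{ObereAbschatzung}, and the other must use a parity/independence argument. In both subcases I would first bound the integrand uniformly: $\alpha$ is bounded, so $|\alpha(\Gamma)|\le \|\alpha\|_\infty^k$, while H\"older's inequality combined with the uniform moment bound gives $|\mathbb{E}(X_N(\Gamma))|\le R_k$. Hence every individual summand is majorised by a constant $C_k$ depending only on $k$, $\|\alpha\|_\infty$ and $R_k$.

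\textbf{Case $r<k/2$.} Here I would simply invoke Lemma \ref{ObereAbschatzung}, which yields $\#P^{-1}(\pi)\le N^{r+1}B^{k-r-1}$, so
$$ \left|\frac{1}{N^{1+k/2}}\sum_{\Gamma\in\mathcal{E}^{(N)}_k(\pi)}\alpha(\Gamma)\,\mathbb{E}(X_N(\Gamma))\right| \le C_k\, B^{k-r-1}\cdot N^{\,r-k/2}, $$
and the exponent $r-k/2<0$ forces the right-hand side to vanish as $N\to\infty$.

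\textbf{Case $r>k/2$.} Now the bound from Lemma \ref{ObereAbschatzung} grows like $N^{r-k/2}$ and is of no use; the vanishing has to come from the expectation itself. The key step is a pigeonhole observation: since $\sum_{i=1}^r|B_i|=k$ and $r>k/2$, one cannot have $|B_i|\ge 2$ for every $i$, so at least one block is a singleton, say $B_{i_0}=\{j\}$. By the definition of $\mathcal{E}^{(N)}_k(\pi)$ this forces $(c_j,c_{j+1})\not\sim(c_l,c_{l+1})$ for every $l\neq j$, so the independence assumption renders $X_{c_jc_{j+1}}$ independent of the whole family $\{X_{c_lc_{l+1}}:l\neq j\}$. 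Centredness then gives
$$ \mathbb{E}(X_N(\Gamma)) = \mathbb{E}(X_{c_jc_{j+1}})\cdot\mathbb{E}\Bigl(\prod_{l\neq j}X_{c_lc_{l+1}}\Bigr) = 0, $$
and the entire inner sum is identically $0$ for every $N$.

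The main obstacle, as always in this kind of argument, is the second case: one has to be certain that the equivalence relation on index pairs really encodes \emph{joint} independence between distinct classes, not just pairwise independence, so that a singleton block genuinely factors out of the product expectation. Once that is granted the proof reduces to the pigeonhole observation above combined with Lemma \ref{ObereAbschatzung}; no structural information about $\pi$ beyond the count $r$ is needed.
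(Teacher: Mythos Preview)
Your proposal is correct and follows essentially the same strategy as the paper: the case $r<k/2$ is dispatched via the counting bound of Lemma \ref{ObereAbschatzung} together with the uniform estimates $|\alpha(\Gamma)|\le A$ and $|\mathbb{E}(X_N(\Gamma))|\le R_k$, while the case $r>k/2$ uses the pigeonhole singleton block plus independence and centring to make every summand vanish. Your remark about needing joint (not merely pairwise) independence across equivalence classes is well taken; the paper uses this implicitly.
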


\begin{proof}
First suppose $r>\frac{k}{2}$. Then there is at least one block which consists of exactly one element. Otherwise we would have more than $2r > k$ elements in $\{ 1,...,k \}$. Hence we have a pair $(i_j,i_{j+1}):=(c(g_j),c(g_{j+1}))$ which appears only once. Furthermore, $\alpha$ is bounded, which implies the existence of a ($k-$dependent) constant $A=A(k)<+\infty$ with 
	
	$$ \alpha(\Gamma) \leq A \ \qquad  \forall \ \Gamma\in\mathcal{E}_k(\pi) \ . $$

\noindent For $\Gamma\in\mathcal{E}_k(\pi)$ we get

$$ |\alpha(\Gamma) \cdot \mathbb{E}(\Gamma) | \leq A \cdot  \prod_{s\neq j} \mathbb{E}(X_{i_s i_{s+1}}) \cdot \mathbb{E}(X_{i_j i_{j+1}}) = 0 \ . $$

Now we suppose that $r<\frac{k}{2}$. \textsc{H\"older's} inequality implies 

$$ |\mathbb{E}(X_N(\Gamma))| \leq R_k \ \qquad \forall \ \Gamma\in\mathcal{E}_k(\pi) \ . $$

\noindent Using Lemma \ref{ObereAbschatzung} we see that

\begin{eqnarray}\nonumber
\big| \frac{1}{N^{1+k/2}} \sum_{\Gamma\in\mathcal{E}^{(N)}_k(\pi)} \alpha(\Gamma) \cdot \mathbb{E}(X_N(\Gamma)) \big|
&\leq& A \cdot R_k \cdot \frac{1}{N^{1+k/2}} \cdot \# \mathcal{E}^{(N)}_k(\pi) 
\\\nonumber &=&  A \cdot R_k \cdot \frac{1}{N^{1+k/2}} \cdot \# P^{-1}(\pi)
\\\nonumber &\leq& A \cdot R_k \cdot N^{r+1-1-k/2} \cdot B^{k-r-1} 
\\\nonumber &=& A \cdot R_k \cdot N^{r-k/2} \cdot B^{k-r-1} 
\\\nonumber &\overset{N\rightarrow\infty}\longrightarrow& 0 \ ,
\end{eqnarray}

\noindent which completes the proof.

\end{proof}

\begin{remark} 
Proposition \ref{KorollarNurPaarPartitionenBeitrag} shows that only partitions with exactly $k/2$ blocks give a contribution to sum (\ref{MomentSummeUeberE}). Its proof also shows that only such partitions give an asymptotically non-vanishing contribution to sum (\ref{MomentSummeUeberE}) the blocks of which consist of at least two elements. These two facts together imply that only \emph{pair-partitions} contribute to sum (\ref{MomentSummeUeberE}). Next we will show that only \emph{non-crossing} pair-partitions give a non-vanishing contribution to the mentioned sum. Later on we will see that only very special sequences $(g_1,...,g_k)$, which are in connection with $\Gamma\in\mathcal{E}_k$, give the non-vanishing summands of (\ref{MomentSummeUeberE}). These will turn out to be the $\pi-$adopted sequences.

\end{remark}

\begin{proposition}\label{BlaetterEntfernungsProposition}
Let $\pi\in\mathcal{P}_2(\{ 1,...,k \})$ be a pair-partition and suppose that there is a block $\{ m,m+1 \}\in\pi$. Then 

$$ \# P^{-1}(\pi) \leq N\cdot \# P^{-1}(\tilde{\pi}) + o(N^{1+k/2}) \ . $$

\noindent where $\tilde{\pi}:=\pi\backslash^\bullet\{ m,m+1 \}$.

\end{proposition}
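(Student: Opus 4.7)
The plan is to split $P^{-1}(\pi)$ according to whether the colours $c_m:=c(g_m)$ and $c_{m+2}:=c(g_{m+2})$ coincide. The block $\{m,m+1\}\in\pi$ forces $(c_m,c_{m+1})\sim(c_{m+1},c_{m+2})$, and by the symmetry $(p,q)\sim(q,p)$ the choice $c_m=c_{m+2}$ always satisfies this; I expect the "generic" case $c_m=c_{m+2}$ to deliver the main term $N\cdot\#P^{-1}(\tilde{\pi})$, while the "exceptional" case $c_m\neq c_{m+2}$ is exactly what condition (\ref{AequivalenzrelationBedingung3}) is designed to absorb into an $o(N^{1+k/2})$ error.

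For the generic case $c_m=c_{m+2}$, I would define a reduction $\Gamma\mapsto\tilde{\Gamma}$ by deleting positions $m+1$ and $m+2$ from the colour sequence, obtaining a sequence of length $k-2$. The check, which is essentially formal, is that the reduced sequence lies in $P^{-1}(\tilde{\pi})$: pairs $(c_j,c_{j+1})$ at positions away from $\{m,m+1\}$ are unchanged; the block $\{m,m+1\}$ disappears; and the new pair at reduced position $m$ is $(c_m,c_{m+3})$, which equals the old pair at position $m+2$ because $c_{m+2}=c_m$. Under the relabeling $a\mapsto a-2$ for $a>m$ encoded in $\setminus^\bullet$, the equivalence classes match exactly. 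Each fibre of this reduction has at most $N$ elements (one only has to choose the value of $c_{m+1}$ to reconstruct $\Gamma$, since $c_{m+2}=c_m$ is forced), hence
$$\#\{\Gamma\in P^{-1}(\pi):c_m=c_{m+2}\}\leq N\cdot\#P^{-1}(\tilde{\pi}).$$

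For the exceptional case $c_m\neq c_{m+2}$, I would argue in two stages. First, the triple $(c_m,c_{m+1},c_{m+2})$ must satisfy $(c_m,c_{m+1})\sim(c_{m+1},c_{m+2})$ with $c_m\neq c_{m+2}$, and by (\ref{AequivalenzrelationBedingung3}) there are only $o(N^2)$ such triples. Second, with the triple fixed, I would complete the colouring by imitating the successive argument of Lemma \ref{ObereAbschatzung} around the cycle in the order $m+2,m+3,\ldots,k,1,2,\ldots,m-1$: each of the $k/2-1$ remaining blocks of $\pi$ contributes one free colour choice of size $\leq N$ at its first visit and one constrained colour choice of size $\leq B$ at its second, via (\ref{AequivalenzrelationBedingung2}). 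The boundary position $m-1$ is last in this order, so its block-partner is necessarily processed earlier; and because its second endpoint $c_m$ is already fixed, $m-1$ produces only a compatibility check rather than a new colour. Hence the $k-3$ new colours split as $k/2-1$ free plus $k/2-2$ constrained, yielding
$$\#\{\Gamma\in P^{-1}(\pi):c_m\neq c_{m+2}\}\leq o(N^2)\cdot N^{k/2-1}\cdot B^{k/2-2}=o(N^{1+k/2}).$$

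Adding the two estimates gives the proposition. The part that will need the most care, I expect, is the bookkeeping in the exceptional case: one has to verify carefully that the boundary position $m-1$ is always a second visit in the chosen processing order, so that the split of the $k-3$ new colours into $k/2-1$ free and $k/2-2$ constrained is correct; a mis-count here would push the upper bound above the $o(N^{1+k/2})$ budget and destroy the whole estimate.
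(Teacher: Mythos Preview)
Your proposal is correct and follows essentially the same approach as the paper: split $P^{-1}(\pi)$ according to whether $c_m=c_{m+2}$, reduce to $P^{-1}(\tilde\pi)$ with fibres of size $\le N$ in the first case, and in the second case use condition~(\ref{AequivalenzrelationBedingung3}) for the triple $(c_m,c_{m+1},c_{m+2})$ followed by a Lemma~\ref{ObereAbschatzung}-style successive colouring yielding $o(N^2)\cdot N^{k/2-1}\cdot B^{k/2-2}$. Your explicit bookkeeping for the boundary position $m-1$ is a detail the paper leaves implicit, but the argument and the resulting bounds are identical.
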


\begin{proof}
For an element $\Gamma:= \big((g_1,...,g_k),\pi,c \big)\in P^{-1}(\pi)$ we have a look at the nodes $g_m,g_{m+1}$ and $g_{m+2}$. 

\begin{description}

	\item[Case (i):] $g_m=g_{m+2}$. Then $\tilde{\Gamma}:=\big((g_1,...,g_m,\overset{\wedge}g_{m+1},\overset{\wedge}g_{m+2},g_{m+3},...,g_k),\tilde{\pi},c \big)\in P^{-1}(\tilde{\pi})$ and there are at most $N$ choices to colour node $g_{m+1}$. This shows
	
	$$  \# P^{-1}(\pi) \leq N\cdot \# P^{-1}(\tilde{\pi}) \ . $$

	\item[Case (ii):] $g_m\neq g_{m+2}$. According to condition (\ref{AequivalenzrelationBedingung3}) there are $o(N^2)$ possibilities to colour the nodes $g_m,g_{m+1}$ and $g_{m+2}$. Analogously to the proof of Lemma \ref{ObereAbschatzung} we see that if node $g_{l+1}$ has to be coloured there are two possibilities:
	
	\begin{enumerate}
	
		\item[1)] $\exists \ m+1<s<l: \{ s,l \}\in\pi$. Then $c(g_{l+1})$ is constrained by condition (\ref{AequivalenzrelationBedingung2}). This gives at most $B$ values, since $g_s,g_{s+1}$ and $g_l$ are already coloured as in the proof of Lemma \ref{ObereAbschatzung}.
		
		\item[2)] 	$\forall \ m+1<s<l: \{ s,l \}\not\in\pi$. Then $c(g_{l+1})$ is not constrained and can take at most $N$ values.
	
	\end{enumerate}	

\noindent Since $\tilde{\pi}$ has $k/2-1$ blocks there are $N^{k/2-1}$ possibilities to freely colour $k/2-1$ nodes. The remaining 

$$k- \big(\frac{k}{2}-1\big)-3=\frac{k}{2}-2$$

\noindent nodes are constrained by condition (\ref{AequivalenzrelationBedingung2}), which gives less than $B^{k/2-2}$ possibilities. This shows

$$ \# P^{-1}(\pi) \leq o(N^2) \cdot N^{k/2-1} \cdot B^{k/2-2} = o(N^{1+k/2}) \ , $$
	
	\noindent since $B$ is assumed to be independent of $N$.
\end{description}
\end{proof}

\begin{remark}
Consider the Wigner case. If there is a block $\{m,m+1  \}\in\pi$, then $g_m\neq g_{m+2}$ is not allowed. Therefore Proposition \ref{BlaetterEntfernungsProposition} reduces to case (i) which states 

$$ \# P^{-1}(\pi) \leq N\cdot \# P^{-1}(\tilde{\pi}) \ . $$

\end{remark}

In order to be able to give an upper estimate for non-adopted sequences later on, we will prove the following Corollary.

\begin{corollary}\label{BlaetterEntfernungsKorollarFurNichtadaptierteSequenzen}
Let $\pi\in\mathcal{B}_{\frac{k}{2}}$ be a non-crossing pair-partition. Let $\overline{E_k}$ denote the subset

$$ \mathcal{E}^{(N)}_k \supset \overline{E_k} := \big\{ \Gamma\in\mathcal{E}^{(N)}_k | \ (g_1,...,g_k) \textnormal{ is not  $\pi-$adopted} \big\} \ .  $$

\noindent We define $\Pi:= P|_{\overline{E_k}}:\overline{E_k}\longrightarrow\mathcal{B}_{\frac{k}{2}}$, where $P$ denotes the projector (\ref{ProjektorP}). If there is a block $\{ m,m+1 \}\in\pi$, then

$$ \# \Pi^{-1}(\pi) \leq N\cdot \# \Pi^{-1}(\tilde{\pi}) + o(N^{1+k/2}) $$

\noindent where $\tilde{\pi}:=\pi\backslash^\bullet\{ m,m+1 \}\in\mathcal{B}_{\frac{k-2}{2}}$. 

\end{corollary}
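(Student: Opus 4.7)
The plan is to adapt the case analysis in the proof of Proposition~\ref{BlaetterEntfernungsProposition}, while carefully tracking whether the reduced cycle $\tilde\Gamma$ still lies in $\Pi^{-1}(\tilde\pi)$ (i.e.\ remains non-$\tilde\pi$-adopted) or only in $P^{-1}(\tilde\pi)$. Given $\Gamma = \bigl((g_1,\ldots,g_k),\pi,c\bigr) \in \Pi^{-1}(\pi)$, I would partition $\Pi^{-1}(\pi)$ into three disjoint subsets according to the local structure at the positions $m, m+1, m+2$: (A) $g_m \neq g_{m+2}$; (B) $g_m = g_{m+2}$ and $g_{m+1} = g_s$ for some $s \neq m+1$; (C) $g_m = g_{m+2}$ and $g_{m+1}$ is distinct from every other node.

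Cases (A) and (B) should each produce an $o(N^{1+k/2})$ error term. Case (A) is handled verbatim as in Proposition~\ref{BlaetterEntfernungsProposition}: condition~(\ref{AequivalenzrelationBedingung3}) gives $o(N^2)$ colourings of the triple $(g_m, g_{m+1}, g_{m+2})$, and the remaining $k-3$ nodes contribute at most $N^{k/2-1}\cdot B^{k/2-2}$ (exactly as in the second case of that proof). In case (B), the colour $c(g_{m+1})$ is forced to equal $c(g_s)$ for some $s \neq m+1$, so there are at most $k-1$ possibilities for it; since $\tilde\Gamma$ lies \emph{a priori} in $P^{-1}(\tilde\pi)$ and Lemma~\ref{ObereAbschatzung} bounds $\#P^{-1}(\tilde\pi)$ by $N^{k/2}\cdot B^{k/2-2}$, the total contribution is $O(N^{k/2}) = o(N^{1+k/2})$.

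The decisive case is (C), where the proof departs in spirit from Proposition~\ref{BlaetterEntfernungsProposition}. Here I would invoke the contrapositive of the extension step in the existence part of the proof of Theorem~\ref{SatzGenau1adaptierteFolge}: starting from a $\tilde\pi$-adopted sequence, inserting a \emph{new} node between two equal entries $g_m = g_{m+2}$ always yields a $\pi$-adopted sequence. Consequently, if $(g_1,\ldots,g_k)\in\overline{E_k}$ is non-$\pi$-adopted in the setup of case (C), then its reduction $\tilde g = (g_1,\ldots,g_m,g_{m+3},\ldots,g_k)$ is forced to be non-$\tilde\pi$-adopted, so $\tilde\Gamma \in \Pi^{-1}(\tilde\pi)$. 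Since $c(g_{m+1})$ can then be freely chosen from at most $N$ colours, this case contributes at most $N\cdot \#\Pi^{-1}(\tilde\pi)$. Summing the three bounds delivers the claim.

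The main obstacle is precisely the contrapositive step in case (C): one has to be sure that the explicit extension recipe in the existence part of Theorem~\ref{SatzGenau1adaptierteFolge} is genuinely \emph{sufficient} for $\pi$-adoption, so that its failure really does force non-$\tilde\pi$-adoption of the reduction. Once this is granted, the remaining counting proceeds in direct analogy with Proposition~\ref{BlaetterEntfernungsProposition}, modulo the added bookkeeping of the adoption property.
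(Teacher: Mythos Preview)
Your argument is correct and uses the same case split on $g_m$ versus $g_{m+2}$ as the paper; the paper simply merges your (B) and (C) into a single Case~(i), asserting $\tilde\Gamma\in\Pi^{-1}(\tilde\pi)$ ``by definition (of $\pi$-adopted sequences)'' without isolating the sub-case where $g_{m+1}$ repeats an existing node. Your worry about the contrapositive in~(C) is unfounded: the existence half of Theorem~\ref{SatzGenau1adaptierteFolge} asserts precisely that extending a $\tilde\pi$-adopted sequence by a fresh node yields a $\pi$-adopted one, so your more careful three-way decomposition only makes explicit a step the paper takes for granted.
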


\begin{proof}
We take an element $\Gamma:= \big((g_1,...,g_k),\pi,c \big)\in \Pi^{-1}(\pi)$ and we have a look at the nodes $g_m,g_{m+1}$ and $g_{m+2}$. 

	\begin{description}

		\item[Case (i):] $g_m=g_{m+2}$. Then $\tilde{\Gamma}:=\big((g_1,...,g_m,\overset{\wedge}g_{m+1},\overset{\wedge}g_{m+2},g_{m+3},...,g_k),\tilde{\pi},c\big)\in \Pi^{-1}(\tilde{\pi})$ by definition (of $\pi-$adopted sequences). There are at most $N$ possibilities to colour $g_{m+1}$. That shows
		
		$$ \# \Pi^{-1}(\pi) \leq N\cdot \# \Pi^{-1}(\tilde{\pi}) \ . $$

		\item[Case (ii):] $g_m\neq g_{m+2}$. Then the same argument as in case (ii) in the proof of Proposition \ref{BlaetterEntfernungsProposition} gives 
		
		$$ \# \Pi^{-1}(\pi) \leq o(N^2) \cdot N^{k/2-1} \cdot B^{k/2-2} = o(N^{1+k/2}) \ . $$
	
	\end{description}

\end{proof}

\begin{lemma}\label{KreuzendePartitionenSindIrrelevant}
Let $\pi\in\mathcal{P}_2(\{ 1,...,k \})$ be a crossing pair-partition. Then one has

$$ \frac{1}{N^{k/2+1}} \cdot \# P^{-1}(\pi) \overset{N\rightarrow\infty}\longrightarrow 0 \ . $$

\end{lemma}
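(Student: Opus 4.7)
The proof proceeds by induction on $k$. The base case $k=4$ handles the unique crossing pair-partition $\pi=\{\{1,3\},\{2,4\}\}$ directly: since the four indices $1,2,3,4$ are all distinct, the block $\{1,3\}$ encodes the constraint $(c_1,c_2)\sim(c_3,c_4)$, and applying condition (\ref{AequivalenzrelationBedingung1}) with $p:=c_1$ bounds the triple $(c_2,c_3,c_4)$ by $o(N^2)$ for each of the $N$ choices of $c_1$, giving $\#P^{-1}(\pi)\leq N\cdot o(N^2)=o(N^3)=o(N^{k/2+1})$.

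For the inductive step, let $\pi$ be a crossing pair-partition on $\{1,\ldots,k\}$ with $k\geq 6$. The plan is to split into two cases according to whether $\pi$ contains a block of the form $\{m,m+1\}$. If it does, I would apply Proposition \ref{BlaetterEntfernungsProposition} to reduce to $\tilde\pi:=\pi\setminus^\bullet\{m,m+1\}$, and argue that $\tilde\pi$ remains crossing: since $\pi$ is crossing there are blocks $\{a,c\},\{b,d\}\in\pi$ with $a<b<c<d$, and these are distinct from the leaf $\{m,m+1\}$, so $\{a,b,c,d\}\cap\{m,m+1\}=\emptyset$; the relabeling $a\mapsto a-2$ for $a>m$ preserves order, hence $\tilde\pi$ still contains a crossing pair of blocks. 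The induction hypothesis then gives $\#P^{-1}(\tilde\pi)=o(N^{(k-2)/2+1})=o(N^{k/2})$, and Proposition \ref{BlaetterEntfernungsProposition} yields
$$\#P^{-1}(\pi)\leq N\cdot o(N^{k/2})+o(N^{1+k/2})=o(N^{k/2+1}).$$

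If $\pi$ has no block $\{m,m+1\}$, I would pick a block $\{a,b\}\in\pi$ with $a<b<k$, so that the four positions $a,a+1,b,b+1$ are distinct indices in $\{1,\ldots,k\}$, and apply condition (\ref{AequivalenzrelationBedingung1}) to the constraint $(c_a,c_{a+1})\sim(c_b,c_{b+1})$: for each of the $N$ choices of $c_a$, the triple $(c_{a+1},c_b,c_{b+1})$ admits $o(N^2)$ values, so the four colors jointly contribute $o(N^3)$ rather than the $N^3\cdot B$ they contribute in the Lemma \ref{ObereAbschatzung} scheme. The remaining $k-4$ colors are then treated exactly as in Lemma \ref{ObereAbschatzung} using condition (\ref{AequivalenzrelationBedingung2}), contributing at most $N^{k/2-2}\cdot B^{k/2-2}$ in the worst case, so that
$$\#P^{-1}(\pi)\leq o(N^3)\cdot N^{k/2-2}\cdot B^{k/2-2}=o(N^{k/2+1}),$$
since $B$ is a fixed constant.

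The main obstacle lies in the no-leaf case: the block $\{a,b\}$ must be chosen so that in the Lemma \ref{ObereAbschatzung} accounting the three colors $c_a,c_{a+1},c_b$ are genuinely \emph{free} (each contributing a factor $N$ rather than $B$), for only then does the $o(N^3)$ bound from condition (\ref{AequivalenzrelationBedingung1}) genuinely improve upon the estimate of Lemma \ref{ObereAbschatzung}. Verifying that such a block always exists in a leafless pair-partition requires a short combinatorial argument: assuming the contrary, every block $\{a,b\}$ would have the property that $b-1$ is the larger element of its own block; iterating this implication downward from $b=k$ eventually forces $b=2$ to be the larger element of its block, which would mean $\{1,2\}\in\pi$, contradicting the leafless assumption.
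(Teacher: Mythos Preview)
Your overall strategy matches the paper's: reduce to the leafless case via Proposition~\ref{BlaetterEntfernungsProposition} (you do this by induction, the paper by stripping all leaves at once), and in the leafless case apply condition~(\ref{AequivalenzrelationBedingung1}) to a well-chosen block. The leaf step and the base case are fine. The gap is in your combinatorial argument at the end. You correctly note that you need $c_a,c_{a+1},c_b$ all free in the Lemma~\ref{ObereAbschatzung} scheme; this means $b<k$, $b-1$ is the smaller element of its block, \emph{and} $a\le 2$ or $a-1$ is the smaller element of its block. But the negation of this is not ``every block has $b-1$ larger''; your descending chain only rules out that single failure mode and says nothing about $c_a$. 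If $c_a$ is in fact constrained, the remaining $k-4$ colours contribute $N^{k/2-1}B^{k/2-3}$ rather than $N^{k/2-2}B^{k/2-2}$, and your final bound degrades to $o(N^{k/2+2})$, which is useless.

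The existence claim you need is nonetheless true: let $\ell_1$ be the smallest ``larger'' element and take its block $\{s,\ell_1\}$. Then $s-1<s<\ell_1$ forces $s$ and $s-1$ to be smaller, $\ell_1-1<\ell_1$ is smaller, and $\ell_1<k$ since $k$ is itself a larger element and there are at least two of them; leaflessness gives $\ell_1-s\ge 2$, so the four positions are distinct. The paper avoids this bookkeeping altogether by choosing instead the block $\{m,m+l\}$ of \emph{minimal width} $l$: minimality forces every position in $\{m+1,\dots,m+l-1\}$ to have its partner outside this interval, so after spending $o(N^{l+1})$ on the colours $c_m,\dots,c_{m+l+1}$ (via condition~(\ref{AequivalenzrelationBedingung1}) and free colouring of the interior), each of the $l-1$ exterior partners is later constrained by condition~(\ref{AequivalenzrelationBedingung2}), the remaining $k/2-l$ fresh blocks contribute one factor $N$ each, and the count balances to $o(N^{k/2+1})$ without ever asking which positions are free in the Lemma~\ref{ObereAbschatzung} accounting.
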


\begin{proof}
We first suppose that for all $m\in\{ 1,...,k \}$ the partition $\pi$ does not contain a block of the form $\{ m,m+1\}$.

\noindent Choose a \emph{minimal crossing} block, i.e.

$$ \{ m,m+l \} := \min_r \big\{ \{ m,m+r \}\in\pi \ | \ m\in\{ 1,...,k \}  \big\} \ . $$

\noindent We colour node $g_{m}$ arbitrarily, which gives $N$ possible values. Because of condition (\ref{AequivalenzrelationBedingung1}) and

$$ (c(g_m),c(g_{m+1})) \sim (c(g_{m+l}),c(g_{m+l+1})) \ ,  $$

\noindent there are most $o(N^2)$ possibilities to colour the nodes $g_{m+1},g_{m+l}$ and $g_{m+l+1}$. We colour any of the $l-2$ nodes in $\{ g_i | \ m+2\leq i \leq m+l-1 \}$ arbitrarily. That gives less than $N^{l-2}$ possibilities. Summarizing, all nodes in $\{ g_m,...,g_{m+l+1} \}$ are now coloured and we had less than 

 $$N\cdot o(N^2)\cdot N^{l-2}=o(N^{l+1})$$
 
\noindent possibilities to do so. To colour the other nodes successively, we consider the remaining $k/2-1$ blocks of $\pi$ ($\{m,m+l \}$ has already been used). Since $l$ was minimal in the above condition we find exactly $l-1$ blocks $B_1,...,B_{l-1}\in\pi$ with

$$ S:= \{ m+1,...,m+l-1 \} \ni m+s \in B_s =: \{ m+s, \beta_s \} \in\pi \textnormal{ and } \beta_s \not\in S \ .  $$

\noindent This means that all elements in $S$ \emph{cross} with elements outside $S$. The colour of $g_{\beta_s+1}$ is constrained by condition (\ref{AequivalenzrelationBedingung2}), because

$$  (c(g_{m+s}),c(g_{m+s+1})) \sim (c(g_{\beta_s}),c(g_{\beta_s+1})) \ .  $$

\noindent We further deduce that there are 

$$\frac{k}{2}-1-(l-1)=\frac{k}{2}-l$$

\noindent blocks $B_{l+1},...,B_{k/2}$ left which \emph{freely} colour $k/2-l$ nodes. This gives less than $N^{k/2-l}$ possibilities. The remaining 

$$k-(l+1)-\big(\frac{k}{2}-l\big)=\frac{k}{2}-1$$

\noindent nodes are then constrained by condition (\ref{AequivalenzrelationBedingung2}). This gives less than $B^{k/2-1}$ possible values. We conclude

$$ \# P^{-1}(\pi) \leq o(N^{l+1}) \cdot  N^{k/2-l} \cdot B^{k/2-1} = o(N^{k/2+1}) \ .  $$

\noindent If the partition $\pi$ consists of $s>0$ blocks of the form $\{ m,m+1 \}$ we use Proposition \ref{BlaetterEntfernungsProposition} to remove all these blocks. Then we gain an element $\tilde{\pi}\in\mathcal{P}_2(\{ 1,...,k-2s \})$ with $k-2s\geq 4$. There are no blocks of the form $\{ m,m+1 \}$ in $\tilde{\pi}$ and we have

$$ \# P^{-1}(\tilde{\pi}) \leq o(N^{(k-2s)/2+1}) $$

\noindent according to the case which has already been shown. The estimate in Proposition \ref{BlaetterEntfernungsProposition} gives

$$ \# P^{-1}(\pi) \leq N^s \cdot o(N^{(k-2s)/2+1}) + o(N^{k/2+1}) = o(N^{k/2+1}) \ . $$

\end{proof}

\begin{remark}
Lemma \ref{KreuzendePartitionenSindIrrelevant} shows that one can restict sum (\ref{MomentSummeUeberE}) to \emph{non-crossing} pair-partitions,

\begin{eqnarray}\label{SummeUberBaume}
\mu_k = \lim_{N\rightarrow\infty} \frac{1}{N^{1+k/2}} \sum_{\pi\in\mathcal{B}_{\frac{k}{2}}} \ \sum_{\Gamma\in\mathcal{E}^{(N)}_k(\pi)} \alpha(\Gamma) \cdot \mathbb{E}(X_N(\Gamma)) \ .
\end{eqnarray}
\end{remark}

We are ready to prove the main result of this section, that is how to calculate moments of the random matrix (\ref{Zufallsmatrix}). Prior we need to define what we mean by \emph{integration over trees}.

\begin{definition}\label{DefinitionIntgralUeberBaume}
For an even $k\in\mathbb{N}$ let $\pi\in\mathcal{B}_{\frac{k}{2}}$ be a non-crossing pair-partition and let $(\mathcal{G}_\pi;\mathcal{K})$ denote the $\pi-$adopted graph from Definition \ref{DefinitionAdaptierterGraph}  (\emph{rooted tree}). For a (bounded) Riemann-integrable function $\alpha: [0,1]\rightarrow\mathbb{R}$ we define

\begin{eqnarray}\label{DefinitionIntegral-J}
J_\alpha(\pi):= \underbrace{\int_0^1 \cdots \int_0^1}_{(\frac{k}{2}+1)-\textnormal{times}} \prod_{\{ i,j\}: \{ g_i,g_j \}\in\mathcal{K}} \alpha^2 (|x_i-x_j|) \ dx_1 \ \ldots \ dx_{\frac{k}{2}+1} \ .
\end{eqnarray}

\noindent $J_\alpha$ is called the $\alpha^2-$integral over $\pi$.

\end{definition}

\begin{theorem}\label{TheoremMomentGleichSummeIntegrale}
Let $\alpha:[0;1]\rightarrow\mathbb{R}$ be a bounded, Riemann-integrable function. Let

$$ M^{(N)} := \frac{1}{\sqrt{N}} \left( \alpha \big(\frac{|i-j|}{N} \big) \cdot X^{(N)}_{ij} \right)_{(1\leq i,j \leq N)}  $$

\noindent be a family of symmetric random matrices the entries $X^{(N)}_{ij}\equiv X^{(N)}_{ji}$ of which are centered and all moments exist. Let $\sim$ be an equivalence relation which satisfies conditions (\ref{AequivalenzrelationBedingung1})-(\ref{AequivalenzrelationBedingung3}) (and of course $(p,q)\sim (q,p)$ for all $p,q\in \{ 1,...,N \}$). We set

$$ \mu_k :=  \lim_{N\rightarrow\infty}  \frac{1}{N} \cdot\mathbb{E}\Big(\textnormal{tr} \big( (M^{(N)})^k\big)\Big) \ . $$

\noindent If $X^{(N)}_{pq}$ and $X^{(N)}_{rs}$ are assumed to be independent whenever $(p,q)\not\sim (r,s)$, then these moments can be computed by

\begin{eqnarray}\nonumber
\mu_k = \left\lbrace \begin{array}{cc}
\sum_{\pi\in\mathcal{B}_{\frac{k}{2}}} J_\alpha(\pi) & \textnormal{for $k$ even}
\\ 0 & \textnormal{otherwise.}
\end{array}\right. 
\end{eqnarray}

\end{theorem}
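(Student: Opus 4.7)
\medskip
\noindent\textbf{Proof proposal.} The plan is to start from the identity
$$\mu_k^{(N)} = \frac{1}{N^{1+k/2}} \sum_{\pi\in\mathcal{P}(k)} \sum_{\Gamma\in\mathcal{E}_k^{(N)}(\pi)} \alpha(\Gamma)\cdot \mathbb{E}(X_N(\Gamma))$$
established in the introduction, and to peel off, step by step, the partitions and the labellings that do not contribute in the limit. First, by Proposition \ref{KorollarNurPaarPartitionenBeitrag}, only partitions with exactly $r=k/2$ blocks survive; in particular, odd $k$ forces $\mu_k=0$. Next, Lemma \ref{KreuzendePartitionenSindIrrelevant} discards the crossing pair-partitions, leaving the sum running over $\mathcal{B}_{k/2}$ only. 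For $k$ even this reduces the problem to proving, for each fixed $\pi\in\mathcal{B}_{k/2}$, the identity
$$\frac{1}{N^{1+k/2}} \sum_{\Gamma\in\mathcal{E}_k^{(N)}(\pi)} \alpha(\Gamma)\cdot \mathbb{E}(X_N(\Gamma)) \overset{N\to\infty}\longrightarrow J_\alpha(\pi).$$

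My second step is to restrict the sum further to those $\Gamma$ whose underlying sequence $(g_1,\ldots,g_k)$ is $\pi$-adopted. The idea is to iterate Corollary \ref{BlaetterEntfernungsKorollarFurNichtadaptierteSequenzen}: Lemma \ref{BaumeHabenBlatter} guarantees a block of the form $\{m,m+1\}\in\pi$, and each application of the corollary replaces $\pi\in\mathcal{B}_{k/2}$ by $\tilde\pi\in\mathcal{B}_{(k-2)/2}$ at the cost of a factor $N$ and an additive $o(N^{1+k/2})$ error. After $k/2-1$ iterations one reaches an element of $\mathcal{B}_1$ for which (by the definition of adopted sequences with the forced injectivity of $c$) there are no non-adopted sequences at all, i.e.\ the base count vanishes. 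Collecting the $o$-terms and using the uniformly bounded $\alpha$ and the moment bound $|\mathbb{E}(X_N(\Gamma))|\le R_k$ from Hölder's inequality, the total non-adopted contribution is $o(N^{1+k/2})$.

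The next step is to evaluate $\alpha(\Gamma)\cdot \mathbb{E}(X_N(\Gamma))$ on adopted sequences. By Theorem \ref{SatzGenau1adaptierteFolge} the adopted sequence is unique up to isomorphism, with $\tfrac{k}{2}+1$ distinct nodes; hence the $\Gamma\in\mathcal{E}_k^{(N)}(\pi)$ with adopted underlying sequence are parametrised precisely by the injective colourings $c:\mathcal{G}_\pi \hookrightarrow \{1,\ldots,N\}$. Lemma \ref{BlockegebenHinundZuruck} shows that for every block $\{m,m+l\}\in\pi$ one has $c_{m}=c_{m+l+1}$ and $c_{m+1}=c_{m+l}$, so by symmetry $X_{c_{m+l}c_{m+l+1}}=X_{c_mc_{m+1}}$; the product $X_N(\Gamma)$ therefore collapses to $\prod_{e\in\mathcal{K}_\pi} X_e^2$ with one factor per edge of the adopted graph. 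Because pairs from distinct blocks of $\pi$ are inequivalent under $\sim$ (by the defining property of $\mathcal{E}_k^{(N)}(\pi)$), the random variables $X_e$ indexed by distinct edges are independent, and each has variance one, so
$$\mathbb{E}(X_N(\Gamma)) = \prod_{e\in\mathcal{K}_\pi} \mathbb{E}(X_e^2)=1.$$
The same pairing argument identifies the weight as $\alpha(\Gamma)=\prod_{\{g_i,g_j\}\in\mathcal{K}_\pi}\alpha^2\bigl(|c(g_i)-c(g_j)|/N\bigr)$.

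Substituting these two identities, the contribution of the adopted sequences becomes
$$\frac{1}{N^{1+k/2}} \sum_{c:\mathcal{G}_\pi\hookrightarrow\{1,\ldots,N\}} \ \prod_{\{g_i,g_j\}\in\mathcal{K}_\pi} \alpha^2\!\Bigl(\tfrac{|c(g_i)-c(g_j)|}{N}\Bigr),$$
which is a Riemann sum, in the $\tfrac{k}{2}+1$ variables $x_s=c(g_s)/N\in[0,1]$, for the integrand defining $J_\alpha(\pi)$. Dropping the injectivity constraint only removes a set of $O(N^{k/2})$ tuples, hence is absorbed by the $N^{-(k/2+1)}$ factor. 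Since $\alpha$ is bounded and Riemann-integrable, $\alpha^2$ is as well, and so is the finite product appearing in the integrand; the Riemann sum therefore converges to $J_\alpha(\pi)$, which closes the argument. I expect the principal technical difficulty to lie in the bookkeeping of the inductive removal of leaf-blocks together with the accumulated $o(N^{1+k/2})$ errors, and in handling the injectivity constraint of $c$ when passing to the Riemann sum; the mere convergence of the Riemann sum and the combinatorial identification of pairings with edges of the adopted graph are comparatively straightforward given the preceding lemmas.
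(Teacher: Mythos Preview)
Your outline is essentially the paper's own strategy: Proposition~\ref{KorollarNurPaarPartitionenBeitrag} and Lemma~\ref{KreuzendePartitionenSindIrrelevant} reduce to $\mathcal{B}_{k/2}$, iteration of Corollary~\ref{BlaetterEntfernungsKorollarFurNichtadaptierteSequenzen} discards the non-adopted sequences, Lemma~\ref{BlockegebenHinundZuruck} forces $\mathbb{E}(X_N(\Gamma))=1$, and the remaining sum is a Riemann sum for $J_\alpha(\pi)$. Your treatment of the Riemann sum (passing from injective to unrestricted tuples at a cost of $O(N^{k/2})$) is in fact cleaner than the paper's somewhat opaque splitting into $S_1$ and $S_2$.

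There is, however, a genuine slip at the base of your induction. You assert that for $\tilde\pi=\{\{1,2\}\}\in\mathcal{B}_1$ ``there are no non-adopted sequences at all'' because of the injectivity of $c$. This is false: the injectivity of $c$ is imposed on the \emph{set} $\{g_1,\dots,g_k\}$, not on the sequence of positions. The non-adopted case $g_1=g_2$ gives a one-element set on which every map is injective, so there are $N$ such $\Gamma$'s (one for each colour). The base count is therefore $N$, not zero. This does not break the argument, since $N^{k/2-1}\cdot N = N^{k/2}=o(N^{k/2+1})$; but you must say so explicitly. The paper closes this step by appealing to condition~(\ref{AequivalenzrelationBedingung3}) to obtain $\#\Pi^{-1}(\tilde\pi)=o(N^2)$, though the direct count $N$ already suffices.

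A second, more minor point: you write that the $\Gamma\in\mathcal{E}_k^{(N)}(\pi)$ with adopted underlying sequence are parametrised \emph{precisely} by the injective colourings $c:\mathcal{G}_\pi\hookrightarrow\{1,\dots,N\}$. Strictly speaking the defining biconditional in $\mathcal{E}_k^{(N)}(\pi)$ also requires that edges in \emph{different} blocks of $\pi$ be inequivalent under~$\sim$, and an arbitrary injective colouring need not guarantee this. Condition~(\ref{AequivalenzrelationBedingung2}) shows that the exceptional colourings number at most $O(N^{k/2})$, so they are harmless; you should acknowledge this when you pass to the Riemann sum. The paper is equally informal at this point.
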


\begin{proof}
According to Proposition \ref{KorollarNurPaarPartitionenBeitrag} one has $\mu_{2k+1}=0$ for all $k\in\mathbb{N}$. Let $k$ from now on be even. In order to further analyze sum (\ref{SummeUberBaume}) we define

$$ \mathcal{E}^{(N)}_k \supset E_k := \big\{ \Gamma\in\mathcal{E}^{(N)}_k | (g_1,...,g_k) \textnormal{ is $\pi-$adopted} \big\} \ .   $$

\noindent We first show that

\begin{eqnarray}\label{NichtAdaptierteFolgenSindEgal}
\# \Pi^{-1}(\pi) \leq o(N^{k/2+1}) \ ,
\end{eqnarray}

\noindent where 

$$\Pi:= P|_{\overline{E_k}}:\overline{E_k}\longrightarrow\mathcal{B}_{\frac{k}{2}}$$

\noindent denotes the projector defined in Corollary \ref{BlaetterEntfernungsKorollarFurNichtadaptierteSequenzen}. Furthermore $\alpha$ is bounded on $[0;1]$. Therefore we again have
	
	$$ \alpha(\Gamma) \leq A=A(k) < \infty \ \qquad  \forall \ \Gamma\in\mathcal{E}_k \ . $$

\noindent This and \textsc{H\"older's} inequality imply that sum (\ref{SummeUberBaume}) can be restricted to $\Gamma\in E_k$ since

\begin{eqnarray}\nonumber
\frac{1}{N^{1+k/2}} \sum_{\Gamma\in\mathcal{E}^{(N)}_k(\pi)\backslash E_k} \alpha(\Gamma) \cdot \mathbb{E}(X_N(\Gamma))
&\leq& A\cdot R_k \cdot \frac{1}{N^{1+k/2}}\cdot \#\big(\mathcal{E}^{(N)}_k(\pi)\backslash E_k \big)
\\\nonumber &=&  A\cdot R_k \cdot \frac{1}{N^{1+k/2}} \cdot \# \Pi^{-1}(\pi)
\\\nonumber &\overset{N\rightarrow\infty}\longrightarrow& 0 \ . 
\end{eqnarray}

\noindent According to Lemma \ref{BaumeHabenBlatter} every $\pi\in\mathcal{B}_{\frac{k}{2}}$ has a block of the form $\{ m,m+1\}$. Therefore we can apply Corollary \ref{BlaetterEntfernungsKorollarFurNichtadaptierteSequenzen} as often as $k/2-1$ times. We arrive at $\tilde{\pi}=\{ \{ 1,2 \}\} \in\mathcal{B}_1$ and the estimate

$$ \# \Pi^{-1}(\pi) \leq N^{k/2-1} \cdot \# \Pi^{-1}(\tilde{\pi}) + o(N^{k/2+1}) $$

\noindent is valid. Since $(G_1,G_2)$ is not $\tilde{\pi}-$adopted, condition (\ref{AequivalenzrelationBedingung3}) gives the equality 

$$\# \Pi^{-1}(\tilde{\pi}) = o(N^2) \ .  $$

\noindent Therefore, inequality (\ref{NichtAdaptierteFolgenSindEgal}) is valid.

Next we use Lemma \ref{BlockegebenHinundZuruck} which states that

$$ X^{(N)}_{i_m i_{m+1}} \cdot X^{(N)}_{i_{m+l} i_{m+l+1}} = \big(X^{(N)}_{i_m i_{m+1}}\big)^2 $$

\noindent for all blocks $\{ m,m+l \}\in\pi$. Here we are using the notation $i_s=c(g_s)$ where $(g_1,...,g_k)$ denotes the unique $\pi-$adopted sequence. Because the random variables $X^{(N)}_{ij}$ have unit variance, it follows that

$$ \Gamma\in E_k \Longrightarrow \mathbb{E}(X_N(\Gamma)) = 1 \ , $$

\noindent since every random variable appears exactly twice. For $\Gamma_1,...,\Gamma_l\in E_k$ we set

$$ \alpha(\{ \Gamma_1,...,\Gamma_l \}):= \sum_{j=1}^l \alpha (\Gamma_j) \ . $$

\noindent  We now can rewrite sum (\ref{SummeUberBaume}) as

$$ \mu_k = \lim_{N\rightarrow\infty} \sum_{\pi\in\mathcal{B}_{\frac{k}{2}}} \ \frac{1}{N^{1+k/2}} \cdot \alpha(P|_E^{-1}(\pi)) \ , $$

\noindent where $P|_E$ denotes projector (\ref{ProjektorP}) restricted to the set $E_k$. It remains to show that

\begin{eqnarray}\label{IntegralGleichSumme}
J_\alpha(\pi) = \lim_{N\rightarrow\infty} \frac{1}{N^{1+k/2}} \cdot \alpha(P|_E^{-1}(\pi)) \ .
\end{eqnarray}

Now $\alpha$ is Riemann-integrable and therefore $\alpha^2$ is also Riemann-integrable since $[0;1]$ is compact. We can therefore approximate integral (\ref{DefinitionIntegral-J}) by Riemann-sums (using the equidistant partition of $[0;1]$). Using the notation

\begin{eqnarray}\nonumber
S_1&:=& \big\{ (r_1,...,r_{\frac{k}{2}+1})\in\{ 1,...,N \}^{k/2+1} | \ \exists\textnormal{ colour } c:\{ g_1,...,g_k\}\hookrightarrow \{ 1,...,N\} :  
\\\nonumber && r_s = c(g_s) \textnormal{ and $(g_1,...,g_k)$ is $\pi-$adopted} \big\} \textnormal{ and}
\\\nonumber S_2&:=& \{ 1,...,N \}^{k/2+1} \backslash S_1
\end{eqnarray}

\noindent we get

\begin{eqnarray}\nonumber
J_\alpha(\pi) &=& \int_0^1 \cdots \int_0^1 \prod_{\{ i,j\}: \{ g_i, g_j \}\in\mathcal{K}} \alpha^2 (|x_i-x_j|) \ dx_1 \ \ldots \ dx_{\frac{k}{2}+1}
\\\nonumber &=&  \lim_{N\rightarrow\infty} \frac{1}{N^{k/2+1}}\cdot \sum_{r_1,...,r_{\frac{k}{2}+1} =1}^N \underbrace{\prod_{\{ i,j\}: \{ g_i, g_j \}\in\mathcal{K}}  \alpha(\frac{|r_i-r_j|}{N}) \cdot \alpha(\frac{|r_j-r_i|}{N})}_{=:K(r)}
\\\nonumber &=& \lim_{N\rightarrow\infty} \frac{1}{N^{k/2+1}}\cdot \sum_{(r_1,...,r_{\frac{k}{2}+1}) \in S_1} K(r) + \lim_{N\rightarrow\infty} \frac{1}{N^{k/2+1}}\cdot \sum_{(r_1,...,r_{\frac{k}{2}+1}) \in S_2} K(r) 
\\\label{AbschaetzungTheorem} &=& \lim_{N\rightarrow\infty} \frac{1}{N^{k/2+1}}\cdot \sum_{\Gamma\in E_k} \alpha(\Gamma) + \lim_{N\rightarrow\infty} \frac{1}{N^{k/2+1}}\cdot \sum_{\Gamma\in \mathcal{E}_k\backslash E_k} \alpha(\Gamma) 
\\\nonumber &\geq& \lim_{N\rightarrow\infty} \frac{1}{N^{k/2+1}}\cdot \sum_{\Gamma\in E_k} \alpha(\Gamma) 
\\\nonumber &=& \lim_{N\rightarrow\infty} \frac{1}{N^{1+k/2}} \cdot \alpha(P|_E^{-1}(\pi)) 
\end{eqnarray}

\noindent and it remains to show '$\leq$' in (\ref{IntegralGleichSumme}). Using again the above notation, inequation (\ref{NichtAdaptierteFolgenSindEgal}) implies

\begin{eqnarray}\nonumber
(\ref{AbschaetzungTheorem}) &\leq& \lim_{N\rightarrow\infty} \frac{1}{N^{k/2+1}}\cdot \sum_{\Gamma\in E_k} \alpha(\Gamma) + \lim_{N\rightarrow\infty} \frac{1}{N^{k/2+1}} \cdot A \cdot \# (\mathcal{E}_k(\pi)\backslash E_k(\pi))
\\\nonumber &\leq& \lim_{N\rightarrow\infty} \frac{1}{N^{k/2+1}}\cdot \sum_{\Gamma\in E_k} \alpha(\Gamma)
\\\nonumber &=& \lim_{N\rightarrow\infty} \frac{1}{N^{k/2+1}}\cdot \alpha(P|_E^{-1}(\pi)) \ . 
\end{eqnarray}

\end{proof}

\begin{remark} Theorem \ref{TheoremMomentGleichSummeIntegrale} generalizes Schenker's and Schulz-Baldes' result in \cite{HSBS} since they consider the case $\alpha\equiv 1$. In this case 

$$ J_\alpha(\pi) \equiv 1 $$

\noindent is valid and we gain $\mu_k= \# \mathcal{B}_{\frac{k}{2}}$ for even $k$. The validity of the Semicircular Law is gained by the following
\end{remark}

\begin{lemma}\label{CatalanZahl}
Let $C_k$ denote the $k-$th \emph{Catalan-}number, which is

$$ C_k := \frac{1}{k+1} \cdot \left( \begin{array}{c}
2k \\ k
\end{array}\right) \ . $$

\noindent Then $\# \mathcal{B}_{k}=C_{k}$.

\end{lemma}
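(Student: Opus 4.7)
My plan is to establish the Catalan recurrence
$$ \#\mathcal{B}_{k+1} \;=\; \sum_{i=0}^{k} \#\mathcal{B}_{i}\cdot \#\mathcal{B}_{k-i} $$
together with the base case $\#\mathcal{B}_0 = 1$, and then conclude $\#\mathcal{B}_k = C_k$ by induction, since the Catalan numbers are the unique sequence satisfying the same recurrence with the same initial value. This last step is a standard identity, which can be verified either by comparing generating functions (solving $C(z)=1+zC(z)^2$) or by direct manipulation of the binomial coefficient appearing in the closed formula, so the real content lies in proving the recurrence.

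To set up the recurrence, I fix $\pi \in \mathcal{B}_{k+1}$, viewed as a non-crossing pair-partition of $\{1,\dots,2k+2\}$, and focus on the unique block of $\pi$ that contains the index $1$, say $\{1,m\}$. The non-crossing condition forces every other block of $\pi$ to lie entirely in $\{2,\dots,m-1\}$ or entirely in $\{m+1,\dots,2k+2\}$: if some block had one endpoint in each, it would cross $\{1,m\}$. Consequently $\{2,\dots,m-1\}$ must itself admit a pair-partition, so $m-2$ is even, i.e.\ $m = 2j$ for some $j\in\{1,\dots,k+1\}$.

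The restriction of $\pi$ to $\{2,\dots,2j-1\}$ is then a non-crossing pair-partition of a set of cardinality $2(j-1)$ and, after an order-preserving relabeling, corresponds to an element of $\mathcal{B}_{j-1}$; similarly the restriction to $\{2j+1,\dots,2k+2\}$ yields an element of $\mathcal{B}_{k+1-j}$. Conversely, any pair consisting of such sub-partitions together with the block $\{1,2j\}$ reassembles into a unique element of $\mathcal{B}_{k+1}$ whose $1$-block is $\{1,2j\}$. Summing over $j$ and substituting $i = j-1$ gives exactly the Catalan convolution.

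The only step that is genuinely non-trivial is the clean split after removing $\{1,m\}$, which is also the only place where the non-crossing hypothesis is used — everything else is bookkeeping and the standard induction on $k$. I do not expect any real obstacle here; the proof is essentially the classical bijective argument (an equivalent reformulation would replace the recurrence by an explicit bijection with Dyck paths of length $2k$, recording the first endpoint of each block as an up-step and the second as a down-step), but the recursive form is the most economical way to match the closed formula stated in the lemma.
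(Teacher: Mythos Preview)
Your argument is correct: splitting a non-crossing pair-partition at the block containing $1$ is exactly the standard way to obtain the Catalan convolution, and the induction together with the well-known uniqueness of the solution to $C_0=1$, $C_{k+1}=\sum_{i=0}^k C_i C_{k-i}$ finishes the job. The only small point worth making explicit is the base case $\#\mathcal{B}_0=1$ (the empty partition of the empty set), which you mention but do not justify; this is of course immediate.

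As for the comparison with the paper: the paper does not give a proof at all but simply refers to \cite{AGZ} and \cite{HSBS}. So your write-up is strictly more informative than what appears in the text. The recursive argument you chose is in fact the one typically found in those references (and is equivalent to the Dyck-path bijection you sketch at the end), so there is no methodological divergence to speak of.
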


\begin{proof}
See e.g. \cite{AGZ}, \cite{HSBS}.
\end{proof}

\newpage

\section{Applications: The Semicircular Law}\label{section:Applications}

In this section we discuss conditions which are necessary and sufficient for the validity of the Semicircular Law for random matrices $M^{(N)}$ defined in Theorem \ref{TheoremMomentGleichSummeIntegrale}. During the whole section, for simplicity, we say that the SCL is valid, if 

\begin{eqnarray}\nonumber
\mu_k^{(N)} \overset{N\rightarrow\infty}\longrightarrow 
\left\lbrace \begin{array}{cc}
C_{\frac{k}{2}} & \textnormal{for $k$ even}
\\ 0 & \textnormal{otherwise.}
\end{array}\right.
\end{eqnarray}

\noindent Weak convergence in probability, as explained in (\ref{IntroHalbkreisgesetzFormulierung}), will be shown in Theorem \ref{TheoremSchwacheKonvergenzinWkeitVollerFall}. Furthermore, we consider families of random matrices the entries of which are correlated in the sense of an equivalence-relation satisfying conditions (\ref{AequivalenzrelationBedingung1})-(\ref{AequivalenzrelationBedingung3}).

Let $A$ and $B$ denote random matrices the entries of which are independent, centralized and normalized. As an application of the results obtained in the first part of this section we will discuss random \emph{block-matrices}, i.e.

\begin{eqnarray}\nonumber
M^{(N)}:= \frac{1}{\sqrt{2N}} \cdot \left( \begin{array}{cc}
A & B \\
B^T & \pm A
\end{array} \right).
\end{eqnarray}

\noindent These (random) block-matrices turn out to be included in the developed theory. Finally, this section is concluded with the proof for the validity of the SCL (in the above sense) for random \emph{band-matrices} the band-width of which grows as $o(N)$. This case will be referred to as \emph{slow-growing} band-width.

\subsection{Integral kernels and the Semicircular Law}

\begin{lemma}\label{Reduktionslemma}
We assume all requirements from Definition \ref{DefinitionIntgralUeberBaume}. Furthermore we define

\begin{eqnarray}\nonumber
\varphi(x) &:=&\int_0^1 \alpha^2(|x-y|) \ dy \hspace{3ex}\textnormal{ and}
\\\nonumber J_\alpha(\pi|x_s) &:=& \underbrace{\int_0^1 \cdots \int_0^1}_{\frac{k}{2}-\textnormal{times}} \prod_{\{ i,j\}: \{ g_i, g_j \}\in\mathcal{K}} \alpha^2(|x_i-x_j|) \ dx_1 \ldots \ \overset{\wedge}{dx_s} \ \ldots \ dx_{\frac{k}{2}+1} \ .
\end{eqnarray}

\noindent Then for all blocks $\{ m,m+1 \}\in\pi$ the recursion

$$ J_\alpha(\pi) =   \int_0^1 \varphi(x_{m})\cdot J_\alpha(\tilde{\pi}|x_{m}) \ dx_{m}   $$

\noindent with $\tilde{\pi}=\pi\setminus^\bullet \{ m,m+1 \}$ is valid.
\end{lemma}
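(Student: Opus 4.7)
The plan is to exploit the fact that the block $\{m,m+1\}\in\pi$ corresponds to a leaf of the $\pi$-adopted graph and to peel that leaf off by integrating out the attached variable. By Theorem~\ref{SatzGenau1adaptierteFolge}, the unique $\pi$-adopted sequence $(G_1,\ldots,G_k)$ satisfies $G_m=G_{m+2}$ while $G_{m+1}\neq G_s$ for every $s\neq m+1$. Consequently, in the $\pi$-adopted graph $(\mathcal{G}_\pi,\mathcal{K}_\pi)$ the node $G_{m+1}$ occurs exactly once and is a leaf; its one incident edge is $\{G_m,G_{m+1}\}$ (identical as a simple edge to $\{G_{m+1},G_{m+2}\}$).

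Next, since $\alpha$ is bounded on the compact interval $[0;1]$, the integrand in~(\ref{DefinitionIntegral-J}) is bounded and Fubini applies. Writing $u$ for the integration variable attached to the leaf $G_{m+1}$ and $v$ for the variable attached to $G_m$, the leaf property ensures that the only factor of the product $\prod_{\{i,j\}:\{g_i,g_j\}\in\mathcal{K}_\pi}\alpha^2(|x_i-x_j|)$ that depends on $u$ is $\alpha^2(|u-v|)$. Performing the $u$-integration first gives
$$\int_0^1\alpha^2(|u-v|)\,du=\varphi(v),$$
by the very definition of $\varphi$.

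To finish, I would identify the residual multiple integral with $J_\alpha(\tilde{\pi}|v)$. Point~3 in the definition of a $\pi$-adopted sequence together with Theorem~\ref{SatzGenau1adaptierteFolge} says that the shortened sequence $(G_1,\ldots,G_m,\widehat{G}_{m+1},\widehat{G}_{m+2},G_{m+3},\ldots,G_k)$ is the unique $\tilde{\pi}$-adopted sequence for $\tilde{\pi}=\pi\setminus^\bullet\{m,m+1\}$. Hence the graph obtained from $(\mathcal{G}_\pi,\mathcal{K}_\pi)$ by deleting the leaf $G_{m+1}$ together with its single incident edge is (up to isomorphism) exactly the $\tilde{\pi}$-adopted graph, and a quick count confirms $\#\mathcal{G}_{\tilde{\pi}}=\frac{k}{2}$ nodes and $\#\mathcal{K}_{\tilde{\pi}}=\frac{k}{2}-1$ edges, as required. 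The residual integrand is therefore precisely the product used to define $J_\alpha(\tilde{\pi})$ and the residual integral, with $v$ kept free, equals $J_\alpha(\tilde{\pi}|v)$; renaming $v$ back to $x_m$ yields the claimed recursion. The only real obstacle here is notational: one must carefully reconcile the index $m$ (a position in the $\pi$-adopted sequence) with the node-label index $s$ used in the definition of $J_\alpha(\pi|x_s)$, and verify that no factor of the product silently depends on the leaf variable.
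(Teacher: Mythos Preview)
Your proposal is correct and follows essentially the same approach as the paper: both identify $G_{m+1}$ as a leaf of the $\pi$-adopted graph, apply Fubini to integrate out the corresponding variable first (yielding the factor $\varphi(x_m)$), and then recognize the residual integral as $J_\alpha(\tilde{\pi}|x_m)$. You are somewhat more explicit than the paper in justifying why the deletion of the leaf gives exactly the $\tilde{\pi}$-adopted graph, but the argument is the same.
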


\begin{proof}
According to Fubini we have

$$ J_\alpha(\pi)= \int_0^1 J_\alpha(\pi|x_s) \ dx_s \ \ \qquad \forall \ s\in \{ 1,..., \frac{k}{2}+1 \} \ . $$

\noindent By Lemma \ref{BaumeHabenBlatter} there exists at least one block $\{ m,m+1 \} \in\pi$. For $\tilde{\pi}=\pi\setminus^\bullet \{ m,m+1 \}$ let $(\mathcal{G}_{\tilde{\pi}}, \mathcal{K}_{\tilde{\pi}})$ denote the $\tilde{\pi}-$adopted graph from Definition \ref{DefinitionAdaptierterGraph}. Because $m+1$ is a \emph{leaf} which is only connected with $m$ (in the sense of the mentioned definition) we have

\begin{eqnarray}\nonumber
J_\alpha(\pi|x_m) &=& \underbrace{\int_0^1 \int_0^1 \ldots \int_0^1}_{(\frac{k}{2}-1)-\textnormal{times}} \ \left(  \int_0^1 \alpha^2(|x_m-x_{m+1}|) \ dx_{m+1} \right) \times 
\\\nonumber && \prod_{ \{ r,s\}: \{ g_r, g_s \}\in\mathcal{K}_{\tilde{\pi}}} \alpha^2(|x_r-x_s|) \ dx_1 \ \ldots \  \overset{\wedge}{dx_m} \ \overset{\wedge}{dx_{m+1}} \ \ldots \ dx_{\frac{k}{2}+1}
\\\nonumber &=&  \underbrace{\int_0^1 \int_0^1 \ldots \int_0^1}_{(\frac{k}{2}-1)-\textnormal{times}} \prod_{ \{ r,s\}: \{ g_r, g_s \}\in\mathcal{K}_{\tilde{\pi}} }\alpha^2(|x_r-x_s|) \ d(x_1, x_2,... \overset{\wedge}{x_m}, \overset{\wedge}{x_{m+1}},...,  x_{\frac{k}{2}+1})\cdot \varphi(x_m)
\\\nonumber &=& J(\tilde{\pi}|x_m)\cdot \varphi(x_m) \ ,
\end{eqnarray}

\noindent which completes the proof.
\end{proof}

\begin{remark}
Suppose that $\varphi(x)\equiv\varphi_0$. Then we obtain 

$$ J_\alpha(\pi) = \varphi_0^{k/2} \ \qquad \forall \ \pi\in\mathcal{B}_{\frac{k}{2}}   $$

\noindent by Lemma \ref{Reduktionslemma} and an induction. This will be a key observation in proving the SCL, see below.

\end{remark}

\begin{example}\label{BspPeriodischeBandMatrizen}
Let $f:[0;N]\rightarrow\mathbb{R}$ be a function and let 

$$A_N:=\frac{1}{\sqrt{N}}\cdot \left(f(|i-j|)\cdot X^{(N)}_{ij}\right) _{(1\leq i,j \leq N)}$$

\noindent be an ensemble of random matrices. The random variables $X^{(N)}_{ij}$ may be centered and have unit variance. Furthermore they may be correlated in the sense of an equivalence-relation satisfying conditions (\ref{AequivalenzrelationBedingung1})-(\ref{AequivalenzrelationBedingung3}). On $\{ 1,...,N \}$ we consider

$$ |i|_N := \min \{ i; N-i \} \ .  $$

\noindent The ensemble $A_N$ is called \emph{periodic} with bandwith $b_N$, if


$$ f = \chi_{[0;b_N]}(|\cdot|_N) \ .  $$

\noindent One can ask the question weather the SCL holds for a proportional growth, that is $b_N=\rho N$, where $0<\rho<\frac{1}{2}$ is a real number. This example was also discussed in \cite{BMP} for the case of independent random variables.

If we set $\alpha:= \chi_{[0;\rho]}+\chi_{[1-\rho;1]}$, then $\alpha$ is clearly bounded and Riemann-integrable. We have
$$ \varphi(x)=\int_0^1 \alpha^2(|x-y|) \ dy = \int_0^1 \alpha(|x-y|) \ dy = \rho+ \big( 1-(1-\rho) \big)\equiv 2\rho \ . $$

\noindent This implies 

$$J_{\alpha}(\pi)=(2\rho)^{k/2} $$

\noindent as the remark after Lemma \ref{Reduktionslemma} shows. Therefore the $k:=2m-$th moment of 

$$ \tilde{M}^{(N)} := \frac{1}{\sqrt{2\rho}} \cdot M^{(N)} \ , $$

\noindent where $M^{(N)}$ is the family of random matrices defined in Theorem \ref{TheoremMomentGleichSummeIntegrale}, is computed by

\begin{eqnarray}\nonumber
\mu_{2m}=\frac{1}{(2\rho)^m} \cdot \sum_{\pi\in\mathcal{B}_m} J_\alpha(\pi) =\frac{1}{(2\rho)^m} \cdot \sum_{\pi\in\mathcal{B}_m}(2\rho)^{m} =  {C_m}  \ ,
\end{eqnarray}

\noindent while the odd moments are zero. This implies the validity of the SCL for $\tilde{M}^{(N)}$. These matrices are indeed the periodic band-matrices with bandwith $b_N= \rho N$ since $\alpha(x)\in\{ 0,1 \}$ and 

\begin{eqnarray}\nonumber
\alpha(\frac{|i-j|}{N}) = 1 &\Longleftrightarrow & |i-j|\leq\rho N \textnormal{ or }  (1-\rho)N\leq |i-j| \leq N
\\\nonumber &\Longleftrightarrow&  |i-j|\leq\rho N \textnormal{ or }  N-|i-j| \leq \rho N
\\\nonumber &\Longleftrightarrow& \big||i-j|\big|_N= \min \{ |i-j|; N-|i-j| \} \leq \rho N \ .
\end{eqnarray}	

\noindent This example shows, that 

$$ \varphi(x) \equiv \varphi_0  $$

\noindent is sufficient for the validity of the SCL. In particular we get the SCL for the Wigner-case by starting with $\rho=\frac{1}{2}$. Furthermore, this example shows that we can take the limit $\rho\rightarrow\frac{1}{2}$ and \emph{also} get convergence (of the moments) to the \textsc{Catalan}-numbers.
\end{example}

\begin{remark}
The result from Example \ref{BspPeriodischeBandMatrizen} is well known, see \cite{BMP}.
\end{remark}

\noindent We will show next, that the coniditon $\varphi(x) \equiv \varphi_0$ is also necessary for the validity of the SCL.

\begin{theorem}\label{TheoremNotwendigeUndHinreichendeBedfuerSCL}
We consider a symmetric ensemble $M^{(N)}:=\frac{1}{\sqrt{N}} \left( \alpha(\frac{|i-j|}{N}) \cdot X^{(N)}_{ij} \right)_{(1\leq i,j \leq N)}$ with all requirements from Theorem \ref{TheoremMomentGleichSummeIntegrale}. Furthermore we define


$$ \varphi(x)=\int_0^1 \alpha^2(|x-y|) \ dy \textnormal{ and } \varphi_0 := \int_0^1 \varphi(x) \ dx \ . $$

\noindent Then the SCL for $\frac{1}{\sqrt{\varphi_0}}\cdot M^{(N)}$ holds if -- and only if -- 

$$ \varphi(x) \equiv \varphi_0 \ . $$

\noindent Furthermore, the moments $\mu_k$ of $M^{(N)}$ can in principle be computed by the formula given in Theorem \ref{TheoremMomentGleichSummeIntegrale}.

\end{theorem}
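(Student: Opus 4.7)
The plan is to translate the SCL for the normalized ensemble $\tilde M^{(N)}:=M^{(N)}/\sqrt{\varphi_0}$ into an identity among the integrals $J_\alpha(\pi)$, and then verify each direction separately. Since $\tilde M^{(N)}$ has $k$-th moment equal to $\mu_k^{(N)}/\varphi_0^{k/2}$, Theorem \ref{TheoremMomentGleichSummeIntegrale} together with Lemma \ref{CatalanZahl} show that the SCL for $\tilde M^{(N)}$ (in the moment-convergence sense of this section) is equivalent to
\[
\sum_{\pi\in\mathcal{B}_{k/2}}J_\alpha(\pi)\;=\;C_{k/2}\cdot\varphi_0^{k/2}\qquad\text{for every even }k\in\mathbb{N}.\tag{$\ast$}
\]

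The sufficiency ``$\Leftarrow$'' in $(\ast)$ is essentially the remark after Lemma \ref{Reduktionslemma}. Assuming $\varphi\equiv\varphi_0$, Lemma \ref{BaumeHabenBlatter} provides a block $\{m,m+1\}\in\pi$, and Lemma \ref{Reduktionslemma} collapses to $J_\alpha(\pi)=\varphi_0\cdot J_\alpha(\tilde\pi)$ with $\tilde\pi=\pi\setminus^\bullet\{m,m+1\}$. An induction on $k/2$ with base case $J_\alpha(\{\{1,2\}\})=\int_0^1\varphi=\varphi_0$ then gives $J_\alpha(\pi)=\varphi_0^{k/2}$ uniformly over $\pi\in\mathcal{B}_{k/2}$, and summation yields $(\ast)$.

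The real content is the necessity ``$\Rightarrow$''. The key observation is that the single case $k=4$ is already strong enough. The set $\mathcal{B}_2$ contains exactly the two non-crossing pair-partitions $\pi_1=\{\{1,2\},\{3,4\}\}$ and $\pi_2=\{\{1,4\},\{2,3\}\}$. Applying Lemma \ref{Reduktionslemma} once to each (removing the block $\{1,2\}$ from $\pi_1$, respectively $\{2,3\}$ from $\pi_2$) produces in both cases
\[
J_\alpha(\pi_i)\;=\;\int_0^1\varphi(x)\cdot\varphi(x)\,dx\;=\;\int_0^1\varphi(x)^2\,dx,
\]
so $(\ast)$ at $k=4$ becomes $2\int_0^1\varphi^2\,dx=2\varphi_0^2=2\bigl(\int_0^1\varphi\,dx\bigr)^2$. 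This is precisely the equality case of the Cauchy--Schwarz (equivalently Jensen) inequality applied to $\varphi$ on $[0;1]$, and it forces $\varphi$ to be constant on $[0;1]$ up to a Lebesgue-null set.

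The one non-routine point, and what I expect to be the main obstacle, is promoting this almost-everywhere identity to the \emph{pointwise} identity $\varphi\equiv\varphi_0$ claimed by the theorem. This step uses that $\alpha$ is bounded and Riemann-integrable, so $\alpha^2\in L^1([0;1])$; continuity of translation in $L^1$ then makes $x\mapsto\varphi(x)=\int_0^1\alpha^2(|x-y|)\,dy$ continuous on $[0;1]$. A continuous function that is a.e.\ constant is constant everywhere, which yields $\varphi\equiv\varphi_0$ and completes the argument. The final assertion of the theorem, that the moments of $M^{(N)}$ can in principle be computed by the formula of Theorem \ref{TheoremMomentGleichSummeIntegrale}, is then just that theorem restated.
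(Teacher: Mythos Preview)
Your proposal is correct and follows essentially the same route as the paper: both directions hinge on Lemma~\ref{Reduktionslemma}, and the necessity is decided entirely by the case $k=4$ via Cauchy--Schwarz on $\langle\varphi,\chi_{[0;1]}\rangle$. The only noteworthy difference is presentational---the paper argues the necessity contrapositively (assume $\varphi$ non-constant and derive $C_2<\mu_4$), while you argue directly (equality in $(\ast)$ at $k=4$ forces equality in Cauchy--Schwarz)---and you make explicit the passage from a.e.\ constancy to pointwise constancy via continuity of $\varphi$, a point the paper leaves tacit.
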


\begin{proof} We first show

\underline{'$\Longrightarrow$':} The remark after Lemma \ref{Reduktionslemma} shows that

$$ J_\alpha(\pi) \equiv \varphi_0^{k/2} \ .  $$

\noindent Theorem \ref{TheoremMomentGleichSummeIntegrale} implies

\begin{eqnarray}\nonumber
\mu_k = \frac{1}{\varphi_0^{k/2}}\cdot\sum_{\pi\in\mathcal{B}_{\frac{k}{2}}} \varphi_0^{k/2} = C_{\frac{k}{2}} 
\end{eqnarray}

\noindent for even $k\in\mathbb{N}$ while the odd moments are zero.

\smallskip
\underline{'$\Longleftarrow$':} Suppose that $\varphi$ is not constant on $[0;1]$. We show that the SCL in this case does not hold. In order to do so, we remark, that $\varphi\in L_2([0;1])$ since $\alpha$ is Riemann-integrable. We will use Cauchy-Schwartz-inequation (CSI) in order to show

$$ C_2 < \mu_4 \ . $$

Since $\mathcal{B}_2 = \{ \pi_1:=\{ \{1,2\} , \{ 3,4 \}\}, \pi_2:=\{ \{1,4\} , \{ 2,3 \}\} \}$, the according adopted sequences are 

$$ G^{(1)}=(g_1,g_2,g_1,g_3)  \textnormal{ and }  G^{(2)}=(g_1,g_2,g_3,g_2) \ , \ g_i\neq g_j \textnormal{ for } i\neq j \ , $$

\noindent as can easily be checked. Therefore the edges are

\begin{eqnarray}\nonumber
\mathcal{K}_{\pi_1} &=& \{ \{ g_1,g_2\}, \{ g_1,g_3\} \} =: \{ \{ x,y \} , \{ x,z\} \} \textnormal{ and }
\\\nonumber \mathcal{K}_{\pi_2} &=& \{ \{ g_1,g_2\}, \{ g_2,g_3\} \} =: \{ \{ x,y \} , \{ y,z\} \} \ . 
\end{eqnarray}

\noindent According to \textsc{Fubini} we have

$$ \varphi(x)^2 = \int_0^1 \int_0^1 \alpha^2(|x-y|)\cdot \alpha^2(|x-z|) \ d(y,z) \ .  $$

\noindent Therefore $J_\alpha(\pi_{1,2})$ can be calculated as

\begin{eqnarray}\nonumber
J_\alpha(\pi_1) &=& \int_0^1 \varphi(x)^2 \ dx \ \textnormal{ and }\ J(\pi_2) = \int_0^1 \varphi(y)^2 \ dy 
\\\nonumber \Longrightarrow \mu_4 &=& \frac{2}{\varphi_0^2}\cdot\int_0^1 \varphi(x)^2 \ dx \ . 
\end{eqnarray}

\noindent Since $\mathcal{B}_1= \{ \{ 1,2\} \}$, one gets 

$$ \mu_2 = \frac{1}{\varphi_0} \cdot \int_0^1\int_0^1 \alpha^2(|x-y|) \ dy \ dx = \frac{1}{\varphi_0} \cdot \int_0^1 \varphi(x) \ dx = 1 \ . $$

\noindent Therefore

\begin{eqnarray}\nonumber
C_2= 2\cdot\mu_2^2 &=&2\cdot \frac{1}{\varphi_0^2}\cdot \big|\int_0^1 \varphi(x) \ dx \big|^2
\\\nonumber &=& 2\cdot \frac{1}{\varphi_0^2}\cdot \big|<\varphi, \chi_{[0;1]}>\big|^2
\\\nonumber &\underset{CSI}<&  \frac{2}{\varphi_0^2}\cdot || \varphi||^2 \cdot \underbrace{|| \chi_{[0;1]}||^2}_{=1}
\\\nonumber &=& \frac{2}{\varphi_0^2}\cdot \int_0^1 \varphi(x)^2 \ dx
\\\nonumber &=& \mu_4 \ ,
\end{eqnarray}

\noindent since $\varphi$ is not constant on $[0;1]$ and therefore it is linear independent with $\chi_{[0;1]}$.
\end{proof}

\begin{corollary}\label{KorollarPeriodischeBandMatrizen}
For a real number $\rho\in [0;1]$ we consider the ensemble $ M^{(N)}$ defined in Theorem \ref{TheoremMomentGleichSummeIntegrale}. We set

$$ \alpha := \chi_{[0;\rho]} \ .  $$

\noindent These matrices are called (non-periodic) \emph{band-matrices} with proportional growth, since

$$ \alpha\big(\frac{|i-j|}{N}\big) = 1 \Longleftrightarrow |i-j| \leq \rho N =: b_N \ .  $$

\noindent The SCL does not hold in this case.
\end{corollary}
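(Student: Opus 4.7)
The plan is to invoke Theorem \ref{TheoremNotwendigeUndHinreichendeBedfuerSCL}, which reduces the question of whether the SCL holds for the rescaled ensemble to the question of whether the averaged kernel
$$ \varphi(x) = \int_0^1 \alpha^2(|x-y|) \, dy $$
is (Lebesgue a.e.) constant on $[0;1]$. Thus, it suffices to exhibit two points of $[0;1]$ at which $\varphi$ takes different values.

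First I would simplify. Since $\alpha = \chi_{[0;\rho]}$ takes only the values $0$ and $1$, we have $\alpha^2 = \alpha$, so
$$ \varphi(x) = \int_0^1 \chi_{[0;\rho]}(|x-y|) \, dy = \lambda\bigl([0;1] \cap [x-\rho, x+\rho]\bigr), $$
where $\lambda$ denotes Lebesgue measure. This is a routine one-dimensional computation: $\varphi$ is piecewise linear in $x$ with a plateau in the middle. Concretely, for $\rho \in (0;\tfrac{1}{2}]$ one gets $\varphi(x) = x+\rho$ on $[0;\rho]$, $\varphi(x) \equiv 2\rho$ on $[\rho; 1-\rho]$, and $\varphi(x) = 1-x+\rho$ on $[1-\rho;1]$; for $\rho \in (\tfrac{1}{2};1)$ the plateau takes the constant value $1$ on $[1-\rho;\rho]$, with linear ramps on the two sides.

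Second, evaluate at two judiciously chosen points. For any $\rho \in (0;1)$ one has $\varphi(0) = \rho$, while $\varphi(\tfrac{1}{2}) = \min(2\rho,1) > \rho$. Hence $\varphi$ is not constant on $[0;1]$, and by Theorem \ref{TheoremNotwendigeUndHinreichendeBedfuerSCL} the SCL fails for $\frac{1}{\sqrt{\varphi_0}} M^{(N)}$, and therefore a fortiori for $M^{(N)}$ itself (any rescaling by a positive constant that would restore the Catalan numbers as the limit moments would have to equal $\sqrt{\varphi_0}$ by matching the second moment).

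I do not anticipate a real obstacle here; the only point that requires a sentence of care is the edge cases $\rho = 0$ and $\rho = 1$. For $\rho = 1$ one has $\alpha \equiv 1$ and $\varphi \equiv 1$, which is the classical Wigner case (where the SCL \emph{does} hold); for $\rho = 0$ the matrix $M^{(N)}$ is supported on the diagonal and $\varphi_0 = 0$, so the rescaled ensemble is not even defined. These degenerate boundary values should be excluded (or remarked upon) in the statement; the substantive content, namely that for every $\rho \in (0;1)$ the SCL fails, follows at once from the non-constancy of $\varphi$ established above.
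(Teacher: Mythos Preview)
Your proposal is correct and follows essentially the same route as the paper: invoke Theorem \ref{TheoremNotwendigeUndHinreichendeBedfuerSCL}, compute $\varphi(x)=\lambda([0;1]\cap[x-\rho,x+\rho])$ as a piecewise linear function (with the same case split at $\rho=\tfrac{1}{2}$), and conclude that $\varphi$ is not constant. Your explicit comparison $\varphi(0)=\rho<\min(2\rho,1)=\varphi(\tfrac{1}{2})$ and your remarks on the degenerate endpoints $\rho\in\{0,1\}$ are slightly more careful than the paper's version, which simply displays the piecewise formula and asserts non-constancy.
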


\begin{remark}
This result is well-known for independent random variables and $\alpha=\textnormal{const.}$, see \cite{BMP}.
\end{remark}

\begin{proof}
According to Theorem \ref{TheoremNotwendigeUndHinreichendeBedfuerSCL} it is sufficient to show, that $\varphi \neq \varphi_0$. First consider $0<\rho < \frac{1}{2}$. Then we have 

 $$ \alpha^2(|x-y|)=1 \Longleftrightarrow |x-y|\leq\rho \Longleftrightarrow x-\rho \leq y \leq x+\rho \ . $$

\noindent The zero of $y=x-\rho$ is $x=\rho$. Furthermore the intersection of the function $y=x+\rho$ with $y=1$ is $x=1-\rho$. Because of

 $$0<\rho < \frac{1}{2} \Longleftrightarrow \rho<1-\rho$$

\noindent we have to consider the following three cases:


	
	
	
	\begin{description}
	
		\item[(i)] $0< x < \rho$: $\varphi(x)=\int_0^{{x+\rho}}1 \ dy = x+\rho$.

		\item[(ii)] $\rho\leq x < 1-\rho$: $\varphi(x)=\int_{x-\rho}^{{x+\rho}}1 \ dy = 2\rho$.
		
		\item[(iii)] $1-\rho \leq x < 1$: $\varphi(x)=\int_{x-\rho}^1 1 \ dy = 1+\rho-x$.
		
	\end{description}
	

\noindent The case $\frac{1}{2}\leq \rho <1$ is done analogously:

\begin{eqnarray}\nonumber
\varphi(x)=\left\lbrace  \begin{array}{cl}
x+\rho & \textnormal{ for } 0<x< 1-\rho \\ 
1 & \textnormal{ for } 1-\rho<x<\rho \\
1+\rho-x & \textnormal{ for } \rho<x<1
\end{array} \right. 
\end{eqnarray}

\noindent Therefore $\varphi(x)$ is not constant and the SCL does not hold. 

\end{proof}

\begin{remark}
If we take the limit $\rho\rightarrow 1$, then Corollary \ref{KorollarPeriodischeBandMatrizen} results in $\varphi(x)\equiv 1 = \varphi_0$. This shows that - again - the moments of $M^{(N)}$ converge against those of the SCL for $\rho\rightarrow 1$. On the other hand, starting with $\rho=1$, one gets the Wigner-case.
\end{remark}

\subsection{Random Block-Matrices}

In this section we consider (random) block-matrices, i.e.

\begin{eqnarray}\nonumber
M^{(N)}:= \frac{1}{\sqrt{2N}} \cdot \left( \begin{array}{cc}
A & B \\
B^T & \pm A
\end{array} \right) =: \frac{1}{\sqrt{2N}}\cdot\big( \xi_{ij}^{(N)} \big)_{1\leq i,j \leq 2N} \ . 
\end{eqnarray}

\begin{theorem}\label{TheoremBlockMatrizen}
Let $A$ be a symmetric \textsc{Wigner}-type matrix the entries of which are centralized and normalized. Let $B$ denote a not necessarily symmetric \textsc{Wigner}-type matrix with centralized and normalized entries. Furthermore, $A$ and $B$ are considered to be independent, i.e. every random variable $X$ taken from $A$ and every $Y$ taken from $B$ are independent. Consider the symmetric ensemble

\begin{eqnarray}\nonumber
\alpha\cdot M^{(N)} := \frac{1}{\sqrt{2N}}\cdot\big( \alpha\big( \frac{|i-j|}{2N} \big) \cdot \xi_{ij}^{(N)} \big)_{1\leq i,j \leq 2N} \ ,
\end{eqnarray}

\noindent where $\alpha: [0,1]\rightarrow\mathbb{R}$ denotes a bounded, Riemann-integrable function. Let $\mu_k^{(N)}$ denote the $k-$th moment of $\alpha\cdot M^{(N)}$. Then we have 

\begin{eqnarray}\nonumber
\lim_{N\rightarrow\infty} \mu_k^{(N)} = \left\lbrace \begin{array}{cc}
\sum_{\pi\in\mathcal{B}_{\frac{k}{2}}} J_\alpha(\pi) & \textnormal{for $k$ even}
\\ 0 & \textnormal{otherwise.}
\end{array}\right.
\end{eqnarray}
\end{theorem}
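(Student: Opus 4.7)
The plan is to recognize the block ensemble $\alpha\cdot M^{(N)}$ as a special case of the general framework of Theorem \ref{TheoremMomentGleichSummeIntegrale}, applied to a $2N\times 2N$ ensemble, by constructing a suitable equivalence relation $\sim$ on $\{1,\ldots,2N\}^2$ that encodes the dependence structure induced by the block form.

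First I would set up $\sim$ explicitly. Writing $\bar{p} := p$ if $p \leq N$ and $\bar{p} := p-N$ otherwise, declare $(p,q)\sim(r,s)$ whenever both pairs lie in the same ``type of block'' (both in the upper-left $A$-block, both in the lower-right $\pm A$-block, both in the upper-right $B$-block, or both in the lower-left $B^T$-block) \emph{and} $\{\bar p,\bar q\}=\{\bar r,\bar s\}$. The key additional identification, coming from $\xi_{N+i,N+j}=\pm\xi_{ij}$, is that pairs in the $A$-block must be declared equivalent to the corresponding pairs in the $\pm A$-block (and not independent, even in the $+$-case). Within each $B$- or $B^T$-block, the equivalence class of $(p,q)$ reduces to $\{(p,q),(q,p)\}$. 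In every case each equivalence class has at most four elements, and the independence of $A$ from $B$ guarantees that pairs from inequivalent classes carry independent entries.

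Next I would verify conditions (\ref{AequivalenzrelationBedingung1})--(\ref{AequivalenzrelationBedingung3}) for the matrix dimension $2N$. Since every equivalence class has at most four elements, for fixed $p$ there are at most $2N\cdot 4 = 8N = o((2N)^2)$ triples $(q,r,s)$ with $(p,q)\sim(r,s)$, proving (\ref{AequivalenzrelationBedingung1}). For fixed $(p,q,r)$ the first coordinate $r$ of an equivalent pair determines the second uniquely in each of the cases above, so (\ref{AequivalenzrelationBedingung2}) holds with $B=1$. For (\ref{AequivalenzrelationBedingung3}) a short case distinction according to whether $p$ and $q$ lie in $\{1,\ldots,N\}$ or $\{N+1,\ldots,2N\}$ shows that $(p,q)\sim(q,r)$ with $r\neq p$ is incompatible in every case; for instance, the $A$/$\pm A$ identification would force $q$ to lie simultaneously in $\{1,\ldots,N\}$ and $\{N+1,\ldots,2N\}$. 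Hence the set in (\ref{AequivalenzrelationBedingung3}) is in fact empty.

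With the hypotheses of centered, unit variance and bounded moments of every order inherited directly from the Wigner-type assumptions on $A$ and $B$, all hypotheses of Theorem \ref{TheoremMomentGleichSummeIntegrale} are met for $\alpha\cdot M^{(N)}$ (now viewed as a $2N\times 2N$ ensemble with weight $\alpha$), and the claimed moment formula follows immediately. I do not anticipate any real obstacle beyond the bookkeeping above; the only subtle point is the correct declaration of the $A$- and $\pm A$-blocks as dependent, and this only slightly enlarges the equivalence classes while keeping them of bounded size, so it has no effect on the verification of (\ref{AequivalenzrelationBedingung1})--(\ref{AequivalenzrelationBedingung3}).
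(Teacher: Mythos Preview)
Your proposal is correct and follows essentially the same strategy as the paper: view the block matrix as a $2N\times 2N$ ensemble with correlated entries, verify conditions (\ref{AequivalenzrelationBedingung1})--(\ref{AequivalenzrelationBedingung3}) for the induced equivalence relation, and invoke Theorem~\ref{TheoremMomentGleichSummeIntegrale}. Your explicit description of the equivalence classes (size at most four) and the case analysis for (\ref{AequivalenzrelationBedingung3}) are a bit more detailed than the paper's informal counting argument, but the substance is identical; just be careful that your stated rule ``both pairs lie in the same type of block'' does not literally cover the mandatory identification $(p,q)\sim(q,p)$ when $(p,q)$ sits in $B$ and $(q,p)$ in $B^T$---you use this identification two lines later, so simply include it in the definition.
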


\smallskip

\begin{remark}
Ensembles of this form with the special weight $\alpha\equiv 1$ were discussed in \cite{HCS}. On the other hand, Theorem \ref{TheoremBlockMatrizen} shows that all results obtained in the previous section hold also for the mentioned block-matrices. In particular, the SCL does not hold for random band-block-matrices the band-width of which is proportional to its dimension, see Corollary \ref{KorollarPeriodischeBandMatrizen}. On the other hand, the SCL is valid for periodic band-block-matrices, see Example \ref{BspPeriodischeBandMatrizen}.
\end{remark}

\begin{proof}(of Theorem \ref{TheoremBlockMatrizen}.)
Consider the ensemble

$$ \frac{1}{\sqrt{2N}}\cdot\big( \alpha\big( \frac{|i-j|}{2N} \big) \cdot \xi_{ij}^{(N)} \big)_{1\leq i,j \leq 2N} \ . $$

\noindent This ensemble is considered to be a \emph{single random matrix} the entries of which are correlated in a certain way. For simplicity, $A$, $\pm A$, $B$ and $B^T$ are called \emph{blocks} of $\alpha\cdot M^{(N)}$. In order to proof the Theorem it is sufficient to verify that conditions (\ref{AequivalenzrelationBedingung1})-(\ref{AequivalenzrelationBedingung3}) are fullfilled. Theorem \ref{TheoremMomentGleichSummeIntegrale} then results in Theorem \ref{TheoremBlockMatrizen}.

\begin{description}

	\item[Condition (\ref{AequivalenzrelationBedingung1}):] Let $p\in\{ 1,...,2N\}$ be fixed. One has then $2N$ possibilities to choose $q$. This defines a random variable $\xi^{(N)}_{pq}$ in one of the blocks. Since $A$ and $B$ are assumed to be independent, there are at most 3 random variables which are correlated to $\xi^{(N)}_{pq}$ as can easily be seen: \emph{the same} $\xi^{(N)}_{pq}$ is found one time on the other side of the diagonal of $\alpha\cdot M^{(N)}$. At most two of the same random variables are found on the other side of the diagonal \emph{inside} the mentioned blocks. This gives most $6N$ choices.

	\item[Condition (\ref{AequivalenzrelationBedingung2}):] Let $(p,q,r)\in\{ 1,...,2N\}^3$ be fixed. This gives a random variable $\xi_{pq}^{(N)}$ in one of the blocks. Now we look at row $r$ in $\alpha\cdot M^{(N)}$. Then there is at most one other random variable correlated to $\xi_{pq}^{(N)}$ since, again, $A$ and $B$ are assumed to be independent.

	\item[Condition (\ref{AequivalenzrelationBedingung3}):] We consider two pairs $(p,q)\sim(q,r)$ and ignore $p\neq r$ for an upper bound. If the random variable $\xi_{pq}^{(N)}$ belongs to $A$ or $B^T$, then row $q$ is found where blocks $A$ and $B$ are. Otherwise row $q$ is found where blocks $B^T$ and $\pm A$ are. Therefore, the total number of correlated random variables in row $q$ is at most one, since $A$ and $B$ are assumed to be independent and they are both of the \textsc{Wigner-}type.
	
\end{description}

\end{proof}

\subsection{Band-Matrices with slow-growing Band-Width}

In this section we consider band-matrices the band-width of which behaves like $o(N)$. It is clear, that conditions (\ref{AequivalenzrelationBedingung1})-(\ref{AequivalenzrelationBedingung3}) have to be modified in order to get the validity of the SCL.

\begin{theorem}\label{TheoremKleinOvonN}
Set $b_N=o(N)$ and suppose $b_N\overset{N\rightarrow\infty}\longrightarrow\infty$. Let $\sim$ be an equivalence relation on $\{ 1,...,N \}^2$ satisfying the following conditions:

\begin{eqnarray}\label{BedingungO1}
\max_{p} &\#& \{ (q,r,s)\in\{ 1,...,N\}^3 | (p,q)\sim (r,s) \} = o(b_N^2)
\\ \label{BedingungO2} \max_{p,q,r} &\#& \{ s\in\{ 1,...,N \} | (p,q)\sim(r,s) \} \leq B < +\infty
\\ \label{BedingungO3}  &\#& \{ (p,q,r)\in\{ 1,...,N \}^3 | (p,q)\sim(q,r) \textnormal{ and } r\neq p \} = o(b_N^2)
\\\nonumber && (p,q)\sim (q,p) \ \forall \ p,q \in\{ 1, ..., N \} \ .
\end{eqnarray}

\noindent Consider the symmetric ensemble

$$ A^{(N)}:=\frac{1}{\sqrt{2b_N}} \left( \chi_{[0;b_N]} \big(|i-j| \big) \cdot X^{(N)}_{ij} \right)_{(1\leq i,j \leq N)} $$

\noindent with centered random matrices $X^{(N)}_{ij}\equiv X^{(N)}_{ji}$. All (centralized) moments of $X^{(N)}$ may exist and may be bounded. If the random variables $X^{(N)}_{pq}$ and $X^{(N)}_{rs}$ are independent, whenever $(p,q)\not\sim (r,s)$, then the SCL holds for $A^{(N)}$.

\end{theorem}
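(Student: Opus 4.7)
The plan is to run the moment method of Section \ref{section:MomentsOfRandomMatrices} with the band indicator $\chi_{[0;b_N]}$ playing a dual role: as the weight $\alpha$ and as a range restriction on consecutive indices. Starting from
$$\mu_k^{(N)} = \frac{1}{N \cdot (2b_N)^{k/2}} \sum_{i_1,\ldots,i_k=1}^N \prod_{j=1}^k \chi_{[0;b_N]}(|i_j - i_{j+1}|) \cdot \mathbb{E}\Bigl(\prod_{j=1}^k X^{(N)}_{i_j i_{j+1}}\Bigr)$$
(cyclic indices), I would decompose the sum by the equivalence classes of the pairs $(i_j, i_{j+1})$ exactly as in (\ref{MomentSummeUeberE}), reducing to a sum over $\pi \in \mathcal{P}(k)$ and the fibers $P^{-1}(\pi)$ of the projector $P$.

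The key bookkeeping change from the proof of Theorem \ref{TheoremMomentGleichSummeIntegrale} is that every \emph{free} color choice is now confined by the band constraint to at most $2b_N + 1$ values instead of $N$; the constrained choices remain bounded by $B$ via (\ref{BedingungO2}). The band-analog of Lemma \ref{ObereAbschatzung} therefore reads $\# P^{-1}(\pi) \leq N \cdot (2b_N+1)^{r} \cdot B^{k-r-1}$ for $\pi$ with $r$ blocks. Dividing by $N(2b_N)^{k/2}$, partitions with $r<k/2$ vanish because $b_N \to \infty$, and partitions with $r>k/2$ still annihilate the expectation since some random variable appears singly; hence only pair-partitions survive, exactly as in Proposition \ref{KorollarNurPaarPartitionenBeitrag}. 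For crossing pair-partitions, the new hypothesis (\ref{BedingungO1}) gives $o(b_N^2)$ choices for the three nodes around a minimal crossing block, so the argument of Lemma \ref{KreuzendePartitionenSindIrrelevant} adapts verbatim with $b_N^2$ replacing $N^2$ at the crossing step. Similarly, (\ref{BedingungO3}) substitutes for (\ref{AequivalenzrelationBedingung3}) in the band-scaled analog of Corollary \ref{BlaetterEntfernungsKorollarFurNichtadaptierteSequenzen}, killing non-adopted sequences.

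For $\pi \in \mathcal{B}_{k/2}$ and $\pi$-adopted sequences, the expectation factor equals $1$ by Lemma \ref{BlockegebenHinundZuruck} together with unit variance, and $\chi_{[0;b_N]}^2 = \chi_{[0;b_N]}$ is automatically $1$ on any $\pi$-adopted coloring since consecutive indices along the walk are never farther than $b_N$ apart. Counting such colorings reduces to: choose the root in $N$ ways, and then add the $k/2$ further nodes of the rooted tree one by one, each constrained to a window of radius $b_N$ around its already-colored neighbor and thus offering $2b_N+1 - O(k)$ admissible values (the $O(k)$ enforcing injectivity). Boundary effects shrink this window only when the neighbor sits within $b_N$ of $1$ or $N$, contributing an error of order $O(b_N \cdot (2b_N)^{k/2})$, which is $o(N(2b_N)^{k/2})$ since $b_N = o(N)$. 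Thus each $\pi \in \mathcal{B}_{k/2}$ contributes $1$ after normalization, and summing yields $\mu_k^{(N)} \to |\mathcal{B}_{k/2}| = C_{k/2}$ for even $k$.

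The main obstacle I anticipate is this final leading-order count: to prove the asymptotics (not merely an upper bound) one must verify that both the boundary and injectivity corrections genuinely contribute lower order. This is exactly where both assumptions $b_N = o(N)$ and $b_N \to \infty$ are essential -- the former to make boundary effects negligible against $N(2b_N)^{k/2}$, the latter to drown out the constants arising from distinctness and to make the $r<k/2$ estimate go to zero. Everything else is a direct translation of the proportional-case machinery with $N$ replaced by $2b_N$ in the "free-choice" counting.
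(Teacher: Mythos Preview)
Your proposal is correct and follows essentially the same route as the paper: reduce to non-crossing pair-partitions by replacing free $N$-choices with $2b_N$-choices and invoking (\ref{BedingungO1})--(\ref{BedingungO3}) in place of (\ref{AequivalenzrelationBedingung1})--(\ref{AequivalenzrelationBedingung3}), then count $\pi$-adopted colorings and show the boundary and distinctness corrections are of lower order. The paper makes the boundary argument explicit by splitting on whether the first colored node lies in the bulk $\{k b_N+1,\ldots,N-k b_N\}$ (so the whole walk avoids the edges), and its lower-bound correction $2b_N-kB$ uses condition (\ref{BedingungO2}) to exclude unwanted \emph{equivalences} with earlier pairs, not merely repeated colors---a slightly stronger requirement than the injectivity you cite, though with the same $O(k)$ effect since $B$ is constant.
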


\begin{proof}
We will modify Lemma \ref{KreuzendePartitionenSindIrrelevant} in order to show, that only non-crossing pair-partitions have an asymptotic contribution to the $k-$th moment

\begin{eqnarray}\nonumber
\mu_k &=& \lim_{N\rightarrow\infty}\frac{1}{N} \cdot \mathbb{E}(\textnormal{tr}(A^{(N)})^k)
\\\label{O-N-Summe} &=& \lim_{N\rightarrow\infty}\frac{1}{N\cdot (2b_N)^{k/2}} \cdot \sum_{\pi\in\mathcal{P}(k)} \ \sum_{\Gamma\in\mathcal{F}^{(N)}_k(\pi)} \mathbb{E}(X_N(\Gamma)) \ \ \textnormal{ with}
\\\nonumber \mathcal{F}^{(N)}_k(\pi) &=& \big\{ \Gamma=\big((g_1,...,g_k),\pi,c\big)\in \mathcal{E}^{(N)}_k(\pi) | \ |c(g_m)-c(g_{m+1})| \leq b_N \ \forall \ 1\leq m \leq k  \big\} \ . 
\end{eqnarray}

In order to do so, we first agree, that only \emph{pair-partitions} give an asymptotically non-vanishing contribution to (\ref{O-N-Summe}). Therefore we modify the above argumentation. Let $\pi=\{ B_1, ..., B_r\} \in\mathcal{P}(k)$ be a partition and consider the projector 

\begin{eqnarray}\nonumber
P: \mathcal{F}^{(N)}_k(\pi) &\longrightarrow& \mathcal{P}(k)
\\\nonumber \Gamma &\longmapsto& \pi \ .
\end{eqnarray}

\noindent In order to count $P^{-1}(\pi)$ we distinguish between two cases.

\begin{description}

	\item[Case (i):] $c(g_1)\in \{ k\cdot b_N+1, ..., N-k\cdot b_N \}$. According to (\ref{O-N-Summe}) we have $2b_N$ possibilities to colour $g_2$. The other nodes will be coloured successively. This is done analogously to the proof of Lemma \ref{ObereAbschatzung}. Since we always have less than $2b_N$ possibilities to colour each node, we gain
	
	$$ \# P^{-1}(\pi) \leq (N-2k\cdot b_N) \cdot (2b_N) \cdot (2b_N)^{r-1} \cdot B^{k-r-1} \ . $$

	\item[Case (ii):] $c(g_1)\not\in \{ k\cdot b_N+1, ..., N-k\cdot b_N \}$. Analogously to the case above we have less than $2b_N$ possibilities to colour each node. Therefore we have
	
	$$ \# P^{-1}(\pi) \leq (2k\cdot b_N) \cdot (2b_N) \cdot (2b_N)^{r-1} \cdot B^{k-r-1} = k \cdot (2b_N)^{r+1}\cdot B^{k-r-1}   \ . $$
\end{description}

\noindent Combining the two cases, we get

$$ \# P^{-1}(\pi) \leq N \cdot (2b_N)^{r} \cdot B^{k-r-1} \ .  $$

\noindent For $r<\frac{k}{2}$ we deduce

$$ \frac{1}{N\cdot (2b_N)^{k/2}} \cdot \# \mathcal{F}^{(N)}_k(\pi) \leq (2b_N)^{r-k/2} \cdot B^{k-r-1} \overset{N\rightarrow\infty}\longrightarrow 0 \ . $$

\noindent The case $r>\frac{k}{2}$ gives a singleton block as in Proposition \ref{KorollarNurPaarPartitionenBeitrag}. This means that there is a random variable which appears exactly once. Since its expectation is zero, the corresponding summand is zero.

Next we suppose that $\pi$ is a \emph{crossing} pair-partition with no blocks of the form $\{ m,m+1 \}$. Analogously to Lemma \ref{KreuzendePartitionenSindIrrelevant} we choose a block $\{ m, m+l\}\in\pi$ with minimal $l\geq 2$ and colour node $g_m$.

\begin{description}

	\item[Case (i):] $c(g_m)\in \{ k\cdot b_N+1, ..., N-k\cdot b_N \}$. Using the notation $c_s:=c(g_s)$ we gain
	
	$$ \# \{ (c_{m+1},c_{m+l},c_{m+l+1}) | (c_m,c_{m+1}) \sim (c_{m+l},c_{m+l+1}) \} = o(b_N^2) $$
	
	\noindent according to Condition (\ref{BedingungO1}). Furthermore we can \emph{freely} colour all nodes
	
	$$ g \in  \{ g_{m+2}, ..., g_{m+l-1} \} $$
	
	\noindent which gives less than $(2b_N)^{l-2}$ possibilities. After analyzing the proof of Lemma \ref{KreuzendePartitionenSindIrrelevant} this shows that
	
	\begin{eqnarray}\nonumber
	\# P^{-1}(\pi) &\leq&  (N-2k\cdot b_N) \cdot (2b_N)^{l-2} \cdot o(b_N^2) \cdot (2b_N)^{k/2-l} \cdot B^{k/2-1}
	\\\nonumber &=& (N-2k\cdot b_N) \cdot o(b_N^{k/2}) \ .
	\end{eqnarray}
	
	\item[Case (ii):] $c(g_m)\not\in \{ k\cdot b_N+1, ..., N-k\cdot b_N \}$. The only new feature is, that we only have $2k\cdot b_N$ possibilities to colour $g_m$. That analogously shows that
	
	$$ \# P^{-1}(\pi) \leq  (2k\cdot b_N) \cdot (2b_N)^{l-2} \cdot o(b_N^2) \cdot (2b_N)^{k/2-l} \cdot B^{k/2-1} = o(b_N^{k/2+1}) \ . $$
	
\end{description}

\noindent Since $b_N=o(N)$, both inequalities show

$$ \frac{1}{N\cdot (2b_N)^{k/2}}\cdot \# P^{-1}(\pi) \overset{N\rightarrow\infty}\longrightarrow 0 \ . $$

\noindent For the case that there are blocks of the form $\{ m,m+1 \}$, one can remove these blocks using an inequality of the form

\begin{eqnarray}\label{ObereAbschaetzung-O-N}
\# P^{-1}(\pi) \leq (2b_N)\cdot \# P^{-1}(\tilde{\pi}) + o(b_N^{k/2+1}) \ .
\end{eqnarray}

\noindent This inequality is gained from the proof of Proposition \ref{BlaetterEntfernungsProposition}. According to (\ref{O-N-Summe}), the $k:=2m-$th moment of $A^{(N)}$ can therefore be computed by

\begin{eqnarray}\label{Momente-O-N-Summe}
\mu_{2m} = \lim_{N\rightarrow\infty}\frac{1}{N\cdot (2b_N)^{m}} \cdot \sum_{\pi\in\mathcal{B}_{m}} \ \sum_{\Gamma\in\mathcal{F}^{(N)}_{2m}(\pi)} \mathbb{E}(X_N(\Gamma)) \ , 
\end{eqnarray}

\noindent while the odd moments vanish. 

In order to show that only $\pi-$adopted sequences give a contribution to sum (\ref{Momente-O-N-Summe}) we define 

$$ \mathcal{F}^{(N)}_k \supset F_k := \big\{ \Gamma\in\mathcal{F}^{(N)}_k | (g_1,...,g_k) \textnormal{ is $\pi-$adopted} \big\} \ .   $$

\noindent The proof of Corollary \ref{BlaetterEntfernungsKorollarFurNichtadaptierteSequenzen} shows that in our case we \emph{also} have an inequality of the form

$$ \# \Pi^{-1}(\pi) \leq (2b_N) \cdot \# \Pi^{-1}(\tilde{\pi}) + o(b_N^{k/2+1}) \ .  $$

\noindent Here we are using the notations $\tilde{\pi}:=\pi\backslash^\bullet\{ m,m+1 \}\in\mathcal{B}_{\frac{k-2}{2}}$ and

$$ \Pi:= P|_{\overline{F_k}}:\overline{F_k}\longrightarrow\mathcal{B}_{\frac{k}{2}} \textnormal{ with } \overline{F_k}:= \mathcal{F}^{(N)}_k \backslash F_k \ . $$

\noindent The only new feature in this case is, that by definition of $\mathcal{F}^{(N)}_k$, one only has $2b_N$ possibilities to colour a node. Now Lemma \ref{BaumeHabenBlatter} can be applied as often as $k/2-1$ times to gain an element $\tilde{\pi}=\{ \{ 1,2 \}\} \in\mathcal{B}_1$. We get the estimate

$$ \# \Pi^{-1}(\pi) \leq (2b_N)^{k/2-1} \cdot \# \Pi^{-1}(\tilde{\pi}) + o(b_N^{k/2+1}) \ . $$

\noindent Since $(G_1,G_2)$ is not $\tilde{\pi}-$adopted, condition (\ref{BedingungO3}) gives the equality 

$$\# \Pi^{-1}(\tilde{\pi}) = o(b_N^2) \ .  $$

\noindent This equality implies

\begin{eqnarray}\nonumber
\frac{1}{N\cdot (2b_N)^{k/2}} \cdot \# \Pi^{-1}(\pi) &\leq& \frac{1}{N\cdot (2b_N)^{k/2}} \cdot \big( (2b_N)^{k/2+1} + o(b_N^{k/2+1}) \big)
\\\nonumber &\overset{N\rightarrow\infty}\longrightarrow& 0 \ , 
\end{eqnarray}

\noindent since $b_N=o(N)$. Therefore, sum (\ref{Momente-O-N-Summe}) can be restricted to $\Gamma\in F_k$. Furthermore $\mu_k$ can be computed as

$$ \mu_k = \lim_{N\rightarrow\infty}\frac{1}{N\cdot (2b_N)^{k/2}} \cdot \sum_{\pi\in\mathcal{B}_{\frac{k}{2}}} \ \sum_{\Gamma\in F_k(\pi)} \mathbb{E}(X_N(\Gamma)) \ . $$

Since $\Gamma\in F_k$, we have $\mathbb{E}(X_N(\Gamma))=1$, because all random variables appear exactly twice. It is therefore sufficient to count the cardinality of $P|_F^{-1}(\pi)=F_k(\pi)$. In order to get an upper estimate, we embed $F_k$ into $\mathcal{F}_k$ and look for an upper estimate of $\# \mathcal{F}_k$. In order to do so, we apply inequality (\ref{ObereAbschaetzung-O-N}) as often as $k/2-1$ (which is possible according to Lemma \ref{BaumeHabenBlatter}). Using the notation $\tilde{\pi}=\{ \{ 1,2 \}\} \in\mathcal{B}_1$, we gain

$$ \# F_k \leq \# \mathcal{F}_k \leq (2b_N)^{k/2-1} \cdot \# P^{-1}(\tilde{\pi}) + o(b_N^{k/2+1}) \ .   $$

\noindent To estimate $\# P^{-1}(\tilde{\pi})$ we have to take into account, that as well adopted as non-adopted sequences can be used to be coloured. If $(G_1,G_2)$ is an adopted seqence, we have

$$ \# P^{-1}(\tilde{\pi}) \leq N \cdot 2b_N \ .  $$

\noindent If $(G_1,G_2)$ is not adopted, condition (\ref{BedingungO3}) gives \emph{only} $o(b_N^2)$ possibilities to colour these nodes. It follows

\begin{eqnarray}\nonumber
\lim_{N\rightarrow\infty}\frac{\# F_k}{N\cdot (2b_N)^{k/2}} &\leq& \lim_{N\rightarrow\infty}\frac{(2b_N)^{k/2-1} \cdot \big(N \cdot 2b_N +o(b_N^2) \big) + o(b_N^{k/2+1}) }{N\cdot (2b_N)^{k/2}} = 1 \ .
\end{eqnarray}

\noindent The Theorem is proved, if we can show, that

\begin{eqnarray}\label{FkIstMindestens1}
\lim_{N\rightarrow\infty}\frac{\# F_k}{N\cdot (2b_N)^{k/2}} \geq 1 \ .
\end{eqnarray}

\noindent This would imply

$$ \mu_k = \sum_{\pi\in\mathcal{B}_{\frac{k}{2}}} 1 \overset{\textnormal{Lemma \ref{CatalanZahl}}}= C_{\frac{k}{2}} \ , $$

\noindent while the odd moments vanish as explained above.

To prove (\ref{FkIstMindestens1}) we take $\pi\in\mathcal{B}_{\frac{k}{2}}$ and distinguish between two cases.

\begin{description}

	\item[Case (i):] $c(g_1)\in \{ k\cdot b_N+1, ..., N-k\cdot b_N \}$. Then we have $2b_N$ possibilities to colour $g_2$. Analogously to the proof of Lemma \ref{ObereAbschatzung} we colour the nodes $g_3,...,g_k$ successively. We further agree that for an index $l\in \{ 2,...,k \}$ there are two possibilities:

\begin{enumerate}

	\item 	$\exists \ m<l: \{ m,l \}\in\pi$. Then 
	
	$$ (c(g_m),c(g_{m+1}))\sim (c(g_l),c(g_{l+1})) $$
	
	\noindent is valid. Since $(g_1,...,g_k)$ is $\pi-$adodpted, we necessarily have $g_{l+1}=g_m$. Because among others $g_m$ has already been coloured, there is no choice to colour $g_{l+1}$.

	\item $\forall \  m<l: \{ m,l \}\not\in \pi$. Then
	
	$$ (c(g_m),c(g_{m+1})) \not\sim (c(g_l),c(g_{l+1})) $$

\noindent and because of condition (\ref{BedingungO2}) we have at least 

$$ 2b_N-(l-1)\cdot B \geq 2b_N -kB$$ 

\noindent choices to colour node $g_{l+1}$. 

\end{enumerate}

\noindent This shows that we have more than $2b_N -kB$ possibilities to colour each \emph{free} node. Since there are $\frac{k}{2}-1$ blocks left in the partition, we can freely colour $\frac{k}{2}-1$ nodes ($\{ 1, \beta_1 \}\in\pi$ was used to colour the first two nodes). The colour of the remaining 

$$k-2-\big(\frac{k}{2}-1\big)=\frac{k}{2}-1$$

\noindent nodes is constrained, because $(g_1,...,g_k)$ is $\pi-$adopted.

We gain

$$ \# F_k \geq (N-2k\cdot b_N) \cdot (2b_N) \cdot (2b_N -kB)^{k/2-1} \ . $$

	\item[Case (ii):] $c(g_1)\not\in \{ k\cdot b_N+1, ..., N-k\cdot b_N \}$. Then, the second node can be coloured in at least $b_N$ ways. The above argumentation can be used and we gain
	
	$$ \# F_k \geq (2k\cdot b_N) \cdot b_N \cdot (b_N -kB)^{k/2-1} \ , $$
	
	\noindent since every \emph{free} node can be coloured in at least $b_N$ ways.
\end{description}

\noindent Finally the above inequations give

\begin{eqnarray}\nonumber
\lim_{N\rightarrow\infty}\frac{\# F_k}{N\cdot (2b_N)^{k/2}} &\geq& \lim_{N\rightarrow\infty}\frac{1}{N\cdot (2b_N)^{k/2}} \cdot \big((N-2k\cdot b_N) \cdot (2b_N) \cdot (2b_N -kB)^{k/2-1} +
\\\nonumber && (2k\cdot b_N) \cdot b_N \cdot (b_N -kB)^{k/2-1} \big)
\\\nonumber &=& 1 \ ,
\end{eqnarray}

\noindent which completes the proof.

\end{proof}

\begin{remark}
As above we consider an equivalence relation with the minmal requirement 

$$(p,q)\sim (q,p) \textnormal{ for all } p,q\in \{ 1,...,N \} \ . $$

\noindent Assuming $X_{ij}\equiv X_{ji}$, this is again the Wigner case, with equivalence classes 

$$ [(p,q)]= \{ (p,q), (q,p) \} \ . $$

\noindent We will verify conditions (\ref{BedingungO1})-(\ref{BedingungO3}).

\begin{description}
	\item[Condition (\ref{BedingungO1}):] Let $p\in\{ 1,...,N\}$ be fixed. One has at most $2b_N$ possibilities to choose $q$ but then the pair $(r,s)$ is fixed from a set of two elements, which gives at most $4b_N$ choices.
	
	\item[Condition (\ref{BedingungO2}):] Let $(p,q,r)\in\{ 1,...,N\}^3$ be fixed. Then we have at most one possibility to choose $s$ if $r\in\{ p,q \}$. Otherwise we have none.
	
	\item[Condition (\ref{BedingungO3}):] We consider two pairs $(p,q)\sim(q,r)$. Then necessarily $p=r$ which is forbidden in the above condition. The set $\{ (p,q)\sim(q,r) \textnormal{ and } r\neq p \}$ is empty.
	
\end{description}

\end{remark}

\begin{remark}
The result for the Wigner case is discussed in \cite{BMP}.
\end{remark}

\newpage

\section{Weak Convergence in Probability}\label{section:WeakConvergence}

In this section we prove that the moments $\mu_k$ of the ensemble

$$ M^{(N)} := \frac{1}{\sqrt{N}} \left( \alpha\big(\frac{|i-j|}{N} \big) \cdot X^{(N)}_{ij} \right)_{(1\leq i,j \leq N)}  $$

\noindent do not only converge in \emph{expectation} but they also do \emph{weak in probability}. That means the following. Let $k\in\mathbb{N}$ be a natural number and set

$$ Y_N := \frac{1}{N} \cdot  \textnormal{tr }(M^{(N)})^k  \ . $$

\noindent Theorem \ref{TheoremMomentGleichSummeIntegrale} shows, that 

\begin{eqnarray}\nonumber
\mathbb{E}(Y_N) \overset{N\rightarrow\infty}\longrightarrow \left\lbrace \begin{array}{cc}
\sum_{\pi\in\mathcal{B}_{\frac{k}{2}}} J_\alpha(\pi) & \textnormal{for $k$ even}
\\ 0 & \textnormal{otherwise.}
\end{array}\right. 
\end{eqnarray}

\noindent what we refer to be a \emph{convergence in expectation}. Using Chebyshev's inequality we get

\begin{eqnarray}\nonumber
\mathbb{P}\left( |Y_N-\mathbb{E}(Y_N)|>\varepsilon \right) \leq \frac{\mathbb{V}(Y_N)}{\varepsilon^2} \ .
\end{eqnarray}

\noindent By proving that $\mathbb{V}(Y_N)\overset{N\rightarrow\infty}\longrightarrow 0$ we deduce, that

$$ \lim_{N\rightarrow\infty} \mathbb{P}\left( |Y_N-\mathbb{E}(Y_N)|>\varepsilon \right) = 0 \ \forall \ \varepsilon> 0 \ . $$

\noindent We refer this convergence to be \emph{weak in probability}.

\begin{theorem}\label{TheoremSchwacheKonvergenzinWkeitVollerFall}
Assume all the requirements from Theorem \ref{TheoremMomentGleichSummeIntegrale} for the symmetric ensemble

$$ M^{(N)} := \frac{1}{\sqrt{N}} \left( \alpha\big(\frac{|i-j|}{N} \big) \cdot X^{(N)}_{ij} \right)_{(1\leq i,j \leq N)} \ . $$

\noindent For every positive integer $k\in\mathbb{N}$ define

$$ Y_N^{(k)} := \frac{1}{N} \cdot  \textnormal{tr }(M^{(N)})^k \ . $$

\noindent Then we have

$$ \lim_{N\rightarrow\infty} \mathbb{V}(Y_N^{(k)}) = 0 \ \qquad \forall \ k\in\mathbb{N} \ .  $$ 
\end{theorem}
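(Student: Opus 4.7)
The plan is to apply Chebyshev's inequality to reduce the statement to $\mathbb{V}(Y_N^{(k)})\to 0$, then to analyze the variance as a doubled-cycle sum. Expanding the square of the trace gives
$$ \mathbb{E}\big((Y_N^{(k)})^2\big) = \frac{1}{N^{k+2}} \sum_{i_1,\ldots,i_k,\, j_1,\ldots,j_k} \alpha(\cdots)\,\alpha(\cdots)\,\mathbb{E}\bigl( X_{i_1 i_2}\cdots X_{i_k i_1}\cdot X_{j_1 j_2}\cdots X_{j_k j_1}\bigr), $$
where the $2k$ cyclically indexed edges $(i_s,i_{s+1})$ and $(j_t,j_{t+1})$ induce, via the equivalence relation $\sim$, a partition $\pi$ of $\{1,\ldots,2k\}$. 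I would first rerun the arguments of Proposition \ref{KorollarNurPaarPartitionenBeitrag} and Lemma \ref{KreuzendePartitionenSindIrrelevant} verbatim in this doubled setting to show that only non-crossing pair-partitions on $\{1,\ldots,2k\}$ contribute asymptotically.

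Among such pair-partitions I distinguish two kinds. Call $\pi$ \emph{split} if every pair of $\pi$ lies entirely within $\{1,\ldots,k\}$ or entirely within $\{k+1,\ldots,2k\}$, and \emph{mixed} otherwise. For split $\pi$ the independence of non-equivalent variables gives
$$ \mathbb{E}\bigl(X_{i_1 i_2}\cdots X_{i_k i_1}\cdot X_{j_1 j_2}\cdots X_{j_k j_1}\bigr) = \mathbb{E}\bigl(X_{i_1 i_2}\cdots X_{i_k i_1}\bigr)\cdot\mathbb{E}\bigl(X_{j_1 j_2}\cdots X_{j_k j_1}\bigr), $$
so the sum of split contributions to $\mathbb{E}((Y_N^{(k)})^2)$ factors as $\mathbb{E}(Y_N^{(k)})\cdot\mathbb{E}(Y_N^{(k)})$, matching $(\mathbb{E}(Y_N^{(k)}))^2$ asymptotically. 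For mixed $\pi$ at least one edge is left unpaired within its own half, so the corresponding product of single-cycle expectations is zero; consequently $(\mathbb{E}(Y_N^{(k)}))^2$ receives no mixed contribution. Thus, up to asymptotically negligible remainders from non-pair partitions, $\mathbb{V}(Y_N^{(k)})$ equals the sum of the contributions from mixed non-crossing pair-partitions on $\{1,\ldots,2k\}$.

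It remains to bound these mixed contributions. The key combinatorial input is an extension of Lemma \ref{ObereAbschatzung} and Proposition \ref{BlaetterEntfernungsProposition} to the doubled setting: iteratively peeling off leaf-blocks $\{m,m+1\}$ of $\pi$ and invoking conditions (\ref{AequivalenzrelationBedingung1})--(\ref{AequivalenzrelationBedingung3}) at each step, one shows that for any split pair-partition $\# P^{-1}(\pi)\leq C_k\,N^{k+2}$, while for any mixed pair-partition $\# P^{-1}(\pi)\leq C_k\,N^{k+1}$. Intuitively, a split $\pi$ admits two independent "roots" (one per cycle), each contributing an $N^{k/2+1}$ free-color factor by Theorem \ref{SatzGenau1adaptierteFolge}; a cross-pair rigidly links the two rooted trees and removes one such free root choice. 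Combined with boundedness of $\alpha$ and the uniform moment bound $R_k$, each mixed $\pi$ then contributes $O(1/N)$ to the variance, and since the number of pair-partitions on $\{1,\ldots,2k\}$ depends only on $k$, their sum is $O(1/N)\to 0$.

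The main obstacle is the careful combinatorial bookkeeping for mixed $\pi$, specifically verifying that each cross-pair costs exactly one power of $N$. The strategy is to first apply the leaf-removal Proposition \ref{BlaetterEntfernungsProposition} iteratively until the intra-cycle blocks are eliminated, reducing the analysis to a small "core" partition whose blocks are cross-pairs only, and then to apply condition (\ref{AequivalenzrelationBedingung1}) at the remaining cross-pair closest to the root to gain the crucial $o(N^2)$ factor replacing the expected $N^2$. Once the mixed-case $N^{k+1}$ bound is established, the conclusion follows immediately from the identity $\mathbb{V}(Y_N^{(k)}) = \mathbb{E}((Y_N^{(k)})^2) - (\mathbb{E}(Y_N^{(k)}))^2$ together with Chebyshev's inequality.
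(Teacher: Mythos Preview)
Your strategy coincides with the paper's: write the variance as a double-cycle sum over partitions of $\{1,\dots,2k\}$ (with the two-cycle convention $k+1:=1$, $2k+1:=k+1$), note that split configurations (the paper's Case~(i), $\mathcal{K}_\Gamma=\emptyset$) contribute zero by independence, and bound the mixed configurations by $O(N^{k+1})$. For the mixed bound the paper is terser and more direct than you: it invokes the machinery of Section~\ref{section:MomentsOfRandomMatrices} (citing ``the proof of Theorem~\ref{TheoremMomentGleichSummeIntegrale}'') to assert that the first $k$ nodes admit at most $N^{1+k/2}$ colourings, then picks one cross-edge, colours the second-half endpoint $g_j$ freely ($N$ choices), constrains $g_{j+1}$ via condition~(\ref{AequivalenzrelationBedingung2}) alone ($\leq B$ choices), and observes that at most $k/2-1$ further free choices remain in the second cycle. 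Your route---leaf-removal via Proposition~\ref{BlaetterEntfernungsProposition} down to an all-cross-pair core, followed by condition~(\ref{AequivalenzrelationBedingung1}) at a cross-pair---reaches the same $N^{k+1}$ bound and makes the bookkeeping of where each power of $N$ is spent more explicit.

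One caution: your preliminary reduction to \emph{non-crossing} pair-partitions on $\{1,\ldots,2k\}$ by ``rerunning Lemma~\ref{KreuzendePartitionenSindIrrelevant} verbatim'' is not well-posed as stated, since that lemma's proof uses the single-cycle wraparound and the very notion of crossing is ambiguous once two disjoint cycles are linked by cross-pairs. Fortunately you do not need this step: the split/mixed dichotomy already covers all pair-partitions with no singleton block, and your leaf-removal argument for the mixed bound applies regardless of crossings. The paper likewise never invokes non-crossingness in its variance proof. (A smaller point: the split contributions to $\mathbb{E}(Y_N^2)$ and to $(\mathbb{E} Y_N)^2$ cancel \emph{exactly}, not merely asymptotically, so no remainder needs to be controlled there.)
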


\begin{remark}
Before proving Theorem \ref{TheoremSchwacheKonvergenzinWkeitVollerFall} we remark the following asymptotic behaviour of $\# P^{-1}(\pi)$. For two sequences $(a_n)_{n\in\mathbb{N}}$ and $(b_n)_{n\in\mathbb{N}}$ we define

$$ a_n \sim b_n :\Longleftrightarrow \lim_{n\rightarrow\infty} \frac{a_n}{b_n} = \textnormal{const.} \ . $$

\noindent Consider again a partition $\pi:= \{ B_1, ..., B_{r} \}\in\mathcal{P}(\{ 1, ..., k \})$ and recall

\begin{eqnarray}\nonumber
\mathcal{E}^{(N)}_k(\pi) := \big\{ \Gamma:= \big((g_1,...,g_k),\pi,c \big) | \{ l,m \}\subset B_i\in\pi &\Leftrightarrow& \\\nonumber\big(c(g_m),c(g_{m+1})\big)&\sim& \big(c(g_l),c(g_{l+1})\big) \ , 
\\\nonumber  c: \{ g_1,...,g_k \} &\hookrightarrow& \{ 1,...,N \} , \ g_s\in\mathcal{G}  \big\} \ .
\end{eqnarray}

\noindent Lemma \ref{ObereAbschatzung} shows that

$$ \# P^{-1}(\pi) \sim N^{1+r} \ .  $$

\noindent Further the proof of Theorem \ref{TheoremMomentGleichSummeIntegrale} shows, that only non-crossing pair-partitions ($r=\frac{k}{2}$) contribute to sum (\ref{MomentSummeUeberE}). We will use this asymptotic behaviour for an upper estimate of the variance of $Y_N$.

\end{remark}

\begin{proof}(Theorem \ref{TheoremSchwacheKonvergenzinWkeitVollerFall})
Consider a subset $\pi\subset 2^{ \{ 1, ..., 2k \}}$. \emph{Define} 

$$ k+1:=1 \textnormal{ and } 2k+1:= k+1 \ . $$

\noindent As above we consider the set of paths, that is

\begin{eqnarray}\nonumber
\mathcal{E}^{(N)}_k(\pi) := \big\{ \Gamma:= \big((g_1,...,g_k,g_{k+1},...,g_{2k}),\pi,c \big) | \{ l,m \}\subset B_i\in\pi &\Leftrightarrow& \\\nonumber\big(c(g_m),c(g_{m+1})\big)&\sim& \big(c(g_l),c(g_{l+1})\big) \ , 
\\\nonumber  c: \{ g_1,...,g_{2k} \} &\hookrightarrow& \{ 1,...,N \} , \ g_s\in\mathcal{G}  \big\} \ .
\end{eqnarray}

\noindent For $\Gamma\in\mathcal{E}^{(N)}_k$ we define 

\begin{eqnarray}\nonumber
 X_N^{(1)}(\Gamma) &:=&\prod_{i=1}^k X_{c(g_i)c(g_{i+1})} \ , 
\\\nonumber X_N^{(2)}(\Gamma) &:=&\prod_{i=k+1}^{2k} X_{c(g_i)c(g_{i+1})} \ \textnormal{and}
\\\nonumber X_{N^2}(\Gamma) &:=&  X_N^{(1)}(\Gamma) \cdot  X_N^{(2)}(\Gamma) \ . 
\end{eqnarray}

\noindent We remark, that it is sufficient to consider $\pi$ as a partition since the definition of $\mathcal{E}_k^{(N)}$ gives an equivalence relation on $\{ 1, ..., 2k \}$.

Using these notations we can estimate the variance as

$$ \mathbb{V}(Y_N) \leq C_\alpha \cdot \frac{1}{N^{2+k}} \cdot \sum_{\pi\in\mathcal{P}(\{ 1, ..., 2k \})} \ \sum_{\Gamma\in\mathcal{E}_k^{(N)}} \mathbb{E}(X_{N^2}(\Gamma))-\mathbb{E}(X_N^{(1)})\cdot \mathbb{E}(X_N^{(2)}) \ , $$

\noindent with

$$ C_\alpha := \max_{\pi\in\mathcal{P}(\{1,...,2k\})} \{  |\alpha(\Gamma)| \ | \ \Gamma\in \mathcal{E}_k^{(N)} \} \ , $$ 

\noindent where $\alpha(\Gamma)$ is defined as in (\ref{MomentSummeUeberE}). This constant exists since $\alpha$ is bounded on $[0;1]$.

\noindent We next define the set of common edges for $X_N^{(1)}$ and $X_N^{(2)}$. For a given partition $\pi\in\mathcal{P}(\{ 1, ..., 2k \})$, an element $\Gamma\in\mathcal{E}_k^{(N)}$ and indices $1\leq i \leq k$ and $k+1\leq j \leq 2k$ we set


$$ \mathcal{K}_\Gamma := \{ \big( (g_i, g_{i+1}), (g_j, g_{j+1})  \big) \ | \ (c(g_i), c(g_{i+1}))\sim (c(g_j), c(g_{j+1})) \} \ .  $$

\begin{description}

	\item[Case (i):] $\mathcal{K}_\Gamma=\emptyset$. In this case, the random variables $X_N^{(1)}(\Gamma)$ and $X_N^{(2)}(\Gamma)$ are independent, which means that 
	
	$$ \mathbb{E}(X_{N^2}(\Gamma))= \mathbb{E}(X_N^{(1)}(\Gamma)) \cdot \mathbb{E}(X_N^{(2)}(\Gamma)) \ .   $$

\noindent Therefore the variance is equal to zero in this case.

	\item[Case (ii):] $\mathcal{K}_\Gamma\neq\emptyset$. As mentioned in the above remark there are at most $N^{1+k/2}$ possibilities to colour the first $k$ nodes, namely $g_1,...,g_k$. This is because there are no more than $k/2$ blocks which include sets of the form $\{ r,s \}$ with $r,s\leq k$ as the proof of Theorem \ref{TheoremMomentGleichSummeIntegrale} shows. For $i\leq k$ and $j\geq k+1$ we can choose a common edge 
	
	$$(c(g_i), c(g_{i+1})) \sim (c(g_j), c(g_{j+1})) \ .  $$

\noindent Because $g_i$ and $g_{i+1}$ have already been coloured we can freely colour $g_j$ which gives at most $N$ choices. Then the colour of $g_{j+1}$ is constrained by condition (\ref{AequivalenzrelationBedingung2}) and there are at most $B$ choices. 

\noindent The nodes $g_j$ and $g_{j+1}$ are now coloured. Using the above argument we can \emph{freely} colour $k/2-1$ of the remaining nodes in the set

$$ \{ g_{k+1}, ..., g_{2k} \} \backslash \{ g_j, g_{j+1} \} $$

\noindent in most $N$ ways. The reason is that there are again at most $k/2-1$ \emph{non-mixing} blocks left. These are blocks of the form $\{ r,s \}$ with $r,s\leq k$ or $r,s\geq k+1$. The other indices are again constrained by condition (\ref{AequivalenzrelationBedingung2}). This shows that

$$ \# P^{-1}(\pi) \sim N^{1+k/2} \cdot N \cdot B \cdot N^{k/2-1} \sim N^{k+1} \ . $$ 

\noindent It follows that there is a constant $c>0$ with

$$ \mathbb{V}(Y_N) \leq c \cdot \frac{1}{N^{2+k}} \cdot N^{k+1} \sim N^{-1} \overset{N\rightarrow\infty}\longrightarrow 0 \ . $$
\end{description}
\end{proof}

\begin{corollary}\label{KorollarSchwacheKonvInWkeitOvonN}
Assume all requirements from Theorem \ref{TheoremKleinOvonN} for the symmetric ensemble

$$ A^{(N)}:=\frac{1}{\sqrt{2b_N}} \left( \chi_{[0;b_N]} \big(|i-j| \big) \cdot X^{(N)}_{ij} \right)_{(1\leq i,j \leq N)} \ .  $$

\noindent For the above defined random variable

$$ Y_N^{(k)} := \frac{1}{N} \cdot  \textnormal{tr }(A^{(N)})^k \ . $$

\noindent we have

$$ \lim_{N\rightarrow\infty} \mathbb{V}(Y_N^{(k)}) = 0 \ \forall \ k \ .  $$ 

\end{corollary}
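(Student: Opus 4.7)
The plan is to mirror the proof of Theorem \ref{TheoremSchwacheKonvergenzinWkeitVollerFall}, replacing at every counting step the dimension $N$ by the band width $2b_N$ wherever the band constraint restricts a free choice. Writing
\begin{eqnarray}\nonumber
\mathbb{V}(Y_N^{(k)}) = \mathbb{E}\bigl((Y_N^{(k)})^2\bigr) - \mathbb{E}(Y_N^{(k)})^2 \ ,
\end{eqnarray}
I would expand both terms as sums over paths $\Gamma = ((g_1, \ldots, g_{2k}), \pi, c)$, where $\pi$ is a partition of $\{1, \ldots, 2k\}$, the cyclic identifications $k+1 := 1$ and $2k+1 := k+1$ encode the two independent traces, and the band constraint $|c(g_m) - c(g_{m+1})| \leq b_N$ is inherited from $\mathcal{F}_k^{(N)}$. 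Using the notation $X_N^{(1)}, X_N^{(2)}, X_{N^2}, \mathcal{K}_\Gamma$ from the proof of Theorem \ref{TheoremSchwacheKonvergenzinWkeitVollerFall}, together with the boundedness of all moments of the $X^{(N)}_{ij}$, this yields
\begin{eqnarray}\nonumber
\mathbb{V}(Y_N^{(k)}) \leq \frac{C_k}{N^2 \cdot (2b_N)^k} \sum_{\pi \in \mathcal{P}(\{1, \ldots, 2k\})} \sum_{\Gamma} \bigl| \mathbb{E}(X_{N^2}(\Gamma)) - \mathbb{E}(X_N^{(1)}(\Gamma)) \, \mathbb{E}(X_N^{(2)}(\Gamma)) \bigr| \ .
\end{eqnarray}

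The case $\mathcal{K}_\Gamma = \emptyset$ contributes zero, because then the two halves involve disjoint equivalence classes, so $X_N^{(1)}(\Gamma)$ and $X_N^{(2)}(\Gamma)$ are independent and the bracket vanishes. It therefore suffices to count those $\Gamma$ with at least one mixing edge and show
\begin{eqnarray}\nonumber
\#\{\Gamma \mid \mathcal{K}_\Gamma \neq \emptyset\} \leq c \cdot N \cdot (2b_N)^k \ .
\end{eqnarray}
I would colour the first half $(g_1, \ldots, g_k)$ following the argument from Theorem \ref{TheoremKleinOvonN}: $g_1$ gets at most $N$ values, while each subsequent node is either free inside the band (at most $2b_N$ values) or pinned by the partition via (\ref{BedingungO2}) (at most $B$ values); since at most $k/2$ blocks can live inside $\{1, \ldots, k\}$, this gives at most $c_1 \cdot N \cdot (2b_N)^{k/2}$ admissible colourings. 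Then, fixing a mixing edge $\{i, j\}$ with $i \leq k < j$, the band constraint allows $g_j$ at most $2b_N$ values, (\ref{BedingungO2}) pins $g_{j+1}$ to at most $B$ values, and each of the remaining $k/2 - 1$ free nodes of the second half contributes a further factor $2b_N$. Multiplying these factors produces the claimed bound, so
\begin{eqnarray}\nonumber
\mathbb{V}(Y_N^{(k)}) \leq \frac{c'}{N} \overset{N \to \infty}\longrightarrow 0 \ .
\end{eqnarray}

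The delicate point is the first-half count when $\pi$ contains blocks straddling the boundary $\{k, k+1\}$: the restriction $\pi|_{\{1, \ldots, k\}}$ is then not a pair-partition and may contain singletons. Either such a singleton sits in a mixing block of $\pi$, in which case one free node in the first half is traded for one pinned node in the second half and the product $N \cdot (2b_N)^k$ is preserved; or it remains a true singleton of $\pi$, in which case the corresponding mean-zero random variable appears exactly once and the full expectation vanishes by the argument of Proposition \ref{KorollarNurPaarPartitionenBeitrag}. Carrying this bookkeeping through case by case is the most error-prone part of the proof, but it does not change the asymptotic order of the count. Corollary \ref{KorollarSchwacheKonvInWkeitOvonN} then follows in the same way as Theorem \ref{TheoremSchwacheKonvergenzinWkeitVollerFall}, via Chebyshev's inequality applied to $Y_N^{(k)}$.
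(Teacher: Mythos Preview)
Your approach is essentially identical to the paper's: the same cyclic convention $k+1:=1$, $2k+1:=k+1$, the same case split on $\mathcal{K}_\Gamma$, and the same two-stage colouring of the first half followed by the second half via a chosen mixing edge. Your extra paragraph on straddling blocks is a welcome clarification of a point the paper leaves implicit.

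There is, however, one small miscount. When you colour $g_j$ in the second half you claim ``the band constraint allows $g_j$ at most $2b_N$ values''. At that moment neither $g_{j-1}$ nor $g_{j+1}$ has been coloured, so no band inequality of the form $|c(g_{j-1})-c(g_j)|\le b_N$ or $|c(g_j)-c(g_{j+1})|\le b_N$ is yet active, and the equivalence $(c(g_i),c(g_{i+1}))\sim(c(g_j),c(g_{j+1}))$ together with condition~(\ref{BedingungO2}) only constrains $g_{j+1}$ once $g_j$ is fixed. Hence $g_j$ genuinely has up to $N$ admissible colours, exactly as in the paper. With this correction the count becomes
\[
N\cdot(2b_N)^{k/2}\;\cdot\; N\cdot B\cdot(2b_N)^{k/2-1}\;=\;c\cdot N^2\cdot(2b_N)^{k-1},
\]
so that $\mathbb{V}(Y_N^{(k)})\le c/(2b_N)\to 0$ rather than your claimed $c'/N$. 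The conclusion is unaffected since $b_N\to\infty$, but your stated rate is stronger than what your argument actually delivers.
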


\begin{proof}
As above we define 

$$ k+1:=1 \textnormal{ and } 2k+1:= k+1 \ . $$

\noindent We then consider the set $\mathcal{F}_k$ of relevant cycles, that is

$$ \mathcal{F}^{(N)}_k(\pi) = \big\{ \Gamma=\big((g_1,...,g_k,...,g_{2k}),\pi,c\big)\in \mathcal{E}^{(N)}_k(\pi) | \ |c(g_m)-c(g_{m+1})| \leq b_N \ \forall \ 1\leq m \leq 2k  \big\} \ .  $$

\noindent Again, $\mathcal{E}^{(N)}_k$ is the set of cycles defined in Theorem \ref{TheoremSchwacheKonvergenzinWkeitVollerFall}. The proofs of Theorem \ref{TheoremKleinOvonN} and Theorem \ref{TheoremSchwacheKonvergenzinWkeitVollerFall} show, that we can compute and estimate the variance as

\begin{eqnarray}\nonumber 
\mathbb{V}(Y_N) &=& \frac{1}{N^{2}\cdot (2b_N)^k} \cdot \sum_{\pi\in\mathcal{P}(\{ 1, ..., 2k \})} \ \sum_{\Gamma\in\mathcal{F}_k^{(N)}} \mathbb{E}(X_{N^2}(\Gamma))-\mathbb{E}(X_N^{(1)})\cdot \mathbb{E}(X_N^{(2)})
\\\nonumber &\leq&  c_k \cdot \frac{1}{N^{2}\cdot (2b_N)^k} \cdot \#\mathcal{F}_k^{(N)} \ , 
\end{eqnarray}

\noindent with some constant $c_k$. This constant exists since all moments of the random variables are bounded. For a given partition $\pi\in\mathcal{P}(\{ 1, ..., 2k \})$, an element $\Gamma\in\mathcal{F}_k^{(N)}$ and indices $1\leq i \leq k$ and $k+1\leq j \leq 2k$ we again define the set of common edges for $X_N^{(1)}$ and $X_N^{(2)}$, that is

$$ \mathcal{K}_\Gamma := \{ \big( (g_i, g_{i+1}), (g_j, g_{j+1})  \big) \ | \ (c(g_i), c(g_{i+1}))\sim (c(g_j), c(g_{j+1})) \} \ .  $$

\begin{description}

	\item[Case (i):] $\mathcal{K}_\Gamma=\emptyset$. As above, the random variables $X_N^{(1)}(\Gamma)$ and $X_N^{(2)}(\Gamma)$ are independent, which means that the variance is equal to zero.

	\item[Case (ii):] $\mathcal{K}_\Gamma\neq\emptyset$. The proof of Theorem \ref{TheoremKleinOvonN} shows, that there are most $N\cdot (2b_N)^{k/2}$ possibilities to colour the first $k$ nodes, namely $g_1,...,g_k$. Again, this is because there are no more than $k/2$ \emph{non-mixing} blocks which include sets of the form $\{ r,s \}$ with $r,s\leq k$. For $i\leq k$ and $j\geq k+1$ we again choose a common edge 
	
	$$(c(g_i), c(g_{i+1})) \sim (c(g_j), c(g_{j+1})) \ .  $$
	
\noindent Because $g_i$ and $g_{i+1}$ have already been coloured we can freely colour $g_j$ which again gives most $N$ choices. Then the colour of $g_{j+1}$ is constrained by condition (\ref{AequivalenzrelationBedingung2}) and there are most $B$ choices. 

\noindent The nodes $g_j$ and $g_{j+1}$ are now coloured. The proofs of Theorems \ref{TheoremKleinOvonN} and \ref{TheoremSchwacheKonvergenzinWkeitVollerFall} show, that there are most $(2b_N)^{k/2-1}$ ways to \emph{freely} colour $k/2-1$ nodes in 

$$ \{ g_{k+1}, ..., g_{2k} \} \backslash \{ g_j, g_{j+1} \} \ . $$

\noindent This is because of most $k/2-1$ \emph{non-mixing} blocks that contain sets of the form $\{ r,s \}$ with $r,s> k$. The other colours are constrained by condition (\ref{BedingungO2}). This shows that

$$ \# P^{-1}(\pi) \sim N\cdot (2b_N)^{k/2} \cdot N \cdot B \cdot (2b_N)^{k/2-1} \sim N^2 \cdot (2b_N)^{k-1}  \ . $$ 

\noindent It follows that there is a constant $c>0$ with

$$ \mathbb{V}(Y_N) \leq c \cdot \frac{1}{N^{2}\cdot (2b_N)^{k}} \cdot N^{2}\cdot (2b_N)^{k-1} \sim \frac{1}{2b_N} \overset{N\rightarrow\infty}\longrightarrow 0 \ . $$
\end{description}

\end{proof}

\newpage

\addcontentsline{toc}{section}{Bibliography}
\bibliographystyle{alpha}

\newpage\thispagestyle{empty}

{\Huge Eidesstattliche Erkl\"arung}

\vskip 3cm

\noindent Ich versichere, dass ich die vorliegende Dissertation selbst\"andig verfasst, andere als die angegebenen Quellen und Hilfsmittel nicht benutzt und mich auch sonst keiner unerlaubtem Hilfe bedient habe.

\vskip 6cm

\noindent Hagen, Februar 2016

\noindent Riccardo Catalano

\end{document}